\newtheorem{definition}{Definition}
\newtheorem{theorem}{Theorem}
\newtheorem{lemma}{Lemma}
\newenvironment{rlemma}[2][LEMMA]{%
#1 \ref{#2}. \begin{itshape}}{\end{itshape} 
}
\newenvironment{rtheorem}[2][THEOREM]{%
#1 \ref{#2}. \begin{itshape}}{\end{itshape} 
}
\newtheorem{assumption}{Assumption}
\newcommand{\xbi}{\overline{x}_i}
\newcommand{\zbi}{\overline{z}_i}
\newcommand{\vis}{v_i^*}
\newcommand{\ep}{\epsilon}
\newcommand{\g}{\gamma}
\newcommand{\la}{\leftarrow}
\newcommand{\argmax}{\mathrm{argmax}}
\newcommand{\hide}[1]{}
     \newcommand{\qed}{\nobreak \ifvmode \relax \else
     \ifdim\lastskip<1.5em \hskip-\lastskip
     \hskip1.5em plus0em minus0.5em \fi \nobreak
     \vrule height0.75em width0.5em depth0.25em\fi}
\title{\textbf{Tatonnement in Ongoing Markets of Complementary Goods}}
\author{Yun Kuen Cheung \thanks{Courant Institute, New York University}
\and
Richard Cole \thanks{Courant Institute, New York University}
\and
Ashish Rastogi \thanks{Goldman Sachs \& Co., New York}
}
\begin{document}

\maketitle

\begin{abstract}\setlength{\parskip}{.1in}\setlength{\parindent}{0in}

\noindent This paper continues the study,
initiated by Cole and Fleischer in~\cite{CF2008},
of the behavior of a tatonnement price update rule in Ongoing Fisher Markets.
The prior work showed fast convergence toward an equilibrium when
the goods satisfied the weak gross substitutes
property and had bounded demand and income elasticities.

The current work shows that fast convergence also
occurs for the following types of markets:
\begin{itemize}
\item
All pairs of goods are complements to each other, and
\item
the demand and income elasticities are suitably bounded.
\end{itemize}
In particular, these conditions hold when all buyers in
the market are equipped with CES utilities,
where all the parameters $\rho$, one per buyer, satisfy $-1 < \rho \le 0$.

\smallskip

In addition,
we extend the above
result to markets in which a mixture of complements and substitutes
occur.
This includes characterizing a class of nested CES utilities for which
fast convergence holds.

An interesting technical contribution, which may be of independent interest,
is an amortized analysis for handling asynchronous events in settings in
which there are a mix of continuous changes and discrete events.
\end{abstract}

\section{Introduction}\label{sec:intro}

This paper continues the investigation, begun in~\cite{CF2008,ashish-thesis} 
of when a tatonnement-style price update in a dynamic
market setting could lead to fast convergent behavior.
This paper shows that there is a class of markets with complementary
goods which enjoy fast convergence; prior results applied only to
goods which are substitutes.

The impetus for this work comes from the following question: why might
well-functioning markets be able to stay at or near equilibrium
prices?
This raises two issues: what are
plausible price adjustment mechanisms and in what types of markets
are they effective?

This question was considered in~\cite{Walras1874}, which
suggested that prices adjust by tatonnement: upward if there is too
much demand and downward if too little. Since
then, the study of market equilibria, their existence, stability,
and their computation has received much attention in economics,
operations research, and most recently in computer science.
A fairly recent account of the
classic perspective in economics is given in~\cite{mckenzie}.
The recent activity in computer science
has led to a considerable number of polynomial time algorithms
for finding approximate and exact equilibria in a variety of markets
with divisible
goods; we cite a selection of these
works~\cite{CodenottiMV05,leontief,DevanurV04,DevanurPSV02,DevKan08,GargKapoor04,JainVaz07,Orlin10,vazirani10,VazYan10,Zhang2010}.
However, these algorithms do not seek to, and
do not appear to provide methods that might plausibly explain these
markets' behavior.

We argue here for the relevance of this question from a computer
science perspective. Much justification for looking at the market
problem in computer science stems from the following argument: If
economic models and statements about equilibrium and convergence are
to make sense as being realizable in economies, then they should be
concepts that are computationally tractable.
Our viewpoint is that
it is not enough to show that the problems are computationally
tractable; it is also necessary to show that they are tractable in a
model that might capture how a market works.
It seems implausible that markets with many interacting players
(buyers, sellers, traders) would perform overt global
computations, using global information.

Recently, a different perspective was put forward in~\cite{EchGolWie11},
which argues that
the fundamental question is whether computationally tractable instances of the model
can fit rational data sets.
But at this point, to the best of our knowledge, there are no results for
the market problem.

It has long been recognized that the tatonnement price adjustment
model is not fully realistic: for example, \cite{fisher72}
states:
``such a model of price adjustment $\cdots$ describes nobody's actual behavior.''
Nonetheless, there has been a continued interest in the
plausibility of tatonnement, and indeed its predictive accuracy
in a non-equilibrium trade setting has been shown in some experiments~\cite{hirota05}.

Plausibly, in many consumer markets buyers are myopic:
based on the current prices, goods are assessed on a take
it or leave it basis.
It seems natural that this would lead to out of equilibrium trade.
This is the type of setting which was studied in~\cite{CF2008} and
in which we will address our main question:
under what conditions can tatonnement style price updates lead
to convergence?

\subsection{The Market Problems}
\label{sec:problem}

{\bf The One-Time Fisher Market}\footnote{The market we describe here is
often referred to as the
Fisher market.  We use a different term because we
want to distinguish it from the Ongoing Fisher Market model described below.}
A market comprises a set of goods
$G= \{G_1, G_2, \cdots, G_n\}$, and two sets of agents,
buyers $B= \{B_1, B_2, \cdots, B_m\}$, and sellers $S$.
The sellers bring the goods $G$ to market and the buyers bring money with which to buy them.
The trade is
driven by a collection of prices $p_i$ for good $G_i$, $1\le i\le n$.
For simplicity, we assume that there is a distinct seller for each good;
further it suffices to have one seller per good.
The seller of $G_i$ brings a supply $w_i$ of this good to market.
Each seller seeks to sell its goods for money at the prices $p_i$.

Each buyer $B_\ell$ comes to market with money $b_\ell$;
$B_\ell$ has a utility function $u_\ell(x_{1\ell},\cdots,x_{n\ell})$
expressing its preferences: if $B_\ell$ prefers a basket with $x_{i\ell}$
units to the basket with $y_{i\ell}$ units, for $1\le i\le n$,
then $u_\ell(x_{1\ell},\cdots,x_{n\ell})> u_\ell(y_{1\ell},\cdots,y_{n\ell})$.
Each buyer $B_\ell$ intends to buy goods costing at most $b_\ell$ so
as to achieve a personal optimal combination (basket) of goods.

Prices ${p}=(p_1,p_2,\cdots,p_n)$ are said to provide an \emph{equilibrium}
if, in addition, the demand for each good is bounded by the supply:
$\sum_{\ell=1}^m x_{i\ell} \le  w_i$, and if $p_i>0$ then $\sum_{\ell=1}^m x_{i\ell} = w_i$.

The market problem is
to find equilibrium prices.\footnote{Equilibria exist under quite
mild conditions (see \cite{MWG95} \S 17.C, for example).}
The symbol$\ ^*$ marking a variable will be used to denote the value of the
variable at equilibrium.

\medskip

The Fisher market is a special case of the more general
\emph{Exchange} or \emph{Arrow-Debreu} market.

While we define the market in terms of a set of buyers $B$, all that matters for
our algorithms is the aggregate demand these buyers generate, so we will
tend to focus on properties of the aggregate demand rather than properties
of individual buyers' demands.

\smallskip \noindent {\bf Standard notation}~
$x_i=\sum_l x_{il}$
is the demand for good $i$, and $z_i=x_i-w_i$ is the excess demand
for good $i$ (which can be positive or negative).
$s_i = p_i x_i$ is the total spending on good $i$ by all buyers.
Note that while $w$ is part of the specification of the market, $x$
and $z$ are functions of the vector of prices as determined by individual buyers maximizing their
utility functions given their available money.
We will assume that $x_i$ is a function of the prices $p$, that is a collection of prices
induce unique demands for each good.

\smallskip

In order to have a realistic setting for a price adjustment algorithm,
it would appear that out-of-equilibrium trade must be allowed,
so as to generate the demand imbalances that then induce price adjustments.
But then there needs to be a way to handle
excess supply and demand. To this end, we
suppose that for each good there is a \emph{warehouse} which can
meet excess demand and store excess supply.   Each seller has a warehouse of
finite capacity to enable it to cope with fluctuations in demand. It
changes prices as needed to ensure its warehouse neither
overfills nor runs out of goods.

\smallskip\noindent
{\bf The Ongoing Fisher Market}~\cite{CF2008}~
The market consists of a set $G$ of $n$ goods and a set $B$ of $m$
buyers.
The seller of good $i$, called $S_i$, has a warehouse of capacity $\chi_i$.
As before, each buyer $B_{\ell}$ has a utility
function $u_\ell(x_{1\ell},\cdots,x_{n\ell})$ expressing its
preferences.
The market repeats over an unbounded number of time intervals called
\emph{days}. Each day, each seller $S_i$ receives
$w_i$ new units of good $i$, and each buyer $B_{\ell}$ is given $b_\ell$
money. Each day, $B_{\ell}$ selects a maximum utility basket
of goods $(x_{1\ell},\cdots,x_{n\ell})$ of cost at most $b_\ell$.
$S_i$, for each good $i$, provides the demanded goods $x_i = \sum_{\ell=1}^m
x_{i\ell}$. The resulting excess demand or surplus, $z_i = x_i - w_i$,
is taken from or added to the warehouse stock.

Given initial prices $p^\circ_i$, warehouse stocks $v^\circ_i$,
where $0<v^\circ_i<\chi_i$, $1\le i\le n$, and ideal warehouse stocks
$v^*_i$, $0<v^*_i<\chi_i$, the task is to repeatedly adjust prices so
as to converge to equilibrium prices with the warehouse stocks
converging to their ideal values; for simplicity, we suppose that
$v^*_i = \chi_i/2$.

We suppose that it is the sellers that are adjusting the prices of their goods.
In order to have progress, we require them to change prices at least once a day.
However, for the most part, we impose no upper bound on the frequency of price changes.
This entails measuring
demand on a finer scale than day units. Accordingly,
we assume that each buyer spends their money at a uniform
rate throughout the day. (Equivalently, this is saying that buyers
with collectively identical profiles occur throughout the day,
though really similar profiles suffice for our analysis.)~
If one supposes there is a limit to the granularity, this
imposes a limit on the frequency of price changes.

\smallskip\noindent {\bf Notation}~
We let $v_i$ denote the current
contents of warehouse $i$, and $v^e_i=v_i-v^*_i$ denote the
\emph{warehouse excess} (note that $v_i^e \in [-\chi_i/2, \chi_i/2]$).

\smallskip\noindent {\bf Market Properties}~ We recall from~\cite{CF2008}
that the goal of the ongoing market is to capture the
distributed nature of markets and the possibly limited knowledge
of individual price setters.
One important aspect is that the price updates for distinct goods
are allowed to occur independently and asynchronously.
We refer the reader to the prior work for a more extensive discussion.

\subsection{Our Contribution}

As it is not possible to devise a tatonnement-style price update for general markets
~\cite{papa-yann-10,SaariSimon78,Scarf60},
the goal, starting in~\cite{CF2008},
has been to devise plausible constraints that enable rapid convergence.
This entails devising (i) a reasonable model, (ii) a price update
algorithm, (iii) a measure of closeness to equilibrium, and then (iv) analyzing
the system to demonstrate fast convergence; (v) this also entails
identifying appropriate constraints on the market.
(i)--(iii) were addressed in~\cite{CF2008}, though there was an unsatisfactory element
to the price update rule when coming close to breaching a warehouse bound
(i.e.\ the warehouse becoming empty or full); this is fixed in the current work.

The constraints in~\cite{CF2008} were for markets of substitutes.
These constraints take the form of bounds on the elasticities of demand and wealth
(defined later).
Curiously, the best performance occurred at the boundary between substitutes and
complements (with the buyers having so-called Cobb-Douglas utilities).
Despite this, the result did not extend into the complements domain.
The present paper carries out such an extension,
handling markets in which a mixture of complements and substitutes
occur.
The markets for which we show convergence again have bounded elasticities
(the precise constraints are detailed later).
These markets include the following scenarios.
\begin{enumerate}
\item
\label{first-item}
All the goods are complements.
A particular instance of this setting occurs when each buyer has a
CES utility with
parameter $\rho$ satisfying $-1<\rho \le 0$ (defined later).
\item
\label{second-item}
(A generalization of~(\ref{first-item}))~
The goods are partitioned into groups. Each group comprises substitutes, while
the groups are complementary.
A particular instance of this setting occurs when each buyer has
a suitable 2-level nested
CES utility~\cite{Keller1976} (defined later).
\item
Each buyer has a suitable arbitrary depth nested CES utility.
\end{enumerate}
Overall, we believe this result significantly expands the class of markets
for which the rapid convergence of tatonnement is known to hold,
and thereby enhances the plausibility of tatonnement as a usable
price adjustment mechanism.

There are relatively few positive results for markets of complementary goods,
and to the best of our knowledge none for tatonnement algorithms.
\cite{CMPV2005} gave a polynomial time algorithm based on
convex programming to compute
equilibrium prices for an Exchange Market in which every agent has
a CES utitility with $\rho$ in the range $-1\le \rho \le 0$.
\cite{CodenottiV-ICALP04} gave a polynomial time algorithm for Fisher markets
with Leontief utilities, which was generalized
in~\cite{chen-teng08}, which considered hybrid linear-Leontief utility functions,
in which goods are grouped, and within a group the utilities are Leontief, and
the group utilities are combined linearly.

The economics literature has many results concerning tatonnement, but the positive
results largely concerned markets in which utilities satisfied weak gross substitutes,
i.e.\ the goods were substitutes.

Finally we discuss the amortized analysis technique we introduce to handle asynchronous events.
We use a potential function $\phi$ which satisfies two properties:
\begin{itemize}
\item
$\frac{d\phi}{dt} \leq -\Theta(\kappa)\phi$ for a suitable parameter $\kappa >0$ whenever there is
no event.
\item
$\phi$ is non-increasing when an event occurs (a price update in our application).
\end{itemize}
One can then conclude that $\phi(t) \le e^{-\Theta(\kappa) t}\phi(0)$, and so $\phi$ decreases by
at least a $1-\Theta(\kappa)$ factor daily (for $\kappa =O(1)$).

It is not clear how to craft a more standard amortized analysis, in which $\phi$ changes only
when an event occurs.
The difficulty we face is that we model the warehouse imbalances as changing continuously,
and it is not clear how to integrate the resulting cost with the gains
from the price update events if $\phi$ changes only discretely.

\subsection{Roadmap}

In Section~\ref{sec:prelim} we state the price update rules
and review the definitions of elasticity.
We state our main results in Section~\ref{sec:results},
Then in Section~\ref{sec:anal-outline} we provide an outline of the analysis for markets where all the
goods are complementary, illustrating this with the scenario in which
every buyer has a CES utility function.
Finally, in Section~\ref{sec:mixed},
we analyze the mixed complements and substitutes scenario, illustrating
it first with markets in which the buyers all have 2-level nested CES-like utilities,
and then expanding the result to arbitrary levels of nesting.
Some of the proofs are deferred to the appendix.

\section{Preliminaries}\label{sec:prelim}

We review the price update rule and the definitions of elasticities.
The basic price update rule for the one-time market, proposed in~\cite{CF2008},
is given by
\begin{equation}
\label{eq:tat-rule}
p_i \la p_i \left( 1 + \lambda\cdot\min\left\{1,\frac{x_i-w_i}{w_i}\right\} \right),
\end{equation}
where $0 < \lambda \leq 1$ is a suitable parameter whose value depends on the market
elasticities.

In the ongoing market, the excess demand is computed as the excess demand since the
previous price change at time $\tau_i$. Thus in Equation~\ref{eq:tat-rule}, $x_i$ is
replaced by the average demand since time $\tau_i$,
$\bar{x}_i[\tau_i,t] = \frac{1}{t-\tau_i} \int_{\tau_i}^t x_i(t)\,dt$,
where $t$ is the current time.
Recall we assumed that each seller adjusts the price of its good at least once each day,
so $t-\tau_i\leq 1$.

In addition, one needs to take account of the warehouse excess, with prices dropping
if there is too much stock in the warehouse, and increasing if too little.
To this end, we define the
\emph{target demand} $\widetilde{w}_i$ to be
$$\widetilde{w}_i := w_i + \kappa (v_i - v_i^*),$$
where $\kappa>0 $ is chosen to ensure that
$|\kappa (v_i - v_i^*)| \leq \delta w_i$,
for a suitable $\delta>0$.

Now, as in~\cite{CF2008}, we define the \emph{target excess demand} to be
$$\bar{z}_i := \bar{x}_i [\tau_i,t] - \widetilde{w}_i =
  \bar{x}_i [\tau_i,t] -w_i - \kappa(v_i-v_i^*).$$

As it takes time for the warehouse stock to adjust as a result of a price
change, it turns out that the price change needs to be proportional to the
time since the last price update (this is where the price update rule differs
from~\cite{CF2008}).
This yields the price update rule
\begin{equation}\label{eq:tat-rule-warehouse}
p_i \la p_i \left( 1 + \lambda\cdot\min\left\{1,\frac{\bar{z}_i}{w_i}\right\}\Delta t \right),
\end{equation}
where $\Delta t$ is the time since the last price update.

Implicitly, the price update rule is using a linear approximation
to the relationship between $p_i$ and $x_i$.
The analysis would be cleaner if the linear update were to $\log p_i$;
however, this seems a less natural update rule, and having a natural
rule is a key concern when seeking to argue tatonnement could be a real process.

Next, we review the definitions of income and price elasticity.

\begin{definition}
\label{def:income-elas}
The \emph{income elasticity} of the demand for good $i$
by a buyer with income (money) $b$ is given by $\frac{dx_i}{db}\left/ \frac{x_i}{b}\right.$.
We let $\g$ denote the least upper bound on the income elasticities over all buyers and goods.
\end{definition}
If all buyers are spending their budgets in full, then $\g \ge 1$.

\begin{definition}
\label{def:price-elas}
The \emph{price elasticity} of the demand for good $i$ is given by $-\frac{dx_i}{dp_i} \left/ \frac{x_i}{p_i}\right.$.
We let $\alpha$ denote the greatest lower bound on the price elasticities over all goods.
\end{definition}

In a market of complementary goods $0 \le \alpha \le 1$;
in the markets we consider, $\alpha > \gamma/2 \ge 1/2$. 

For the markets with mixed complements and substitutes
we need a generalized version of elasticity, which we call the
\emph{Adverse Market Elasticity}.
These are the extreme changes in demand that occur to one good, WLOG $G_1$,
when its price changes, and other prices also change but by no larger a
fraction than $p_1$.
For suppose that $p_1$ were reduced with the goal of increasing $x_1$.
But suppose that at the same time other prices may change by the same fractional
amount (either up or down).
How much can this undo the desired increase in $x_1$?
The answer is that in general it can more than undo it.
However, our proof approach 
depends on $x_1$
increasing in this scenario, which is why we introduce this notion of elasticity
and consider those markets in which it is sufficiently bounded from below.

\begin{definition}
\label{def:den-elas}
Define $\bar{P}$ to be the following set of prices:
$\left\{((1+\delta)p_1,q_2, \cdots, q_n)~|~ \mbox{for $i \ge 2$, }q_i \in [\frac{p_i}{1+\delta},(1+\delta)p_i]\right\}$.
The (low) \emph{Adverse Market Elasticity} for $G_1$ is defined to be
$$-\max_{\bar{p} \in \bar{P}} \lim_{\delta \rightarrow 0} \frac {x_1(\bar{p}) - x_1(p) } {\delta x_1}$$
We let $\beta$ be a lower bound on the Adverse Market Elasticity
over all goods and prices.
\end{definition}
It is not hard to see that for the case that all the goods are complements, $\beta \ge 2\alpha-\gamma$.

\section{Our Results}\label{sec:results}

\subsection{Markets with Complementary Goods}

The analysis of these markets introduces the techniques needed for the more general setting.

Our bounds will depend on the market parameters $\alpha$ and $\gamma$.
It is convenient to set $\beta = 2 \alpha - \gamma$; note that by assumption, $0 < \beta \le 1$.
In addition, our bounds will depend on the initial imbalance in the prices. To specify this we
define the notion of $f$-bounded prices.

\begin{definition}\label{def:f-bounded}
$$f_i(p) = \max\left( \frac{p_i}{p_i^*},\frac{p_i^*}{p_i} \right),$$
and
$$f(p) = \max_{1\leq i\leq n} f_i(p).$$
The prices are \emph{$f$-bounded} if $f(p) \leq f$.
\end{definition}
Clearly, $f(p)\geq 1$ and $f(p)=1$ if and only if $p=p^*$.
When there is no ambiguity, we use $f$ as a shorthand for $f(p)$.
We let $f_{\mbox{\tiny I}}$ denote the maximum value $f$ takes on
during the first day, which will also bound $f$ thereafter, as we will see.

It will turn out that we can assume $\chi_i/w_i$ are equal for all $i$;
we denote this ratio by $r$.

Our results will require $\lambda$ and $\kappa$ to obey the following conditions.

\begin{equation}\label{eqn:kappa-cond1}
\kappa \leq \frac{2}{r}\cdot\min\left\{ \frac{\beta}{4\gamma + \beta}, \frac{1}{2(8 + 4\gamma/\beta)}\ln\frac{1}{2(1-\alpha)}\right\}
\end{equation}
\begin{equation}\label{eqn:lambda-cond1}
\frac{24}{r} \leq \lambda \leq \min\left\{\frac{3}{7},\frac{3}{7}\ln\frac{1}{2(1-\alpha)},\sqrt{\frac{\kappa r}{32}}\right\}.
\end{equation}

We then show the following bound on the convergence rate.

\begin{theorem}
\label{th:comp-overall}
Suppose that $\beta>0$, the prices are $f$-bounded throughout the first day,
and in addition that Equations~\eqref{eqn:kappa-cond1}--\eqref{eqn:lambda-cond1} hold.
Let $M= \sum_j b_j$ be the daily supply of money to all the buyers.
Then the prices become $(1+\eta)$-bounded after
$O\left(\frac{1}{\lambda}\ln f + \frac{1}{\lambda \beta}\ln\frac{1}{\delta} + \frac{1}{\kappa} \log \frac{M}{\eta \min_i w_i p_i^*}\right)$ days.
\end{theorem}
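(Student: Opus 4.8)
The plan is to track a Lyapunov potential $\phi$ built from the price imbalances $\ln(p_i/p_i^{*})$ and the warehouse excesses $v_i^{e}=v_i-v_i^{*}$, and to show it contracts at a rate that steepens as the market nears equilibrium; the three summands in the bound correspond to three regimes of this contraction. First I would invoke the reduction noted before the statement to assume $\chi_i/w_i=r$ for all $i$, and record the invariants that hold throughout: because a price update changes no warehouse and, by $\lambda\le 3/7$ and the companion bounds in~\eqref{eqn:lambda-cond1}, never moves $\ln p_i$ past $\ln p_i^{*}$, the potential is non-increasing at every update, so $f$ never exceeds the value it reaches on day one, and~\eqref{eqn:kappa-cond1} keeps $|\kappa(v_i-v_i^{*})|\le\delta w_i$ at all times. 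A short lemma, using that $r$ is large relative to $1/\lambda$ (forced by~\eqref{eqn:lambda-cond1}) and that the price-convergence phases below are short, shows no warehouse ever comes within a constant factor of emptying or overflowing, so the dynamics are well defined throughout.

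\textbf{Price convergence (the $\frac{1}{\lambda}\ln f$ and $\frac{1}{\lambda\beta}\ln\frac{1}{\delta}$ terms).}
While $f\ge 2$, any good whose price is bounded away from $p_i^{*}$ has $|\bar z_i|/w_i$ of order $1$ with the ``right'' sign (using $\alpha$ and $\gamma$, together with $f$-boundedness, to rule out cancellation), so the $\min\{1,\bar z_i/w_i\}$ clamp is active and a day's worth of updates moves $\ln(p_i/p_i^{*})$ monotonically toward $0$ by an additive $\Theta(\lambda)$, while a good already near $p_i^{*}$ has $|\bar z_i|$ small and cannot drift far; hence $\ln f$ drops by $\Theta(\lambda)$ per day, reaching $f\le 2$ after $O(\tfrac1\lambda\ln f)$ days. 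Once $f\le 2$ the clamp is inactive and the update is in its linear regime, and here I would use the Adverse Market Elasticity of Definition~\ref{def:den-elas}: from the viewpoint of any good, its price is being corrected while the other prices move asynchronously by comparable fractional amounts, which is exactly that definition's scenario, and since $\beta=2\alpha-\gamma>0$ a day's updates shrink $\max_i|\ln(p_i/p_i^{*})|$ by a factor $1-\Theta(\lambda\beta)$ up to an additive $\Theta(\delta)$ error (because the updates drive $\bar x_i$ toward $\widetilde w_i=w_i+\kappa(v_i-v_i^{*})$, not toward $w_i$). Iterating, the prices become $(1+\Theta(\delta))$-bounded after a further $O(\tfrac1{\lambda\beta}\ln\tfrac1\delta)$ days. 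Both phases must be run as the continuous-time amortized argument sketched in the introduction, since updates on different goods interleave arbitrarily within a day and the warehouse contents change continuously.

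\textbf{Fine convergence via the warehouse (the $\frac{1}{\kappa}\log(\cdots)$ term).}
With the prices now $(1+\Theta(\delta))$-bounded I would maintain, as a within-a-day invariant, that they \emph{track} the warehouse target, i.e.\ $|\ln(p_i/p_i^{*})-\sigma_i|$ is small, where $\sigma_i$ (of order $\kappa v_i^{e}/w_i$) is the quasi-equilibrium offset at which $\bar z_i=0$; this is preserved because $\lambda\ge 24/r$ guarantees enough correction per day to keep up with the warehouse's drift of at most $\delta w_i$ per day. Given this invariant, $\bar z_i\approx 0$ forces $\dot v_i^{e}=-z_i\approx-\kappa v_i^{e}$, so the potential $\phi=\sum_i p_i^{*}w_i\bigl(v_i^{e}/w_i\bigr)^2$ (the precise weighting and constants being dictated by the elasticity bounds) satisfies $\tfrac{d\phi}{dt}\le-\Theta(\kappa)\phi$ between updates and is unchanged by updates; hence $\phi(t)\le e^{-\Theta(\kappa)t}\phi(0)$. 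Since $|v_i^{e}|\le\chi_i/2=rw_i/2$ and $\sum_i w_ip_i^{*}\le M$, we have $\phi(0)=O(rM)$, and once $\phi\le\Theta(\eta^2\min_i w_ip_i^{*})$ the tracking invariant makes every price $(1+\eta)$-bounded; this takes $O\!\bigl(\tfrac1\kappa\log\tfrac{M}{\eta\min_i w_ip_i^{*}}\bigr)$ days. Summing the lengths of the three regimes gives the theorem.

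\textbf{The main obstacle.}
The crux is the construction in the last regime: one must simultaneously (i) show the ``prices track the warehouse'' invariant is restored within each day for \emph{every} asynchronous interleaving of updates and every configuration of fresh versus stale prices, (ii) define $\phi$ with exactly the right relative weighting of price and warehouse terms (and possibly the shift by $\sigma_i$) so that the cross-terms produced by $\dot v_i^{e}=-z_i$ are dominated and $\tfrac{d\phi}{dt}\le-\Theta(\kappa)\phi$ holds pointwise, and (iii) check that a discrete price update can never increase $\phi$. Reconciling (i)--(iii) is precisely where the quantitative relations among $\lambda$, $\kappa$, $r$, $\alpha$, $\gamma$, $\beta$ and $\delta$ in~\eqref{eqn:kappa-cond1}--\eqref{eqn:lambda-cond1} are needed, and is where essentially all the work of the proof goes; the price-convergence regimes, by contrast, are fairly direct once the elasticity estimates are in hand.
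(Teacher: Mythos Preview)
Your Phase~1 outline is broadly right in spirit and matches the paper's: one shows $f$ contracts geometrically, first at rate $\Theta(\lambda)$ while $f^\beta\ge 2$ and then at rate $\Theta(\lambda\beta)$, until $f\le (1-2\delta)^{-1/\beta}$. (Two small inaccuracies: the threshold is $f^\beta=2$, not $f=2$; and your claim that an update ``never moves $\ln p_i$ past $\ln p_i^*$'' is false once the warehouse term matters---the update targets $\widetilde w_i$, not $w_i$---so it is not the correct reason $f$ is monotone. The paper instead shows directly that the worst-case new $p_i'$ over the full range $p_i\in[p_i^*/f,fp_i^*]$ stays inside a shrunken band, via monotonicity of the update map in $r=p_i f/p_i^*$.)

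Your Phase~2, however, is \emph{not} what the paper does, and as written it has a real gap. You propose a warehouse-only potential $\phi=\sum_i p_i^* w_i(v_i^e/w_i)^2$ together with a separate ``tracking invariant'' $|\ln(p_i/p_i^*)-\sigma_i|$ small. But you never quantify ``small,'' and the drift argument you give (warehouse moves by at most $\delta w_i$ per day, price corrects by $\Theta(\lambda)$ per day) only yields a tracking error of order $\Theta(\delta)$. That caps the price accuracy at $(1+\Theta(\delta))$-bounded, which is exactly where Phase~1 already left you; it does not give $(1+\eta)$-bounded for $\eta\ll\delta$. To push further you would need the tracking error itself to decay with $v_i^e$, which forces you to analyze a coupled two-variable system $(\epsilon_i,v_i^e)$ per good, with $\epsilon_i$ jumping at asynchronous update times and drifting between them as $\widetilde w_i$ moves. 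You flag this as the crux, but you do not actually carry it out, and it is not clear your potential sees enough of the state to close the loop.

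The paper sidesteps this coupling entirely by choosing a potential that already contains both error sources:
\[
\phi_i \;=\; p_i\Bigl[\operatorname{span}\{\bar x_i,x_i,\widetilde w_i\}\;-\;c_1\lambda(t-\tau_i)\,|\bar x_i-\widetilde w_i|\;+\;c_2\,|\widetilde w_i-w_i|\Bigr],
\]
with $\phi=\sum_i\phi_i$. The span term measures the current demand mismatch (hence the price error relative to the \emph{moving} target $\widetilde w_i$), the $|\widetilde w_i-w_i|=\kappa|v_i^e|$ term measures the warehouse excess, and the negative ``credit'' term $-c_1\lambda(t-\tau_i)|\bar x_i-\widetilde w_i|$ is what makes $\phi$ drift down continuously between updates. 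The paper then proves two clean facts: (i) between updates $\frac{d\phi_i}{dt}\le -\Theta(\kappa)\phi_i$ by a short case analysis on the ordering of $x_i,\bar x_i,\widetilde w_i$; (ii) at an update $\phi$ does not increase, which is where $\alpha>1/2$ (i.e.\ the spending transferred to complements is at most a $2(1-\alpha)<1$ fraction of $w_i|\Delta p_i|$) and the constraints tying $c_1,c_2,\delta,\lambda$ together are used. No separate tracking invariant is needed: once $\phi\le\eta\min_i w_ip_i^*$, both the span and $|\widetilde w_i-w_i|$ are $O(\eta w_i)$, so $x_i$ is within $O(\eta w_i)$ of $w_i$, and elasticity gives $p_i$ within a $(1+O(\eta))$ factor of $p_i^*$.

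In short: your two-regime price phase is fine, but for the fine-convergence phase you should replace the warehouse-squared potential plus tracking invariant by a single potential that mixes the instantaneous demand mismatch with the warehouse excess (and a credit term to handle the time since the last update). That is the missing idea.
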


We also bound the needed warehouse sizes.
We say that warehouse $i$ is \emph{safe} if $v_i \in [\frac 14 \chi_i, \frac 34 \chi_i]$.
We define $d(f) = \max_i x_i/w_i$ when the prices are $f$-bounded.
In a market of complementary goods, $d(f) \le f^{\gamma}$.

As we will see, the analysis of Theorem~\ref{th:comp-overall} proceeds in two phases.
We need to specify some parameters relative to Phase 1.
We define $v(f_{\mbox{\tiny I}})$ to be the total net change to $v_i$ during Phase 1.
As we will see,
$v(f_{\mbox{\tiny I}}) = O(\frac{w_i}{\lambda} d(f_{\mbox{\tiny I}}) + \frac{w_i}{\lambda\beta} d(2) \log \frac{\beta}{\delta})$.
We define $D(f)$ to be the duration of Phase 1 in days.
As we will see,
$ D(f) = O\left(\frac{1}{\lambda}\ln f + \frac{1}{\lambda \beta}\ln\frac{1}{\delta}\right)$.
We will show:

\begin{theorem}
\label{lem:good-wrhs}
Suppose that the ratios $\chi_i/w_i$ are all equal.
Suppose that the prices are always $f$-bounded.
Also suppose that each price is updated at least once a day.
Suppose further that initially the warehouses are
all safe.
Finally, suppose that Equations~\eqref{eqn:kappa-cond1}--\eqref{eqn:lambda-cond1} hold.
Then the warehouse stocks never either overflow
or run out of stock;
furthermore,
after
$D(f) + \frac {32}{\beta} + \frac{2}{\kappa}$
days the warehouses will be safe thereafter.
\end{theorem}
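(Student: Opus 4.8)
The plan is to track each warehouse through the flow identity $\frac{dv_i}{dt}=w_i-x_i(t)$: stock enters at the uniform rate $w_i$ per day and leaves at the instantaneous demand rate $x_i(t)$, so $\frac{dv_i^e}{dt}=-z_i(t)$. I would split the timeline into \emph{Phase 1}, the first $D(f)$ days (matching the two-phase structure of the analysis of Theorem~\ref{th:comp-overall}), and \emph{Phase 2}, everything after. In Phase 1 the only aim is that no $v_i$ reaches $0$ or $\chi_i$; in Phase 2 the aim is additionally to pull every $v_i$ back into $[\tfrac14\chi_i,\tfrac34\chi_i]$ and keep it there. The key quantitative inputs are the Phase-1 warehouse drift bound $v(f_{\mbox{\tiny I}})$ and the Phase-1 duration bound $D(f)$ established in the analysis of Theorem~\ref{th:comp-overall}.

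\emph{Phase 1.} Here I would bound the net change $\int_0^{D(f)}(w_i-x_i(t))\,dt$ to $v_i$. This follows from the geometric convergence of $p_i$ toward $p_i^*$ established in the analysis of Theorem~\ref{th:comp-overall}: while $p_i$ is within a factor $f$ of $p_i^*$ the instantaneous imbalance $|x_i-w_i|$ is at most $(d(f)-1)w_i$, and since $f$ shrinks geometrically at rate $\Theta(\lambda)$, and at rate $\Theta(\lambda\beta)$ once $f\le 2$ (by the adverse-elasticity bound), summing the per-day imbalances as a geometric series gives a total net change at most $v(f_{\mbox{\tiny I}})=O\!\left(\tfrac{w_i}{\lambda}d(f_{\mbox{\tiny I}})+\tfrac{w_i}{\lambda\beta}d(2)\log\tfrac{\beta}{\delta}\right)$. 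Using $\lambda\ge 24/r$ and taking the common ratio $r=\chi_i/w_i$ large enough, this is at most $\tfrac14\chi_i$; since each warehouse starts safe, i.e.\ with $v_i^e\in[-\tfrac14\chi_i,\tfrac14\chi_i]$, it ends Phase 1 with $|v_i^e|\le\tfrac12\chi_i$, and in particular never overflows or empties during Phase 1.

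\emph{Phase 2.} After $D(f)$ days the prices are $(1+\delta)$-bounded, so $\bar x_i$ lies within a $1+O(\delta)$ factor of $w_i$ and hence $\bar z_i=\bar x_i-w_i-\kappa v_i^e\approx-\kappa v_i^e$. Thus update~\eqref{eq:tat-rule-warehouse} moves $p_i$ so as to create a demand imbalance whose sign shrinks $|v_i^e|$, while the constraint $|\kappa v_i^e|\le\delta w_i$ keeps that move small. A Lyapunov argument on a suitable combination of $v_i^e$ and the (bounded) price deviation, using the same continuous-decay device that controls $\phi$, shows that after a transient of at most $\tfrac{32}{\beta}$ days, during which the price deviation settles into the warehouse-dominated regime, $|v_i^e|$ decays at rate $\Theta(\kappa)$. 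Starting from $|v_i^e|\le\tfrac12\chi_i$, within a further $\tfrac{2}{\kappa}$ days $|v_i^e|\le\tfrac14\chi_i$, so the warehouse is safe, and since the same contraction applies thereafter it stays safe. Together with Phase 1 this yields the claimed bound $D(f)+\tfrac{32}{\beta}+\tfrac{2}{\kappa}$ and the no-overflow conclusion.

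The hard part is the interaction between the continuously drifting warehouse and the discrete, possibly high-frequency, asynchronous price updates: one must rule out $v_i^e$ and the price deviation sustaining an oscillation that overshoots the safe band (or a warehouse wall) before damping sets in. This is exactly where the constraints~\eqref{eqn:kappa-cond1}--\eqref{eqn:lambda-cond1} enter, with $\kappa$ small relative to $1/r$ so that warehouse-driven price moves are gentle and $\lambda$ not too large so that updates do not overshoot, and where the continuous-decay potential, rather than a classical event-based amortized analysis, is needed to reconcile the integrated cost of the drift with the discrete gains at the price updates. Verifying that the accumulated constants actually fit inside $\tfrac14\chi_i$ and inside the stated day counts is the remaining delicate bookkeeping.
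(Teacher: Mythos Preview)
Your Phase~1 outline matches the paper: the cumulative drift is bounded by $v(f_{\mbox{\tiny I}})w_i$ (Lemma~\ref{lem:phase1-war-bdd}), and the warehouse is sized so that $\chi_i\ge 8\,v(f_{\mbox{\tiny I}})w_i$, so a safe start means the stock moves outward by at most one of the eight equal zones and ends Phase~1 no further out than the middle buffer.

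Your Phase~2 plan, however, has a real gap. You propose an unspecified Lyapunov function in $v_i^e$ and the price deviation and assert that after a $32/\beta$-day transient $|v_i^e|$ contracts at rate $\Theta(\kappa)$; but $|v_i^e|$ does not decay monotonically, you never say what the function is, and nothing you wrote bounds the outward excursion \emph{during} the transient, which is what prevents overflow in Phase~2. The paper does not use $\phi$ (or any potential) for the warehouse bound. Its mechanism is the identity behind Lemma~\ref{lem:war-too-empt-impr} and its mirror: an update by the factor $1+\mu\Delta t$ is tied to a stock change of exactly $[-\mu + \kappa(v_i^*-v_i)/w_i]\,w_i\Delta t$, so summing over consecutive updates and using $\prod_j(1+\mu_j\Delta t_j)\ge \exp\bigl(\sum_j(\mu_j-2\lambda^2)\Delta t_j\bigr)$ turns the Phase-2 price bound $p_i\in[e^{-8\delta/\beta}p_{i,1},\,e^{8\delta/\beta}p_{i,1}]$ into a bound on $\sum_j\mu_j\Delta t_j$, and hence directly into a bound on the stock change. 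Taking $a_1=\chi_i/(8w_i)$ (one zone), $\bar f=8\delta/\beta$, and $a_2=\chi_i/(4w_i)$ yields the return time $(\bar f+a_2)/(\kappa a_1)=32/\beta+2/\kappa$ and shows the outward excursion during the return is at most half a zone, so overflow is impossible. The constraint $\lambda^2\le \kappa r/32$ in~\eqref{eqn:lambda-cond1} is exactly the hypothesis $\kappa a_1\ge 4\lambda^2$ of Lemma~\ref{lem:war-too-empt-impr}; its role is to absorb the second-order error in $\log(1+\mu\Delta t)$, not, as you suggest, to prevent overshoot of individual updates. Without this price-multiplier--to--stock-change link, neither the explicit constants $32/\beta+2/\kappa$ nor the Phase-2 no-overflow guarantee are accessible from your sketch.
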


\subsubsection{Example Scenario: All buyers have CES Utilities with $-1 < \rho \le 0$}

We begin by reviewing the definition of CES utilities.
We focus on a single buyer $B_\ell$, and to simplify notation, we let
$(y_1,y_2, \cdots, y_n) = (x_{1\ell}, x_{2\ell} \cdots, x_{n\ell})$.
A CES utility has the form
$$u(y_1,y_2,\cdots,y_n) = \left(\sum_{i=1}^n a_i y_i^{\rho_\ell}\right)^{1/\rho_\ell}.$$
It is well known that when $\rho_\ell\ge 0$, all goods are substitutes,
and when $\rho_\ell\le 0$, all goods are complements. We will focus on the case $-1<\rho_\ell \le 0$.
It is also well known that with a budget constraint of $b$, the resulting demands are given by
$y_i = p_i^r b a_i^{-r}\left(\sum_{j=1}^n \frac{p_j^{r+1}}{a_j^r}\right)^{-1}$, where $r =1/(\rho_\ell-1)$.
A further calculation yields that
$\frac{\partial y_i / \partial p_i}{y_i / p_i} \le  r = -1/(1-\rho_\ell)$.
Let $\rho = \min_\ell \rho_\ell$.
Then $\frac{\partial x_i / \partial p_i}{x_i / p_i} \le -1/(1-\rho)$.
In addition, it is easy to show that for CES utilities, $\gamma = 1$.
Consequently, when $\rho > -1$, $\beta > 0$; it follows that the bounds
from Theorems~\ref{th:comp-overall} and~\ref{lem:good-wrhs} apply.

\subsection{Markets with Mixtures of Substitutes and Complements}\label{sect:result-mix}

To understand the constraints needed in this setting, we need to know that the
analysis for markets of complements, which we adapt to the current setting, proceeds
in two phases. Recall that the prices are $f$-bounded.
In Phase 1,  $(f-1)$ reduces multiplicatively each day.
Phase 1 ends when further such reductions can no longer be guaranteed.
In Phase 2, an amortized analysis shows that the misspending,
roughly $\sum_i |z_i| p_i + |\widetilde{w}_i - w_i| p_i$, decreases
multiplicatively each day.

Also, now that substitutes are present, we will need an
upper bound on the price elasticity (see Definition \ref{def:price-elas}),
as in~\cite{CF2008}.
We let $E\geq 1$ denote this upper bound.
For convergence we will need that
$\lambda = O(1/E)$.

Denote the spending on all goods which are substitutes of $G_1$ by $S_s$
and the spending on all goods which are complements of $G_1$ by $S_c$.
We need to introduce a further constraint.
The reason is that the amortized analysis depends on showing the misspending decreases.
However, the current constraints do not rule out the possibility that
when, say $p_1$ is increased, the spending decrease on $G_1$'s complements,
$|\Delta S_c|$, and the spending increase on $G_1$'s substitutes, $|\Delta S_s|$,
are both larger than the reduction in misspending on $G_1$.
It seems quite unnatural for this to occur.
We rule it out with the following assumption.

\begin{assumption}
\label{ass:sp-trans}
Suppose that $p_i$ changes by $\Delta p_i$.
Then there is a  constant $\alpha' < \frac 12$, such that
$|\Delta S_c| \leq \alpha' x_i |\Delta p_i|$.
\end{assumption}

We require that $\beta$, as defined in Definition \ref{def:den-elas}, satisfy $\beta > 0$.
Our results will require $\lambda$ and $\kappa$ to obey the following conditions.

\begin{equation}\label{eqn:kappa-cond2}
\kappa \leq \frac{2}{r}\cdot\min\left\{ \frac{\beta}{\beta + 4(2E-\beta)},\frac{(1-2\alpha ')\beta}{8\alpha '(2E-\beta) + 4\beta}\right\}
\end{equation}

\begin{equation}\label{eqn:lambda-cond2}
\frac{24}{r}\leq \lambda \leq \min\left\{\frac{1}{8E + 4\alpha ' - 6},\sqrt{\frac{\kappa r}{32}}\right\}
\end{equation}

\begin{theorem}
\label{th:multi-overall}
Suppose that $\beta>0$, the prices are $f$-bounded throughout the first day,
and Equations~\eqref{eqn:kappa-cond2}--\eqref{eqn:lambda-cond2} hold.
Let $M= \sum_j b_j$ be the daily supply of money to all the buyers.
Then the prices become
$(1+\eta)$-bounded after
$$O\left(\frac{1}{\lambda}\ln f
+ \frac{1}{\lambda \beta}\ln\frac{1}{\delta} + \frac{1}{\kappa} \log \frac{M}{\eta \min_i w_i p_i^*}\right)$$
days.
\end{theorem}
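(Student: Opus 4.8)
The plan is to mirror the two-phase structure used for Theorem~\ref{th:comp-overall}, pushing through the modifications forced by the presence of substitutes and by Assumption~\ref{ass:sp-trans}. First I would set up the same potential-theoretic framework: in Phase~1 track the quantity $f-1$ (equivalently $\max_i f_i(p) - 1$), and in Phase~2 switch to a misspending potential of the form $\phi = \sum_i \bigl(|\bar z_i| p_i + |\widetilde w_i - w_i| p_i\bigr)$ (up to the usual rescalings). The bulk of the work is to verify the two invariants of the amortized scheme described in the introduction: (a) between price-update events, $d\phi/dt \le -\Theta(\kappa)\phi$, which here rests on the continuous drift of the warehouse excesses $v_i^e$ toward zero at rate governed by $\kappa$; and (b) each price update is non-increasing in $\phi$. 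Granting these, the master argument $\phi(t)\le e^{-\Theta(\kappa)t}\phi(0)$ and the conversion of small misspending into $(1+\eta)$-boundedness of prices go through exactly as before, yielding the third term $\frac{1}{\kappa}\log\frac{M}{\eta \min_i w_i p_i^*}$ in the day count.

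For Phase~1 I would show that each day $f-1$ shrinks by a constant multiplicative factor, so that after $O\bigl(\frac1\lambda \ln f\bigr)$ days the prices are $2$-bounded, and then a finer analysis driven by the adverse-market-elasticity bound $\beta>0$ closes the remaining gap in $O\bigl(\frac{1}{\lambda\beta}\ln\frac1\delta\bigr)$ days; this is where Definition~\ref{def:den-elas} is used, since when $p_i$ is adjusted the other prices may move by a comparable fraction (a realistic worst case), and $\beta>0$ guarantees the demand for $G_i$ still moves the right way. The conditions~\eqref{eqn:lambda-cond2} on $\lambda$ (in particular $\lambda \le 1/(8E+4\alpha'-6)$) are exactly what is needed to keep the linearized price update from overshooting when the price elasticity can be as large as $E$; the lower bound $\lambda \ge 24/r$ together with~\eqref{eqn:kappa-cond2} ties $\kappa$ and $\lambda$ so that warehouse stocks stay in range and the daily progress is genuine.

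The key new obstacle, and the place where Assumption~\ref{ass:sp-trans} is essential, is step~(b): showing a price update cannot increase the Phase~2 potential. When $p_i$ increases, misspending on $G_i$ itself falls, but spending shifts onto $G_i$'s substitutes (magnitude $|\Delta S_s|$) and off $G_i$'s complements (magnitude $|\Delta S_c|$), each of which can in principle increase the misspending terms of those other goods. The price-elasticity upper bound $E$ controls $|\Delta S_s|$, and Assumption~\ref{ass:sp-trans} ($|\Delta S_c| \le \alpha' x_i |\Delta p_i|$ with $\alpha'<\tfrac12$) controls $|\Delta S_c|$; the heart of the proof is a careful accounting showing that the guaranteed decrease in misspending on $G_i$ (of order $x_i|\Delta p_i|$, coming from $\beta>0$ and the choice of $\lambda$) dominates the sum of these two adverse transfers. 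This is precisely where the constants in~\eqref{eqn:kappa-cond2} and~\eqref{eqn:lambda-cond2} — the factors $(1-2\alpha')$, $2E-\beta$, etc. — are calibrated, and I expect verifying this inequality, uniformly over which goods' updates are delayed and which are on time (the $\sumiontime$/$\sumjdelayed$ split), to be the most delicate part. Once that single-update inequality is established, summing over a day and combining with the continuous drift gives the daily multiplicative decrease of $\phi$, and the three-term day bound follows by adding the Phase~1 duration $D(f)=O\bigl(\frac1\lambda\ln f + \frac1{\lambda\beta}\ln\frac1\delta\bigr)$ to the Phase~2 bound.
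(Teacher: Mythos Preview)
Your high-level plan---two phases, with Phase~1 bounding the duration via the adverse-market-elasticity $\beta$ exactly as in Theorem~\ref{th:phase-1-comps}, and Phase~2 handled by an amortized potential argument---matches the paper, and your identification of Assumption~\ref{ass:sp-trans} as the ingredient that caps the adverse spending transfer $|\Delta S_c|$ is correct.

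The real gap is in your Phase~2 potential. The paper does \emph{not} use $\phi_i = p_i(|\bar z_i| + |\widetilde w_i - w_i|)$ up to rescaling; it uses
\[
\phi_i = p_i\Bigl[\operatorname{span}\{\bar x_i, x_i, \widetilde w_i\} - c_1\lambda(t-\tau_i)\,|\bar x_i - \widetilde w_i| + c_2|\widetilde w_i - w_i|\Bigr],
\]
and both the span term and the negative $(t-\tau_i)$-weighted term are essential, not cosmetic. With your potential, invariant~(a) fails: between updates $\bar x_i$ drifts toward the (fixed) instantaneous demand $x_i$, so whenever $x_i$ lies further from $\widetilde w_i$ than $\bar x_i$ does, $|\bar z_i|=|\bar x_i-\widetilde w_i|$ \emph{increases}, possibly at rate $\Theta(1/(t-\tau_i))$, which no $\kappa$-rate warehouse drift can absorb. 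The span term is designed precisely so that this configuration contributes $|x_i-\widetilde w_i|$, which is constant between updates; the residual decrease is then manufactured by the explicit $-c_1\lambda(t-\tau_i)|\bar x_i-\widetilde w_i|$ term, not by ``warehouse excesses drifting toward zero'' (indeed $|\widetilde w_i - w_i|$ can grow between updates). This is the content of Lemma~\ref{lem:PF-change-no-price-update}.

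For invariant~(b), your accounting is also off. The guaranteed decrease at an update of $p_i$ is not ``of order $x_i|\Delta p_i|$ coming from $\beta>0$''; $\beta$ enters Phase~2 only through the bound $f\le(1-2\delta)^{-1/\beta}$ that keeps $x_i$ close to $w_i$. The actual gain is structural: at an update the span collapses to $|x_i'-\widetilde w_i|$ and the negative term resets to zero, and a case analysis (Lemma~\ref{lem:change-PF} in the paper) shows the net change is bounded by either $(2\alpha'+\cdots-(1-\delta))w_i|\Delta p_i|$ or $((2\alpha''+\cdots)\lambda-1)p_i|\bar x_i-\widetilde w_i|$, depending on whether $x_i$ moves toward or away from $\widetilde w_i$. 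Here $\alpha''=\alpha'+2(E-1)$ bounds $|\Delta S_s|$ (so $E$ alone does not control it; Assumption~\ref{ass:sp-trans} is needed for both transfers). The constants in \eqref{eqn:kappa-cond2}--\eqref{eqn:lambda-cond2} are tuned to make both expressions nonpositive (Lemma~\ref{lem:ongoing-price-change-2}); no delayed/undelayed split is involved.
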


Theorem~\ref{lem:good-wrhs} with Equations~\eqref{eqn:kappa-cond2}--\eqref{eqn:lambda-cond2}
replacing Equations~\eqref{eqn:kappa-cond1}--\eqref{eqn:lambda-cond1} also continues to apply.
Here $d(f) \le f^{2E - \beta}$.

\subsubsection{Example Scenario: 2-Level Nested CES Type Utilities}\label{subsect:2-level-nces}

Keller \cite{Keller1976} proposed and analyzed \emph{nested CES-type utility functions},
which we use to provide an example of utility functions
yielding a mixture of substitutes and complements.
Goods are partitioned into different groups.
Two goods in the same group are substitutes,
while two goods in different groups are complements.

Again, we focus on the demands $(y_{1\ell}, y_{2\ell}, \cdots, y_{n\ell})$ of a single buyer $B_\ell$.
For each group $\overline{G}$, we define
$$u_{\overline{G},\ell} := \left(\sum_{i\in \overline{G}} a_{i\ell} ~ y_{i\ell}^{\rho_{\overline{G},\ell}}\right)^{1/\rho_{\overline{G},\ell}},$$ which is called a \emph{utility component}; $0 < \rho_{\overline{G},\ell} < 1$.
The overall utility function is given by
$$u_\ell := \left(\sum_{{\overline{G}}} a_{\overline{G},\ell} ~ u_{\overline{G},\ell}^{\rho_\ell} \right)^{1/{\rho_{\ell}}},$$
where $-1 < \rho_\ell<0$.
The bounds on $\rho_{\overline{G},\ell}$ and $\rho_{\ell}$ are needed to allow us to show convergence; Keller allowed arbitrary
values (no larger than 1).

We will show that $E=\max_{\overline{G},\ell} \frac{1}{1-\rho_{\overline{G},\ell}}$ and $\beta = \min_{\ell}\frac{2}{1-\rho_{\ell}}-1$.
The bounds from Theorems~\ref{th:multi-overall} and~\ref{lem:good-wrhs} will apply.

\subsubsection{Example Scenario: \boldmath{$N$}-Level Nested CES Type Utilities}\label{subsect:N-level-nces}

This result extends to arbitrary levels of nesting. A Nested CES Type Utility is best visualized as a utility tree.
A leaf represents a good, and an internal node represents a \emph{utility component}.
There is a value of $\rho$ associated with each internal node.
Each utility component is of the form $\left(\sum_{k=1}^m a_k u_k^\rho\right)^{1/\rho}$,
in which $u_k$ may be the quantity of one good $x_k$ or another utility component.

We focus on one particular good $i$.
Let $A_1,A_2,\cdots,A_N$ be the internal nodes along the path from good $i$ to the root of the utility tree,
and let $\rho_1,\rho_2,\cdots,\rho_N$ be the associated $\rho$ values. Let $\sigma_k = \frac{1}{1-\rho_k}$ for $1\leq k\leq N$.
Define $\beta_i = \sigma_1 - |\sigma_N-1| - \sum_{q=1}^{N-1} |\sigma_q - \sigma_{q+1}|$,
$E_i = \max\left\{1,\max_{1\leq k \leq N} \sigma_k\right\}$ and
$\alpha_i ' = (1-\lambda)^{-E} \left(|\sigma_N-1| + \sum_{q=1}^{N-1} |\sigma_q - \sigma_{q+1}|\right)$.
We set $\beta = \min_i \beta_i$, $E = \max_i E_i$ and $\alpha ' = \max_i \alpha_i '$.
Again, the bounds from Theorems~\ref{th:multi-overall} and~\ref{lem:good-wrhs} apply.

\subsection{Comparison to Prior Work}

\cite{CF2008} introduced the notion of ongoing markets and analyzed  a
class of ongoing Fisher markets satisfying WGS.
The current work extends this analysis to classes of ongoing Fisher markets
with respectively, only complementary goods, and with a mixture of complements and substitutes.
The present work also handles the warehouses in the ongoing model more realistically.

This entails a considerably changed analysis and some modest changes to the price update rule.
As in~\cite{CF2008}, the analysis proceeds in two phases.
The new analysis for Phase 1, broadly speaking, is similar to that in~\cite{CF2008},
though a new understanding was needed to extend it to the new markets.
The analysis for Phase 2 is completely new.
The analysis of the bounds on the warehouse sizes is also new.

A preliminary unrefereed report on these techniques, applied to
markets of substitutes, was given in~\cite{CFR10}; this manuscript
included other results too (on managing with approximate values of the demands,
and on extending the results to markets of indivisible goods).
The current paper subsumes the analysis techniques in~\cite{CFR10}.\footnote{Note
for the reviewers: This is the one refereed venue where these techniques are being
submitted for publication. We make this point because with an earlier submission of
this work, one referee appeared to consider~\cite{CFR10} to be prior work.}

\section{The Analysis for Complementary Goods}\label{sec:anal-outline}

The largest challenge in the analysis is to handle the effect of warehouses.
In~\cite{CF2008}, the price updates increased in frequency as the warehouse
limits (completely full or empty) were approached,
which ensured these limits were not breached.
It was still a non-trivial matter to demonstrate convergence.
In the present paper, the only constraint is that each price is updated
at least once every full day.
This seems more natural, but entails a different and new analysis.

The analysis partitions into two phases, the first one handling the
situation when at least some of the prices are far from equilibrium,
and for these prices, the warehouse excesses have a modest impact on
the updates.
This portion of the analysis is somewhat similar to the corresponding analysis
in~\cite{CF2008},
except that we manage to extend it to markets including complementary
goods.
In the second phase, the warehouse excesses can have a significant effect.
For this phase, we use a new and amortized analysis.
The imbalance being measured and reduced during Phase 2 is the \emph{misspending}
(roughly speaking, $\sum_i [p_i|x_i-w_i| + p_i |\widetilde{w}_i - w_i|]$).
It is only when prices are reasonably close to their equilibrium values
that we can show the misspending decreases, which is why two phases are needed.
Interestingly, in a market of substitutes, regardless of the prices,
the misspending is always decreasing, so here one could carry out the whole
analysis within Phase 2.

\paragraph{Phase 1}
In Phase 1, we show that each day
$(f-1)$ shrinks by a factor of at least $1-\Theta(\lambda\beta)$.

For simplicity, we begin by considering the one-time market.

Suppose that currently the prices are exactly $f$-bounded, and that there is
a good, WLOG good $G_1$, with price $p_1= p_1^*/f$.
We will choose the market properties to ensure that ${x}_1\ge f^{\beta}w_1$,
regardless of the prices of the other goods, so long as they are $f$-bounded.
This ensures that the price update for $p_1$ will be an increase, by a factor of
at least $1 + \lambda \min\{1, (f^{\beta}-1)\}) \doteq 1+ \mu$.

To demonstrate the lower bound on ${x}_1$, we identify the following scenario as the
one minimizing $x_1$: all the complements $G_i$ of $G_1$ have prices $fp^*_i$.

A symmetric observation applies when $p_1 = fp^*_1$, and then the price
decreases by a factor of at least
$1 - \lambda (1 - f^{-\beta}  ) \doteq 1 - \nu$.

We can show that the same market properties imply that after a day of price updates
every price will lie within the bounds
$[p^*(1+\mu)/f, fp^*(1-\nu)]$,
thereby ensuring a daily reduction of the term
$(f-1)$ by a factor of at least $1 - \Theta(\lambda\beta)$.

We use a similar argument for the ongoing market.
First, we observe that if the price updates
occurred simultaneously exactly once a day, then
exactly the same bounds would apply to $\bar{x}_i$, so the rate
of progress would be the same, aside the contribution of the warehouse
excess to the price update.
So long as this contribution
is small compared to $(f^{\beta} -1) w_1$ or to $(1-f^{-\beta} ) w_1$, say at most
half this value, then the price changes would still be by a factor of at least
$1 + \frac 12 \lambda \min\{1, (f^{\beta}-1)\})$ and
$1 - \frac 12 \lambda (1 - f^{-\beta} )$,
respectively.

To take account of the possible variability in price update frequency,
we demonstrate progress as follows:
we can show that if the prices have been $f$-bounded for a full day,
then after two more days have elapsed, the prices will have been $f'$-bounded
for a full day, for $(f'-1) = (1 -\Theta(\lambda\beta))(f-1)$.
The reason we look at the $f$-bound over the span of a day is that the price
updates are based on the average excess demand over a period of up to
one day.
A second issue we need to handle is that the price updates may have
a variable frequency; the only guarantee is that each price is
updated within one full day of its previous update.
The net effect is that it takes one day to guarantee that $f$ shrinks
and hence two days for the shrinkage to have lasted at least one full day.

It follows that Phase 1 lasts
$O(\frac{1}{\lambda\beta}\log[(f_{\mbox{\tiny I}}-1)/(f_{\mbox{\tiny II}}-1)])$ days,
where $f_{\mbox{\tiny I}}$ is the initial value of $f$,
and $f_{\mbox{\tiny II}}$ is its value at the start
of Phase 2.

We want the following conditions to hold in Phase 2:
$\bar{x}_i, x_i \le (2- \delta) w_i$ and $p_i \le 2 p^*_i$.
As we will see, choosing $f_{\mbox{\tiny II}} = \min\{(1 - 2\delta)^{-1/\beta}, (2-\delta)^{1/\gamma}\}$ suffices.
As it turns out,
the calculations simplify if we also enforce that $(1 - 2\delta)^{-1/\beta} \le  (2-\delta)^{1/\gamma}$.
If $2\delta/\beta \le \frac 12$,
then $1 + 2 \delta/\beta \le f_{\mbox{\tiny II}} \le 1 + 4 \delta/\beta \le 2$.

As already argued, the behavior of the ongoing and one-time markets are within
a constant factor of each other in Phase 1 (the ongoing market progresses
in cycles of two days rather than one day, and reduces $(f-1)$ by a factor
in which $\lambda$ is replaced by $\lambda/2$).
So in this phase we analyze just the one-time market.

First, we state several inequalities we use. They can be proved by simple arithmetic/calculus.

\begin{lemma}\label{lem:cal-result}
\begin{enumerate}[(a)]
\item If $0\leq\lambda\leq 1$, then $\frac{1}{1+\lambda} \leq 1-\frac{\lambda}{2}$.
\item If $0\leq\lambda\leq 1$ and $1\leq x\leq 2$, then $1-\lambda(1-1/x) \leq x^{-\lambda/(2\ln 2)}$.
\item If $0\leq\lambda\leq 1$ and $1\leq x\leq 2$, then $\frac{1}{1+\lambda(x-1)} \leq x^{-\lambda}$.
\end{enumerate}
\end{lemma}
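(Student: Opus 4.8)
The plan is to handle the three parts separately, each by an elementary manipulation; no deep idea is needed, though part~(b) takes slightly more care.

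For part~(a): since $1+\lambda>0$ and, for $\lambda\le1$, $1-\lambda/2\ge\tfrac12>0$, clearing denominators shows the claim is equivalent to $1\le(1+\lambda)\bigl(1-\tfrac{\lambda}{2}\bigr)=1+\tfrac{\lambda}{2}-\tfrac{\lambda^2}{2}$, i.e.\ to $\tfrac{\lambda}{2}(1-\lambda)\ge0$, which holds throughout $[0,1]$. For part~(c): I would first establish $x^{\lambda}\le1+\lambda(x-1)$, which is weighted AM--GM, $x^{\lambda}\cdot1^{1-\lambda}\le\lambda x+(1-\lambda)\cdot1$ (equivalently, $t\mapsto t^{\lambda}$ is concave for $\lambda\le1$ and so lies below its tangent line at $t=1$); this needs only $x>0$. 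Both sides being positive, taking reciprocals reverses the inequality and yields $\frac{1}{1+\lambda(x-1)}\le x^{-\lambda}$.

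For part~(b): since $1\le x\le2$ and $0\le\lambda\le1$ we have $u:=\lambda(1-1/x)\in[0,\tfrac12]$, so $1-u\in[\tfrac12,1]$ is positive and we may take logarithms. Using $\ln(1-u)\le-u$ for $u\in[0,1)$, it suffices to prove $\lambda\bigl(1-\tfrac1x\bigr)\ge\tfrac{\lambda}{2\ln2}\ln x$; the case $\lambda=0$ is trivial, and for $\lambda>0$ this reduces to the one-variable inequality
$$h(x):=1-\frac1x-\frac{\ln x}{2\ln2}\ \ge\ 0\qquad\text{for all }x\in[1,2].$$
I would prove this by observing $h(1)=0$ and $h(2)=\tfrac12-\tfrac12=0$, and that $h'(x)=\tfrac{1}{x^{2}}\bigl(1-\tfrac{x}{2\ln2}\bigr)$ is positive on $[1,2\ln2)$ and negative on $(2\ln2,2]$ (note $1<2\ln2\approx1.386<2$); hence $h$ rises then falls on $[1,2]$, so $h(x)\ge\min\{h(1),h(2)\}=0$ there. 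Exponentiating back gives the claimed bound.

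The only point requiring thought is part~(b), and specifically the role of the constant $2\ln2$: it is exactly the value forcing $h(1)=h(2)=0$, so the inequality is tight at both endpoints and fails for any larger constant, or on any larger interval---this is where the hypothesis $x\le2$ is used. An alternative derivation of~(b) avoiding logarithms is to fix $x$, set $g(\lambda)=x^{-\lambda/(2\ln2)}-1+\lambda(1-1/x)$, note $g(0)=0$ and $g''(\lambda)=\bigl(\tfrac{\ln x}{2\ln2}\bigr)^{2}x^{-\lambda/(2\ln2)}\ge0$, and check that $g'(0)=h(x)\ge0$; then $g$ lies above its tangent line at $0$, so $g(\lambda)\ge0$ for all $\lambda\ge0$. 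Either way the argument bottoms out at the same elementary fact that $h\ge0$ on $[1,2]$.
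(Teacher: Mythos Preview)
Your proof is correct in all three parts. The paper itself offers no proof beyond the sentence ``They can be proved by simple arithmetic/calculus,'' so there is nothing substantive to compare against; your argument simply supplies the details the paper omits, and does so cleanly (the AM--GM/concavity observation for~(c) and the endpoint-plus-unimodality argument for~(b) are exactly the kind of elementary verifications the authors had in mind).
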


Using the definitions of $\gamma$ and $\alpha$ in Definitions~\ref{def:income-elas} and \ref{def:price-elas},
one can easily obtain the following lemma.

\begin{lemma}
\begin{enumerate}[(a)]\label{lem:complementary-demand-bound-prelim}
\item If the prices of all goods are raised from $p_i$ to $p_i' = q p_i$, where $q>1$, then $x_i' \geq x_i / q^\gamma$.
\item If the prices of all goods are reduced from $p_i$ to $p_i' = q p_i$, where $q<1$, then $x_i' \leq x_i / q^\gamma$.
\item If $p_i$ is raised to $p_i' = t p_i$, where $t>1$, and all other prices are fixed, then $x_i' \leq x_i / t^\alpha$.
\item If $p_i$ is reduced to $p_i' = t p_i$, where $t<1$, and all other prices are fixed, then $x_i' \geq x_i / t^\alpha$.
\end{enumerate}
\end{lemma}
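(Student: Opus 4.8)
The plan is to derive all four parts by integrating the appropriate elasticity bound along a one-parameter family of prices (or incomes), using nothing beyond Definitions~\ref{def:income-elas} and~\ref{def:price-elas}. Throughout I will take the demands to be differentiable in the relevant scalar parameter (or, more carefully, that $\ln x_i$ is absolutely continuous in it), so that the derivative bounds may be integrated; for Fisher-market demand this holds piecewise.

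For parts (c) and (d), which change only $p_i$, I would work directly with the aggregate demand $x_i$. Definition~\ref{def:price-elas} states that at every price vector $-\,\frac{d\ln x_i}{d\ln p_i}\ge \alpha$ with all other prices held fixed, i.e.\ $\frac{d\ln x_i}{d\ln p_i}\le -\alpha$. Writing the target in logarithmic form, $\ln x_i' - \ln x_i = \int \frac{d\ln x_i}{d\ln p_i}\, d(\ln p_i)$ with the integral running from $\ln p_i$ to $\ln(t p_i)$: for $t>1$ this interval has positive length, so the pointwise bound gives $\ln x_i' - \ln x_i \le -\alpha \ln t$, hence $x_i' \le x_i/t^{\alpha}$, which is (c); for $t<1$ the interval is traversed in the decreasing direction, so the same pointwise bound yields $\ln x_i' - \ln x_i \ge -\alpha\ln t$ (with $\ln t<0$), hence $x_i'\ge x_i/t^{\alpha}$, which is (d).

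For parts (a) and (b) the one idea needed is the degree-$0$ homogeneity of individual demands. Fixing a buyer $B_\ell$ with income $b_\ell$, scaling all prices and $b_\ell$ by a common factor leaves the budget set, hence the optimal basket, unchanged, so $x_{i\ell}\big((q p_1,\dots,q p_n),b_\ell\big)=x_{i\ell}\big((p_1,\dots,p_n),b_\ell/q\big)$. Thus raising all prices by a factor $q$ acts on each buyer exactly like dividing that buyer's income by $q$. I would then integrate the income-elasticity bound of Definition~\ref{def:income-elas}, $\frac{d\ln x_{i\ell}}{d\ln b}\le \gamma$, as the income moves from $b_\ell$ to $b_\ell/q$: for $q>1$ this gives $\ln x_{i\ell}(p,b_\ell/q)-\ln x_{i\ell}(p,b_\ell)\ge -\gamma\ln q$, i.e.\ $x_{i\ell}(p,b_\ell/q)\ge q^{-\gamma} x_{i\ell}(p,b_\ell)$; summing over $\ell$ and using $x_i=\sum_\ell x_{i\ell}$ gives (a). Part (b) is the identical computation with $q<1$, where the reversed direction of integration flips the inequality to $x_i'\le x_i/q^{\gamma}$. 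Note that only the upper bound $\gamma$ on income elasticity and the lower bound $\alpha$ on price elasticity are used; no sign assumption on individual-good income elasticities is needed.

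I do not expect a genuine obstacle: the sole non-mechanical step is the homogeneity reduction that turns an all-prices change into an income change, after which each part is a one-line integration of an elasticity bound, with sign-of-the-interval bookkeeping being the only thing to keep straight. The only point requiring a word of care is the regularity needed to apply the fundamental theorem of calculus; if one is uneasy about differentiability, one observes that Fisher demand is piecewise real-analytic in each scalar parameter, so $\ln x_i$ is absolutely continuous and the integration is valid on each piece.
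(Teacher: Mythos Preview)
Your proposal is correct and is exactly the argument the paper has in mind: the paper offers no proof beyond the one-line remark that the lemma follows from Definitions~\ref{def:income-elas} and~\ref{def:price-elas}, and what you have written is precisely the natural unpacking of that remark --- integrating the logarithmic derivative bound for (c)/(d), and for (a)/(b) using degree-zero homogeneity of individual demand to convert a uniform price scaling into an income scaling before integrating the income-elasticity bound buyer by buyer.
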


\begin{lemma}\label{lem:f-bounded-demand-bound-1}
When the market is $f$-bounded,
\begin{enumerate}
\item if $p_i = r p_i^* / f$ where $1\leq r\leq f^2$, then $x_i\geq w_i f^{2\alpha-\gamma} r^{-\alpha}$;
\item if $p_i = f p_i^* / q$ where $1\leq q\leq f^2$, then $x_i \leq w_i f^{\gamma-2\alpha} q^\alpha$.
\end{enumerate}
\end{lemma}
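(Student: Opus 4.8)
The plan is to reduce both statements to Lemma~\ref{lem:complementary-demand-bound-prelim} by decomposing the price vector transformation into two stages: a uniform scaling of all prices, followed by a single-coordinate adjustment to good $i$. For part (1), start from the equilibrium prices $p^*$ and note that $p_i = r p_i^*/f$. First I would raise all prices uniformly from $p^*$ to $p^*/f$ (a uniform reduction by factor $q = 1/f < 1$); by Lemma~\ref{lem:complementary-demand-bound-prelim}(b) this gives $x_i \ge w_i / (1/f)^\gamma = w_i f^{-\gamma}$... wait, that is the wrong direction, so instead I would go the other way: first raise all prices from $p^*$ to $f p^*$ — but that overshoots $p_i$. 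The cleaner route is: raise all prices uniformly from the $f$-bounded configuration's ``worst case'' reference. Concretely, since the market is $f$-bounded all prices lie in $[p_j^*/f, f p_j^*]$; the demand $x_i$ is minimized (for part (1)) when every other price $p_j$, $j \ne i$, is as large as possible, i.e.\ $p_j = f p_j^*$, because the goods are complements (raising a complement's price lowers demand for $G_i$). So I would first bound $x_i$ below by its value at the configuration $(p_i, f p_1^*, \dots, f p_n^*)$ with $p_i$ in the $i$-th slot.

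From that configuration, I reach $p^*$ in two steps. Step one: lower all prices uniformly from $(f p_1^*, \dots, f p_n^*)$ — including temporarily the $i$-th coordinate — by the factor $1/f$, landing at $p^*$; Lemma~\ref{lem:complementary-demand-bound-prelim}(b) controls how demand changes under this uniform move. Step two: adjust $p_i$ alone from $p_i^*$ to $r p_i^*/f$, a single-coordinate change by factor $t = r/f$; since $1 \le r \le f^2$ we have $1/f \le t \le f$, and Lemma~\ref{lem:complementary-demand-bound-prelim}(c)/(d) gives $x_i$ scales by $t^{-\alpha}$. Composing: a uniform factor contributes $f^{\gamma}$ to $x_i/w_i$ from the step that takes $f p^*$ down to $p^*$ — but I must also account for the single-coordinate move of $p_i$ that I folded into the uniform step, which I then undo. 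Tracking the exponents carefully, the uniform part yields the $f^{2\alpha - \gamma}$... here I should be honest that getting exactly $f^{2\alpha-\gamma}$ rather than $f^\gamma$ requires splitting the ``other goods move uniformly by $1/f$'' (contributing via $\gamma$) from the ``$p_i$ moves'' bookkeeping, and the $2\alpha$ arises because $p_i$ effectively traverses a range of multiplicative width $f^2$ relative to $p_i^*$, each factor of which is damped by $\alpha$ via parts (c)/(d). Part (2) is entirely symmetric: demand for $G_i$ is maximized when the other complements have their smallest prices $p_j^*/f$, and one runs the same two-step decomposition with the inequalities reversed.

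The main obstacle I anticipate is the exponent bookkeeping in the composition — specifically, making sure that the uniform-scaling step (governed by $\gamma$) and the single-good step (governed by $\alpha$) are applied to the correct intermediate configurations so that the cross-terms combine to give the clean $f^{2\alpha-\gamma} r^{-\alpha}$ rather than a messier bound, and verifying that the constraint $1 \le r \le f^2$ is exactly what keeps every intermediate single-coordinate ratio within $[1/f, f]$ so that the price-elasticity bounds of Lemma~\ref{lem:complementary-demand-bound-prelim}(c)--(d) are applicable with the stated $\alpha$. The ``worst-case other prices'' claim also needs a brief monotonicity justification: lowering a complement's price cannot decrease $x_i$, which follows from the complementarity hypothesis but should be stated explicitly, perhaps by a telescoping argument over the coordinates $j \ne i$.
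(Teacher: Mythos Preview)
Your high-level strategy matches the paper's: use complementarity to argue that $x_i$ is minimized when every other price sits at its cap $p_j = f p_j^*$, and then relate that extremal configuration to $p^*$ via a two-step path controlled by Lemma~\ref{lem:complementary-demand-bound-prelim}. The exponent difficulty you flag is real, though, and it comes from having picked the wrong intermediate point.

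The two steps you wrote down do not connect the correct endpoints: Step~1 starts at $f p^*$ (all coordinates) and Step~2 ends at $(r p_i^*/f,\, p_{-i}^*)$, so neither end is the target $(r p_i^*/f,\, f p_{-i}^*)$. Worse, Step~1 is a uniform \emph{lowering}, for which part~(b) gives only an \emph{upper} bound on demand---the wrong direction for part~(1). The paper's decomposition fixes both issues cleanly: starting from $p^*$, first raise \emph{all} prices to $f p^*$, so by part~(a) the demand there is at least $w_i/f^\gamma$; then lower $p_i$ alone from $f p_i^*$ to $r p_i^*/f$, a single-coordinate reduction by the factor $r/f^2 \le 1$ (this is exactly where $r \le f^2$ enters), so by part~(d) the demand is at least $(w_i/f^\gamma)(f^2/r)^\alpha = w_i f^{2\alpha-\gamma} r^{-\alpha}$. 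Because the single-coordinate move is always a reduction, no case split between (c) and (d) is needed, and your worry about keeping intermediate ratios inside $[1/f,f]$ is unnecessary---parts~(c)--(d) carry no such restriction. Part~(2) is the mirror image using (b) then (c).
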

\begin{proof}
We prove the first part; the second part is symmetric. By the definition of complements, $x_i$ is smallest when $p_j = f p_j^*$ for all $j\neq i$. Consider the situation in which $p_k = f p_k^*$ for all goods $k$. By Lemma \ref{lem:complementary-demand-bound-prelim}(a), $x_i \geq \frac{w_i}{f^\gamma}$. Now reduce $p_i = f p_i^*$ to $p_i = r p_i^* / f$. By Lemma \ref{lem:complementary-demand-bound-prelim}(d), $x_i \geq \frac{w_i}{f^\gamma (r/f^2)^\alpha} = w_i f^{2\alpha - \gamma} r^{-\alpha}$.
\end{proof}

\begin{lemma}\label{lem:conv-one-time-market-1}
Suppose that $\beta = 2\alpha - \gamma > 0$.
Further, suppose that the prices are updated independently using price update rule \eqref{eq:tat-rule}, where $0<\lambda\leq 1$.
Let $p$ denote the current price vector and let $p'$ denote the price vector after one day.
\begin{enumerate}
\item If $f(p)^\beta \geq 2$, then $f(p') \leq \left(1-\frac{\lambda}{2}\right) f(p)$.
\item If $f(p)^\beta \leq 2$, then $f(p') \leq f(p)^{1-\lambda\beta/(2\ln 2)}$.
\end{enumerate}
\end{lemma}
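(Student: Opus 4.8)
The plan is to track, separately, what happens to an underpriced good ($p_i < p_i^*$) and an overpriced good ($p_i > p_i^*$) over one day of updates, and to show that in each case the new price is pulled toward $p_i^*$ by enough that the worst-case ratio $f(p')$ satisfies the stated bound. Write $f = f(p)$. Fix any good $i$. First I would treat the underpriced case, say $p_i = r p_i^*/f$ with $1 \le r \le f^2$ (so $p_i$ ranges over $[p_i^*/f, f p_i^*]$, covering all $f$-bounded prices $\le p_i^*$; the extremes $r=1$ and $r=f^2$ correspond to $p_i = p_i^*/f$ and $p_i = p_i^*$ respectively — wait, I should be careful with the direction, but the point is $r$ parametrizes the gap). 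By Lemma~\ref{lem:f-bounded-demand-bound-1}(1), $x_i \ge w_i f^{2\alpha - \gamma} r^{-\alpha} = w_i f^\beta r^{-\alpha}$. The update rule \eqref{eq:tat-rule} then multiplies $p_i$ by $1 + \lambda \min\{1, (x_i - w_i)/w_i\} \ge 1 + \lambda \min\{1, f^\beta r^{-\alpha} - 1\}$, which is an increase whenever $f^\beta r^{-\alpha} > 1$. I need to show the resulting price does not overshoot past the allowable upper band and, when combined over all goods, yields $f(p') \le (1-\lambda/2)f$ in case~(1) and $f(p')^{\,?} \le f^{1-\lambda\beta/(2\ln 2)}$ in case~(2).

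For case~(1), where $f^\beta \ge 2$: the most underpriced good has $p_i = p_i^*/f$ (i.e.\ $r=1$), so $x_i \ge w_i f^\beta \ge 2w_i$, hence $\min\{1, x_i/w_i - 1\} = 1$ and $p_i$ increases by exactly the factor $1+\lambda$. So the new ratio for that good is $p_i^*/(p_i(1+\lambda)) = f/(1+\lambda) \le (1-\lambda/2)f$ by Lemma~\ref{lem:cal-result}(a). For a less underpriced good with general $r$, I must check the new ratio is still $\le (1-\lambda/2)f$: either the update saturates at $1+\lambda$ (same bound), or $x_i/w_i - 1 < 1$ and the new ratio is $\frac{f}{r(1 + \lambda(f^\beta r^{-\alpha} - 1))}$, which I'd bound by noting $r \ge 1$ and the denominator is $\ge 1$; a short calculation using $1 \le r \le f^2$ gives it is $\le (1-\lambda/2)f$. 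The overpriced goods are handled symmetrically via part~(2) of Lemma~\ref{lem:f-bounded-demand-bound-1} and Lemma~\ref{lem:cal-result}(a): $p_i = fp_i^*/q$ has $x_i \le w_i f^{-\beta} q^\alpha \le w_i f^{-\beta} f^{2\alpha} = w_i f^\gamma$ — I need $x_i < w_i$ to get a decrease, which holds once $f^{-\beta}q^\alpha < 1$; the extreme $q=1$ gives $x_i \le w_i f^{-\beta} \le w_i/2$, so the decrease factor is $1 - \lambda(1 - x_i/w_i) \ge 1 - \lambda(1 - f^{-\beta})$, and the new ratio $\frac{p_i(1 - \lambda(\cdots))}{p_i^*} = \frac{f}{q}(1 - \lambda(1-f^{-\beta}q^\alpha \cdot \text{stuff}))$ must be shown $\le (1-\lambda/2)f$, again using $f^{-\beta} \le 1/2$.

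For case~(2), where $f^\beta \le 2$: now no update saturates (the relevant quantities stay below $1$ in absolute value once $f^\beta \le 2$), so I use the exact multiplicative factors. For the most underpriced good, $x_i/w_i \ge f^\beta$, so the factor is $1 + \lambda(f^\beta - 1) \ge$ ... and I want $\frac{f}{1+\lambda(f^\beta-1)}$, or more precisely a bound on the new ratio over all $f$-bounded configurations, to be $\le f^{1 - \lambda\beta/(2\ln 2)}$. The clean way: apply Lemma~\ref{lem:cal-result}(b) with $x = f^\beta \in [1,2]$, giving $1 - \lambda(1 - f^{-\beta}) \le f^{-\lambda\beta/(2\ln 2)}$ for the overpriced side — this is exactly why the constant $2\ln 2$ appears — and the analogous manipulation (or Lemma~\ref{lem:cal-result}(c) with $x = f^\beta$) for the underpriced side to get $\frac{1}{1+\lambda(f^\beta - 1)} \le f^{-\lambda\beta}$, which is even stronger than needed. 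The $\frac{1}{2\ln 2}$ in the exponent comes from the worse (overpriced) direction. I then argue that intermediate values of $r$ (resp.\ $q$) only help, so $f(p') \le f^{1 - \lambda\beta/(2\ln 2)}$.

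The main obstacle is the bookkeeping for goods that are only mildly mispriced: I must verify that the price update does not overshoot to the opposite side of $p_i^*$ by more than a factor that would spoil the bound, and that a good which starts close to $p_i^*$ stays close — i.e.\ the band $[p^*(1+\mu)/f, fp^*(1-\nu)]$ sketched in the Phase~1 discussion genuinely contains $p_i'$ for every $f$-bounded $p_i$. This requires checking monotonicity of the relevant one-variable functions of $r$ and $q$ on $[1, f^2]$, which is where Lemma~\ref{lem:cal-result} and the constraint $\lambda \le 1$ are used; everything else is a direct substitution of Lemmas~\ref{lem:f-bounded-demand-bound-1} and~\ref{lem:complementary-demand-bound-prelim} into the update rule.
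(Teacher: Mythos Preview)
Your proposal is correct and follows essentially the same route as the paper: parametrize underpriced goods by $p_i = r p_i^*/f$ and overpriced by $p_j = f p_j^*/q$, feed Lemma~\ref{lem:f-bounded-demand-bound-1} into the update rule, reduce to the extreme $r=q=1$ via monotonicity of the resulting one-variable functions (the paper writes these out as $h_1(r), h_2(q)$ and computes their derivatives), and finish with Lemma~\ref{lem:cal-result}(a) for case~(1) and parts (b),(c) for case~(2). One small slip: on the overpriced side you wrote the decrease factor as $\ge 1-\lambda(1-f^{-\beta})$; the inequality goes the other way (you want an \emph{upper} bound on the factor to bound $p_j'/p_j^*$ from above), but your subsequent use is in the right direction so this is just a typo-level issue.
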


\begin{proof}
Suppose that $p_i = r \frac{p_i^*}{f(p)}$, where $1\leq r\leq f(p)^2$. By Lemma \ref{lem:f-bounded-demand-bound-1}, $x_i \geq w_i f(p)^\beta r^{-\alpha}$ and hence $\frac{x_i-w_i}{w_i} \geq f(p)^\beta r^{-\alpha} - 1$.
When $p_i$ is updated using price update rule \eqref{eq:tat-rule}, the new price $p_i'$ satisfies
$$p_i' \geq r \frac{p_i^*}{f(p)} \left[1 + \lambda\cdot\min\left\{1,f(p)^\beta r^{-\alpha} - 1\right\}\right].$$
Let $h_1(r) := r \left[1 + \lambda\cdot\min\left\{1,(f(p)^\beta r^{-\alpha} - 1)\right\}\right]$.
Then
$$
\frac{d}{dr} h_1(r) \geq 1 - \lambda + (1-\alpha) \lambda f(p)^\beta r^{-\alpha} \geq 0.
$$
Thus
$$
p_i' \geq \frac{p_i^*}{f(p)} \left[1 + \lambda\cdot\min\left\{1,f(p)^\beta - 1\right\}\right].
$$

Similarly, suppose that $p_j = \frac{1}{q} f(p) p_j^*$, where $1\leq q\leq f(p)^2$. By Lemma \ref{lem:f-bounded-demand-bound-1}, $x_j \leq w_j f(p)^{-\beta} q^\alpha$ and hence $\frac{x_j-w_j}{w_j}\leq f(p)^{-\beta} q^\alpha-1$. When $p_j$ is updated using price update rule \eqref{eq:tat-rule}, the new price $p_j'$ satisfies
$$p_j' \leq \frac{1}{q} f(p) p_j^* \left[ 1 + \lambda\cdot\min\left\{ 1, f(p)^{-\beta} q^\alpha-1\right\} \right].$$
Let $h_2(q) := \frac{1}{q}\left[ 1 + \lambda\cdot\min\left\{ 1, f(p)^{-\beta} q^\alpha-1\right\} \right]$.
Then
$$
\frac{d}{dq} h_2(q) \leq \frac{1}{q^2}\left(\lambda - 1 - (1-\alpha) \lambda f(p)^{-\beta} q^\alpha\right)\leq 0.
$$
Thus
\begin{eqnarray*}
p_j' &\leq & f(p) p_j^* \left[ 1 + \lambda\cdot\min\left\{ 1, f(p)^{-\beta} -1\right\} \right] \\
&=& f(p) p_j^* \left[ 1 + \lambda (f(p)^{-\beta} -1) \right].
\end{eqnarray*}

Hence, after one day, a period in which each good updates its price at least once, we can guarantee that $f(p')$ is at most
$$f(p) \cdot \max\left\{ 1 - \lambda (1-f(p)^{-\beta}) , \frac{1}{1 + \lambda \cdot \min\left\{1,(f(p)^\beta - 1)\right\}} \right\},$$
which, by Lemma \ref{lem:cal-result}(a), is at most $\left(1-\frac{\lambda}{2}\right) f(p)$ when $f(p)^\beta \geq 2$.

When $f(p)^\beta\leq 2$, by Lemma \ref{lem:cal-result}(b), $1-\lambda(1-f(p)^{-\beta}) \leq f(p)^{-\lambda\beta/(2\ln 2)}$. By Lemma \ref{lem:cal-result}(c), $\frac{1}{1+\lambda(f(p)^\beta-1)} \leq f(p)^{-\lambda\beta}$.
So $f(p') \leq f(p)^{1-\lambda\beta/(2\ln 2)}$.
\end{proof}

\begin{theorem}
\label{th:phase-1-comps}
Suppose that $\beta > 0$,
$\lambda \le 1$, and that the prices are initially
$f$-bounded.
When $\delta/\beta \leq 1$, Phase 1 will complete within
$O\left(\frac{1}{\lambda}\ln f + \frac{1}{\lambda \beta}\ln\frac{1}{\delta}\right)$ days.
\end{theorem}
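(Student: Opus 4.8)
The plan is to run Lemma~\ref{lem:conv-one-time-market-1} repeatedly, tracking $f(p)$ from its initial value $f$ down to the value $f_{\mbox{\tiny II}}$ at which Phase~1 terminates, and to split the count of days into two regimes according to which of the two cases of that lemma applies. Recall from the roadmap discussion that Phase~1 ends once $f_{\mbox{\tiny II}} = \min\{(1-2\delta)^{-1/\beta},(2-\delta)^{1/\gamma}\}$ is reached, and that under the stated conditions $1 + 2\delta/\beta \le f_{\mbox{\tiny II}} \le 1 + 4\delta/\beta$; in particular $f_{\mbox{\tiny II}} - 1 = \Theta(\delta/\beta)$, and $f_{\mbox{\tiny II}}^\beta \le 2$ so that the second case of Lemma~\ref{lem:conv-one-time-market-1} governs the tail of the process. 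One subtlety to flag at the outset: Lemma~\ref{lem:conv-one-time-market-1} is stated for the one-time market with update rule~\eqref{eq:tat-rule}, whereas Phase~1 is really about the ongoing market; but as the text already argues, the ongoing market merely replaces $\lambda$ by $\lambda/2$ and works in two-day cycles, so this costs only a constant factor in the day count and I will invoke that reduction rather than redo it.

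The first regime is the ``far'' regime, where $f(p)^\beta \ge 2$. Here Lemma~\ref{lem:conv-one-time-market-1}(1) gives $f(p') \le (1-\lambda/2) f(p)$, i.e.\ $f$ itself (not $f-1$) shrinks geometrically at rate $1-\lambda/2$. Starting from $f$ and stopping once $f(p)^\beta < 2$ (equivalently $f(p) < 2^{1/\beta}$), the number of days is $O\!\left(\frac{1}{\lambda}\ln\frac{f}{2^{1/\beta}}\right) = O\!\left(\frac{1}{\lambda}\ln f\right)$, since $\ln(2^{1/\beta})/\lambda = \frac{\ln 2}{\lambda\beta}$ is dominated by the $\frac{1}{\lambda\beta}\ln(1/\delta)$ term in the target bound (as $\delta < 1$, so $\ln(1/\delta) \ge 0$; and in any case $\ln 2 = O(\ln(1/\delta))$ once $\delta$ is bounded away from $1$, which the hypothesis $\delta/\beta \le 1$ and $\beta \le 1$ permit us to assume, or one simply absorbs the $O(1/(\lambda\beta))$ additive constant). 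The second regime is the ``near'' regime, $f(p)^\beta \le 2$, where Lemma~\ref{lem:conv-one-time-market-1}(2) gives $f(p') \le f(p)^{1-\lambda\beta/(2\ln 2)}$. Writing $g = \ln f(p)$, one day maps $g \mapsto g(1 - \lambda\beta/(2\ln 2))$, so $g$ shrinks geometrically at rate $1 - \Theta(\lambda\beta)$, and we must go from $g_0 = \ln(2^{1/\beta}) = \Theta(1/\beta)$ down to $g_{\mbox{\tiny II}} = \ln f_{\mbox{\tiny II}} = \Theta(\delta/\beta)$ (using $\ln f_{\mbox{\tiny II}} \asymp f_{\mbox{\tiny II}} - 1 \asymp \delta/\beta$). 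The number of days is therefore $O\!\left(\frac{1}{\lambda\beta}\ln\frac{g_0}{g_{\mbox{\tiny II}}}\right) = O\!\left(\frac{1}{\lambda\beta}\ln\frac{1}{\delta}\right)$.

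Summing the two regimes gives the claimed $O\!\left(\frac{1}{\lambda}\ln f + \frac{1}{\lambda\beta}\ln\frac{1}{\delta}\right)$ day bound. Two bookkeeping points deserve care and will be the only real work. First, the transition value $2^{1/\beta}$ used to separate the regimes: I want $\ln(2^{1/\beta})$ to be comparable to $\ln f_{\mbox{\tiny II}}^{-1}$-type quantities so that the logarithmic ratios telescope cleanly; this is where one uses $f_{\mbox{\tiny II}} - 1 = \Theta(\delta/\beta)$ together with $\delta/\beta \le 1$. Second, the passage from ``$f$ shrinks geometrically'' to ``$\ln f$ shrinks geometrically'' in the near regime requires $f(p) \le 2$ so that $\ln f(p) \asymp f(p)-1$ is available if one prefers to phrase progress in terms of $f-1$ as the roadmap does; since $f_{\mbox{\tiny II}} \le 2$ and $2^{1/\beta}$ may exceed $2$, the cleanest route is to keep the far regime in multiplicative-$f$ language and only switch to $\ln f$ (or $f-1$) once $f \le 2^{1/\beta}$, and if $2^{1/\beta} > 2$ to note that the stretch $f \in [2, 2^{1/\beta}]$ is handled by either lemma case and contributes only $O(\frac{1}{\lambda\beta})$ extra days, which is absorbed. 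The main obstacle, such as it is, is simply getting these constant-factor and regime-boundary estimates to line up so the two logarithmic terms appear exactly as stated; the geometric-decay structure itself is immediate from Lemma~\ref{lem:conv-one-time-market-1}.
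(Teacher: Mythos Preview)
Your proposal is correct and follows essentially the same route as the paper's proof: split the day count at the threshold $f(p)^\beta = 2$, apply the multiplicative decay of Lemma~\ref{lem:conv-one-time-market-1}(1) in the far regime to get $O(\frac{1}{\lambda}\ln f)$ days, and apply the exponent decay of Lemma~\ref{lem:conv-one-time-market-1}(2) in the near regime (tracking $\ln f$) to get $O(\frac{1}{\lambda\beta}\ln\frac{1}{\delta})$ days. The paper's write-up differs only cosmetically---it solves explicitly for $n_1$ and $n_2$ rather than phrasing things in terms of $g = \ln f$, and it records the intermediate bound $O(\frac{1}{\lambda\beta}(\ln\frac{1}{\beta}+\ln\frac{1}{\delta}))$ before using $\delta \le \beta$ to absorb $\ln\frac{1}{\beta}$---but the substance is identical.
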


\begin{proof} Suppose that initially $f>2^{1/\beta}$ and $1 + 2\delta/\beta<2^{1/\beta}$.
By Lemma \ref{lem:conv-one-time-market-1}, after $n_1$ days, where $n_1$ satisfies the inequality
$f\left(1-\frac{\lambda}{2}\right)^{n_1} \leq 2^{1/\beta}$, the market is $2^{1/\beta}$-bounded.
It suffices that:
$$n_1 = \frac{\ln f - \frac{1}{\beta}\ln 2}{\ln\left(1-\frac{\lambda}{2}\right)} = O\left(\frac{1}{\lambda}\ln f\right).$$
If $2^{1/\beta} \le 1 + 2\delta/\beta$, then $O(\frac{1}{\lambda} \ln f)$ days suffice.

After this, by Lemma \ref{lem:conv-one-time-market-1}, after an additional $n_2$ days,
the market becomesa $1+(2\delta/\beta)$-bounded,
if $n_2$ satisfies the inequality $(2^{1/\beta})^{(1-\lambda\beta/(2\ln 2))^{n_2}}\leq 1+2\delta/\beta$.
It suffices that:
$$n_2 = \frac{\ln \beta + \ln\ln (1+2\delta/\beta) -\ln\ln 2}{\ln (1-\lambda\beta/(2\ln 2))} = O\left(\frac{1}{\lambda\beta}\left(\ln\frac{1}{\beta} + \ln\frac{1}{\delta}\right)\right).$$
The last equality holds as $\delta/\beta \leq 1$ and hence $\ln\ln (1+2\delta/\beta) = \ln \left(\delta/\beta\right) + O(1)$.

The sum $n_1+n_2$ bounds the number of days Phase 1 lasts.
\end{proof}

\noindent
{\bf Comment}.
If we wish to analyze the one-time market or the ongoing market without
taking account of the warehouses, then arbitrarily accurate prices can be achieved
in Phase 1, and the time till prices are $(1 + \eta)$-bounded,
for any $\eta$ is given by the bound in Theorem~\ref{th:phase-1-comps},
on replacing the term $\frac{1}{\beta}\log\frac{1}{\delta}$
by $\frac{1}{\beta}\log\frac{1}{\eta}$.

To apply this analysis of Phase 1 to other markets, it suffices
to identify conditions that ensure ${x}_1\ge f^{\beta}w_1$ when
$p_1 = p_1^*/f$, and ${x}_1 \le f^{-\beta}$ when $p_1 = f p_1^*$.

\paragraph{Phase 2}
Once the warehouse excesses may have a large impact on the price updates,
we can no longer demonstrate a smooth shrinkage of the term $(f-1)$.
Instead, we use an amortized analysis. We associate the following potential
$\phi_i$ with good $G_i$.
$$\phi_i := p_i [ \mbox{span}\{\bar{x}_i, x_i, \widetilde{w}_i\} -
c_1 \lambda (t-\tau_i) |\bar{x}_i - \widetilde{w}_i| + c_2 |\widetilde{w}_i - w_i| ],$$
where $\mbox{span}\{t_1,t_2,t_3\} = \max\{t_1,t_2,t_3\} - \min\{t_1,t_2,t_3\}$
and $1\geq c_1 > 0,\,c_2 > 1$ are suitably chosen constants.
We define $\phi := \sum_i \phi_i$.
The term $-c_1 \lambda (t-\tau_i) |\bar{x}_i - \widetilde{w}_i|$ ensures that $\phi$
decreases smoothly when no price update is occurring, as shown in the following
lemma.

\begin{lemma}\label{lem:PF-change-no-price-update}
Suppose that $4\kappa (1+c_2) \leq \lambda c_1 \leq 1/2$.
If $|\widetilde{w}_i-w_i|\leq 2\cdot\mbox{span}(x_i,\bar{x}_i,\widetilde{w}_i)$,
then $\frac{d\,\phi_i}{d\,t}\leq -\frac{\kappa(1+c_2)}{1+2c_2}\phi_i$ and otherwise
$\frac{d\,\phi_i}{d\,t}\leq -\frac{\kappa(c_2-1)}{2 c_2}\phi_i$,
at any time when no price update is occuring (to any $p_j$).
\end{lemma}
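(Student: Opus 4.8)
The plan is to differentiate $\phi_i$ with respect to $t$ while holding $p_i$ fixed (no price update is occurring), and bound each of the three contributions. Recall that in the absence of a price update, $p_i$ is constant, $\tau_i$ is constant, and the only quantities that change with $t$ are the warehouse contents $v_i$ (hence $\widetilde{w}_i = w_i + \kappa(v_i - v_i^*)$), the running average demand $\bar{x}_i[\tau_i,t]$, and the term $(t-\tau_i)$. First I would record the three elementary derivatives: $\frac{d}{dt}v_i = x_i - w_i$, so $\frac{d}{dt}\widetilde{w}_i = \kappa(x_i - w_i)$; $\frac{d}{dt}\bar{x}_i = \frac{1}{t-\tau_i}(x_i - \bar{x}_i)$ (from differentiating the time-average); and $\frac{d}{dt}(t-\tau_i) = 1$. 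Since $\kappa$ is tiny (of order $\lambda c_1/(4(1+c_2))$ by hypothesis) while the average-demand term is $O(1/(t-\tau_i))$ times the gap $|x_i - \bar{x}_i|$, the dominant effects on $\mathrm{span}\{\bar{x}_i,x_i,\widetilde{w}_i\}$ come from the motion of $\bar{x}_i$, and this motion \emph{shrinks} the span at a rate controlled by $\frac{1}{t-\tau_i}|\bar{x}_i-\widetilde{w}_i|$ — which is precisely why the correction term $-c_1\lambda(t-\tau_i)|\bar{x}_i-\widetilde{w}_i|$ is in the potential: its $t$-derivative, $-c_1\lambda|\bar{x}_i-\widetilde{w}_i|$ (plus a small $\kappa$-order piece from $\frac{d}{dt}\widetilde{w}_i$), is what makes $\phi_i$ decrease at a rate proportional to itself even when the span alone might be momentarily increasing.

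The key steps, in order: (1) Write $\frac{d\phi_i}{dt} = p_i\big[\frac{d}{dt}\mathrm{span}\{\bar{x}_i,x_i,\widetilde{w}_i\} - c_1\lambda|\bar{x}_i-\widetilde{w}_i| - c_1\lambda(t-\tau_i)\frac{d}{dt}|\bar{x}_i-\widetilde{w}_i| + c_2\frac{d}{dt}|\widetilde{w}_i-w_i|\big]$. (2) Bound $\frac{d}{dt}\mathrm{span}$ by a case analysis on which of $\bar{x}_i, x_i, \widetilde{w}_i$ currently attains the max and which the min (the span is piecewise linear, so a.e.\ its derivative is the difference of the derivatives of the two extremal terms; $x_i$ contributes $0$ since $p_i$ is fixed, $\widetilde{w}_i$ contributes $\pm\kappa(x_i-w_i) = O(\kappa)\cdot\mathrm{span}$-order terms, and $\bar{x}_i$ contributes $\pm\frac{1}{t-\tau_i}(x_i-\bar{x}_i)$); the upshot is $\frac{d}{dt}\mathrm{span} \le \frac{1}{t-\tau_i}|x_i - \bar{x}_i| + \kappa|x_i-w_i|$, and one then relates $|x_i-\bar{x}_i|$ to the span and, crucially, $\frac{1}{t-\tau_i}|x_i-\bar x_i|$ to $\frac{1}{t-\tau_i}|\bar x_i - \widetilde w_i|$ up to span-bounded error so it can be absorbed by the $-c_1\lambda|\bar x_i-\widetilde w_i|$ term. (3) Bound $\frac{d}{dt}|\widetilde{w}_i-w_i| = \pm\kappa(x_i-w_i)$ by $\kappa|x_i-w_i| \le \kappa\,(\text{span-order})$. (4) Bound $-c_1\lambda(t-\tau_i)\frac{d}{dt}|\bar x_i - \widetilde w_i|$: here $\frac{d}{dt}|\bar x_i - \widetilde w_i| = \pm\big(\frac{1}{t-\tau_i}(x_i-\bar x_i) - \kappa(x_i - w_i)\big)$, so multiplying by $(t-\tau_i)\le 1$ gives something of size at most $c_1\lambda(|x_i - \bar x_i| + \kappa|x_i-w_i|)$, again span-order. (5) Collect terms: the guaranteed decrease of $-c_1\lambda|\bar x_i - \widetilde w_i|$ dominates the $O(\lambda)$-span-order and $O(\kappa)$-span-order error terms precisely under $4\kappa(1+c_2)\le \lambda c_1\le 1/2$, giving $\frac{d\phi_i}{dt} \le -\Theta(\kappa)\cdot\phi_i$ with the two stated constants depending on whether $|\widetilde w_i - w_i| \le 2\,\mathrm{span}(x_i,\bar x_i,\widetilde w_i)$ (in which case $\phi_i$ is comparable to $p_i\cdot\mathrm{span}$ up to the $c_2|\widetilde w_i-w_i|$ term, yielding the $\frac{\kappa(1+c_2)}{1+2c_2}$ rate) or not (in which case the $c_2|\widetilde w_i - w_i|$ part of $\phi_i$ dominates and its $\kappa$-rate decay governs, yielding $\frac{\kappa(c_2-1)}{2c_2}$).

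The main obstacle I expect is step (2) combined with the final bookkeeping in step (5): one must carefully verify that $\phi_i$ as defined is genuinely nonnegative (so that an inequality of the form $\frac{d\phi_i}{dt}\le -c\,\phi_i$ is meaningful and not vacuous in the wrong direction) — this uses $c_1\lambda(t-\tau_i)|\bar x_i - \widetilde w_i| \le c_1\lambda|\bar x_i - \widetilde w_i| \le |\bar x_i - \widetilde w_i| \le \mathrm{span}\{\bar x_i, x_i, \widetilde w_i\}$ since $\lambda c_1 \le 1/2 < 1$ and $t - \tau_i \le 1$ — and that the case split on the position of the extrema of the span is exhaustive and that in every case the $\bar x_i$-motion term can be charged against $|\bar x_i - \widetilde w_i|$ rather than against the span directly. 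Getting the constants exactly as stated (rather than merely $\Theta(\kappa)$) requires being scrupulous about which span-order bounds lose a factor of $2$ and where the hypothesis $|\widetilde w_i - w_i| \le 2\,\mathrm{span}(\cdots)$ is and is not invoked. The rest is routine calculus once the correct inequalities are set up.
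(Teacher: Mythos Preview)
Your overall architecture --- differentiate, case-split on the ordering of $x_i,\bar x_i,\widetilde w_i$, then case-split on $|\widetilde w_i-w_i|\lessgtr 2S$ --- matches the paper. But there is a genuine gap in how you combine the span derivative with the correction term, and a sign slip earlier.

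First the sign: excess demand depletes the warehouse, so $\frac{d}{dt}v_i = w_i - x_i$, hence $\frac{d}{dt}\widetilde w_i = -\kappa(x_i-w_i)$, not $+\kappa(x_i-w_i)$.

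The real problem is in steps (2)--(5). Your ``upshot'' bound $\frac{dS}{dt}\le \frac{1}{t-\tau_i}|x_i-\bar x_i|+\kappa|x_i-w_i|$ is true but useless: the first term blows up as $t\to\tau_i$, and you cannot ``absorb it into $-c_1\lambda|\bar x_i-\widetilde w_i|$'' because that term has no $\frac{1}{t-\tau_i}$ factor. More fundamentally, your ``guaranteed decrease'' is only $-c_1\lambda|\bar x_i-\widetilde w_i|$, and in the ordering $x_i\ge\widetilde w_i\ge\bar x_i$ (or its mirror) one can have $|\bar x_i-\widetilde w_i|$ arbitrarily small while $S=|x_i-\bar x_i|$ is large; then your collected bound gives roughly $|K|+c_1\lambda|x_i-\bar x_i|-c_1\lambda|\bar x_i-\widetilde w_i|$, which is positive of order $c_1\lambda S$, not $\le -\Theta(\kappa)\phi_i$.

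The fix is not to take absolute values term-by-term but to combine with signs first. Writing $K=\kappa(x_i-w_i)$, the product rule on $-c_1\lambda(t-\tau_i)|\bar x_i-\widetilde w_i|$ gives exactly
\[
-c_1\lambda|\bar x_i-\widetilde w_i|\;-\;c_1\lambda\,\mathrm{sign}(\bar x_i-\widetilde w_i)(x_i-\bar x_i)\;-\;c_1\lambda(t-\tau_i)K\,\mathrm{sign}(\bar x_i-\widetilde w_i),
\]
and the first two pieces collapse to $-c_1\lambda(x_i-\widetilde w_i)\,\mathrm{sign}(\bar x_i-\widetilde w_i)$. Now in each of the three orderings you pair this with the exact expression for $\frac{dS}{dt}$ (which is $K\,\mathrm{sign}(x_i-\widetilde w_i)$, or $-\frac{|x_i-\bar x_i|}{t-\tau_i}$, or $(-K-\frac{x_i-\bar x_i}{t-\tau_i})\,\mathrm{sign}(\widetilde w_i-\bar x_i)$ respectively), and in every case the combination is $\le |K|-c_1\lambda S$; the condition $\lambda c_1\le\tfrac12$ is used precisely in the middle ordering, where $\frac{dS}{dt}\le -|x_i-\bar x_i|=-S$ compensates the now-positive $+c_1\lambda|x_i-\widetilde w_i|\le c_1\lambda S$. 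Once you have the uniform bound
\[
\frac{d\phi_i}{dt}\le p_i\bigl[|K|-c_1\lambda S - c_2 K\,\mathrm{sign}(\widetilde w_i-w_i)\bigr],
\]
your final two-case split goes through exactly as you sketched.
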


\begin{proof}To simplify the presentation of this proof,
let $K$ denote $\kappa (x_i-w_i)$ and let $S$ denote $\mbox{span}(x_i,\bar{x}_i,\widetilde{w}_i)$.

Note the following equalities: $\frac{d\,x_i}{d\,t} = \frac{d\,w_i}{d\,t} = 0$, $\frac{d\,\widetilde{w}_i}{d\,t} = -K$ and $\frac{d\,\bar{x}_i}{d\,t} = \frac{x_i - \bar{x}_i}{t-\tau_i}$. Then $\frac{d\,c_2|\widetilde{w}_i-w_i|}{d\,t} = -c_2 K\cdot \mbox{sign}(\widetilde{w}_i-w_i)$ and hence
\begin{eqnarray*}
\frac{d\,\phi_i}{d\,t} &=& p_i\left[\frac{d\,S}{d\,t} - c_1\lambda |\bar{x}_i-\widetilde{w}_i| -c_1\lambda (t-\tau_i) \frac{d\,|\bar{x}_i-\widetilde{w}_i|}{d\,t}\right. \\
&& \hspace{0.2in} -c_2 K\cdot \mbox{sign}(\widetilde{w}_i-w_i) \bigg]\\
&=& p_i\left[\frac{d\,S}{d\,t} - c_1\lambda (x_i-\widetilde{w}_i)\cdot\mbox{sign}(\bar{x}_i-\widetilde{w}_i)\right. \\
&& \hspace{0.2in} - c_1\lambda (t-\tau_i)K\cdot\mbox{sign}(\bar{x}_i-\widetilde{w}_i) \\
&& \hspace{0.2in} -c_2 K\cdot \mbox{sign}(\widetilde{w}_i-w_i)\bigg].
\end{eqnarray*}

Next by means of a case analysis, we show that
\begin{equation}\label{eq:universal-bound-pf-derivative}
\frac{d\,\phi_i}{d\,t} \leq p_i\left[|K| - c_1\lambda S -c_2 K\cdot \mbox{sign}(\widetilde{w}_i-w_i)\right].
\end{equation}
We show Case 1 in detail. Cases 2 and 3 are similar.

\smallskip

\noindent
{\bf Case 1:} $x_i\geq \bar{x}_i\geq \widetilde{w}_i$ or $\widetilde{w}_i\geq \bar{x}_i\geq x_i$. $\frac{d\,S}{d\,t} = K\cdot\mbox{sign}(x_i-\widetilde{w}_i)$.
\begin{eqnarray*}
\frac{d\,\phi_i}{d\,t} &=& p_i\left[(K - c_1\lambda (x_i-\widetilde{w}_i)- c_1\lambda (t-\tau_i)K)\mbox{sign}(x_i-\widetilde{w}_i)\right.\\
&& \hspace{0.2in} -c_2 K\cdot \mbox{sign}(\widetilde{w}_i-w_i)]\\
&=& p_i\left[K(1-c_1\lambda (t-\tau_i))\mbox{sign}(x_i-\widetilde{w}_i)\right.\\
&& \hspace{0.2in} -c_1\lambda |x_i-\widetilde{w}_i| -c_2 K\cdot \mbox{sign}(\widetilde{w}_i-w_i)]\\
&\leq & p_i\left[|K| - c_1\lambda S - c_2 K\cdot \mbox{sign}(\widetilde{w}_i-w_i)\right].
\end{eqnarray*}

\noindent
{\bf Case 2:} $x_i\geq\widetilde{w}_i\geq\bar{x}_i$ or $\bar{x}_i\geq\widetilde{w}_i\geq x_i$.

\smallskip

\noindent
{\bf Case 3:} $\widetilde{w}_i\geq x_i\geq \bar{x}_i$ or $\bar{x}_i\geq x_i\geq \widetilde{w}_i$.

\smallskip

Now we use the bound from \eqref{eq:universal-bound-pf-derivative} to obtain the bounds on the derivatives stated in the lemma.
There are two cases: $|\widetilde{w}_i-w_i|\leq 2S$ and $|\widetilde{w}_i-w_i| > 2S$.

\smallskip

\noindent
{\bf Case 1:}
$|\widetilde{w}_i-w_i|\leq 2S$. \\
Then $|x_i-w_i| \leq |x_i-\widetilde{w}_i| + |\widetilde{w}_i-w_i| \leq S + 2S = 3S$. And
\begin{eqnarray*}
\frac{d\,\phi_i}{d\,t} &\leq & -c_1\lambda p_i S + (1+c_2) p_i |K|\\
& \leq &  -c_1\lambda p_i S + 3(1+c_2) p_i \kappa S\\
&=& -(c_1\lambda - 3(1+c_2)\kappa) p_i S\\
& \leq &  -\frac{c_1\lambda - 3(1+c_2)\kappa}{1+2c_2} p_i (S+c_2|\widetilde{w}_i-w_i|)\\
&\leq & -\frac{c_1\lambda - 3(1+c_2)\kappa}{1+2c_2} \phi_i\\
& \leq & -\frac{\kappa (1+c_2)}{1+2c_2}\phi_i.
\end{eqnarray*}

\noindent
{\bf Case 2:}
$|\widetilde{w}_i-w_i| > 2S$.\\
Then $|\widetilde{w}_i-w_i| \leq |\widetilde{w}_i-x_i| + |x_i-w_i| \leq S + |x_i-w_i| < \frac{|\widetilde{w}_i-w_i|}{2} + |x_i-w_i|$ and hence $|\widetilde{w}_i-w_i| < 2 |x_i-w_i|$. Note that $\mbox{sign}(x_i-w_i) = \mbox{sign}(\widetilde{w}_i-w_i)$, so $-c_2 K\cdot \mbox{sign}(\widetilde{w}_i-w_i) = -c_2\kappa |x_i-w_i|$.
\begin{eqnarray*}
\frac{d\,\phi_i}{d\,t} &\leq & -c_1\lambda p_i S + \kappa p_i |x_i-w_i| - c_2\kappa p_i |x_i-w_i|\\
&=& -c_1\lambda p_i S -(c_2-1)\kappa p_i |x_i-w_i|\\
&<& -c_1\lambda p_i S -\frac{c_2-1}{2}\kappa p_i |\widetilde{w}_i-w_i|\\
&<& -\frac{c_2-1}{2c_2}\kappa p_i (c_2 S+ c_2|\widetilde{w}_i-w_i|)\\
&\leq & -\frac{\kappa(c_2-1)}{2 c_2}\phi_i.
\end{eqnarray*}
\end{proof}

The remaining task is to show that $\phi$ is non-increasing when a price update occurs.
This entails showing that the decrease to the term $p_i\cdot \mbox{span}\{\bar{x}_i, x_i, \widetilde{w}_i\}$
is at least as large as the increase to the term $p_i c_2 |\widetilde{w}_i - w_i|$ plus the value of the
term $p_i c_1 \lambda (t-\tau_i) |\bar{x}_i - \widetilde{w}_i|$, which gets reset to 0.


\begin{lemma}\label{lem:ongoing-price-change-1}
Let $\beta = 2\alpha - \gamma$ and
suppose that $\beta > 0$ and the following conditions hold:
\begin{enumerate}
\item $f \le (1-2\delta)^{-1/\beta} \le (2-\delta)^{1/\gamma}$ since the last price update to $p_i$;
\item $\bar{\alpha} + c_1 + c_2 \delta \leq 1-\delta$;
\item $(1+\delta+c_1+c_2\delta)\lambda \leq 1$,
\end{enumerate}
where $\bar{\alpha} := 2(1-\alpha)(1-2\delta)^{-\gamma / \beta}
\left(1+\frac{\alpha \lambda(1+\delta)}{2(1-\lambda(1+\delta))}\right)$.
Then, when a price $p_i$ is updated using rule \eqref{eq:tat-rule-warehouse},
the value of $\phi$ stays the same or decreases.
\end{lemma}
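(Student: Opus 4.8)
The plan is to bound the total change $\Delta\phi=\sum_j\Delta\phi_j$ caused by one application of rule~\eqref{eq:tat-rule-warehouse} to a single price $p_i$ and show it is non-positive. By the symmetry of $\phi$ and of the rule, assume WLOG that $\bar{z}_i:=\bar{x}_i-\widetilde{w}_i\ge 0$, so the update sets $p_i'=(1+\mu)p_i$ with $\mu=\lambda(t-\tau_i)\bar{z}_i/w_i$. First I would cash in hypothesis~1: since the market is $f$-bounded with $f\le(2-\delta)^{1/\gamma}$ throughout $[\tau_i,t]$, Lemma~\ref{lem:complementary-demand-bound-prelim} gives $x_i,\bar{x}_i\le(2-\delta)w_i$, while $|\widetilde{w}_i-w_i|=|\kappa(v_i-v_i^*)|\le\delta w_i$; hence $0\le\bar{z}_i\le w_i$, so the $\min$ in the rule equals $\bar{z}_i/w_i$, $\mu\le\lambda$, and we get the bookkeeping identity $\lambda(t-\tau_i)|\bar{x}_i-\widetilde{w}_i|=\mu w_i$. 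I would then list what changes at the update: $p_i\mapsto(1+\mu)p_i$; the (piecewise-constant) demand $x_i\mapsto x_i'$ and the averaging window resets, so $\bar{x}_i\mapsto x_i'$; the age $t-\tau_i\mapsto 0$, zeroing the $c_1$-term in $\phi_i$; and $w_i,\widetilde{w}_i$ are unchanged (the warehouse content evolves continuously).

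Next I would peel off the cross-good effect. For $j\ne i$ only $x_j$ changes at the update, and downward, as every good is a complement of $G_i$; since $\mbox{span}$ is $1$-Lipschitz, $\Delta\phi_j\le p_j|\Delta x_j|$. Using that the buyers spend their budgets in full (as in the CES and nested-CES instances of interest), $\sum_k p_kx_k=M$ both before and after, so $\sum_{j\ne i}p_j|\Delta x_j|=\Delta(p_ix_i)=:\Delta s_i$; combining $x_i'\le x_i/(1+\mu)^\alpha$ (Lemma~\ref{lem:complementary-demand-bound-prelim}(c)) with $(1+\mu)^{1-\alpha}-1\le(1-\alpha)\mu$ yields $0\le\Delta s_i\le(1-\alpha)\mu\,p_ix_i$. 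Dividing $\Delta\phi_i+\sum_{j\ne i}\Delta\phi_j\le 0$ by $p_i$ and using $|\widetilde{w}_i-w_i|\le\delta w_i$, it then suffices to prove the single-good inequality
$$(1+\mu)\,|x_i'-\widetilde{w}_i|\;+\;(c_1+c_2\delta)\,\mu w_i\;+\;(1-\alpha)\,\mu x_i\;\le\;\mbox{span}\{\bar{x}_i,x_i,\widetilde{w}_i\}.$$

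The crux is a case analysis on the ordering of $\bar{x}_i,x_i,\widetilde{w}_i$ (with $\bar{x}_i\ge\widetilde{w}_i$). In the representative case $\widetilde{w}_i\le\bar{x}_i\le x_i$ one has $\mbox{span}=x_i-\widetilde{w}_i$ and $x_i'\ge\widetilde{w}_i$ (no overshoot, since $x_i'\ge x_i/(1+\mu)$ by $\Delta s_i\ge 0$ and $\lambda(1+\delta)\le 1$), so $|x_i'-\widetilde{w}_i|\le x_i/(1+\mu)^\alpha-\widetilde{w}_i=(x_i-\widetilde{w}_i)-x_i(1-(1+\mu)^{-\alpha})$; substituting, using the convexity bound $(1+\mu)-(1+\mu)^{1-\alpha}\ge\alpha\mu$, then $\widetilde{w}_i\ge(1-\delta)w_i$ and $x_i\le f^\gamma w_i\le(1-2\delta)^{-\gamma/\beta}w_i$, reduces the target to $c_1+c_2\delta+2(1-\alpha)(1-2\delta)^{-\gamma/\beta}\le 1-\delta$, which is hypothesis~2 (the extra factor $1+\frac{\alpha\lambda(1+\delta)}{2(1-\lambda(1+\delta))}$ in $\bar\alpha$ being spare slack here). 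The case $\widetilde{w}_i\le x_i\le\bar{x}_i$ is strictly easier (the span is larger by $\bar{x}_i-x_i$). The one genuinely different case is $x_i<\widetilde{w}_i\le\bar{x}_i$: here $\mbox{span}=\bar{x}_i-x_i=\bar{z}_i+(\widetilde{w}_i-x_i)$ and the update discards the $\bar{z}_i$-portion while pushing $x_i$ a little further below $\widetilde{w}_i$; bounding that further drop by $x_i-x_i'\le\mu x_i\le\lambda f^\gamma\bar{z}_i$ and using $\mu\le\lambda$ reduces the target to $\lambda$ being small enough --- precisely hypothesis~3 --- with the $\frac{\alpha\lambda(1+\delta)}{2(1-\lambda(1+\delta))}$ term in $\bar\alpha$ absorbing the $(1+\mu)$ prefactor on the new span and the discrepancy between $\bar{z}_i$ and the instantaneous surplus.

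I expect the main obstacle to be exactly this case analysis, and within it the bookkeeping around "overshoot": because $\mu$ is driven by the time-averaged surplus $\bar{z}_i$ rather than by $x_i-\widetilde{w}_i$, and because it is $\bar{x}_i$ --- not $x_i$ --- that resets, one must verify in every configuration that $(1+\mu)|x_i'-\widetilde{w}_i|$ drops below $\mbox{span}\{\bar{x}_i,x_i,\widetilde{w}_i\}$ by enough to pay simultaneously for the $c_2$-growth, the reset $c_1$-term, and the cross-good $\Delta s_i$. Ruling out damaging overshoots requires both the Phase-2 demand bounds (to cap $\bar{z}_i\le w_i$ and $x_i\le f^\gamma w_i$, and to limit how far the averaged demand can drift from the live demand over a one-day window in which the other prices move by only a factor $e^{O(\lambda)}$) and the smallness of $\lambda$; this is where hypotheses~1--3 and the constant $\bar\alpha$ get fully consumed. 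The cross-good part, by contrast, is the routine $O((1-\alpha)\mu\,p_ix_i)$ correction handled by budget balance.
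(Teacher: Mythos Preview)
Your overall plan matches the paper's: bound the cross-good contribution via budget balance and complementarity, then do a case split on the ordering of $\bar{x}_i,x_i,\widetilde{w}_i$ for the own-good term. The ``toward'' case ($\widetilde{w}_i\le\bar{x}_i\le x_i$, no overshoot) goes through as you sketch, and condition~2 is exactly what is consumed there (with the extra factor in $\bar\alpha$ being genuine slack in the price-\emph{increase} direction; it is needed only for the price-\emph{decrease} bound on $|\Delta s_i|$).

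There is, however, a real gap in your ``moving away'' case $x_i<\widetilde{w}_i\le\bar{x}_i$. By replacing the cross-good contribution $\Delta s_i$ with the upper bound $(1-\alpha)\mu\,p_ix_i$ \emph{before} the case split, you destroy a cancellation that is essential for landing on condition~3 exactly. Write the new own-good span as $p_i'(\widetilde{w}_i-x_i')=p_i'\widetilde{w}_i-s_i'$ and the old span in this configuration as $p_i(\bar{x}_i-x_i)=p_i\bar{z}_i+p_i\widetilde{w}_i-s_i$; their difference is $\widetilde{w}_i\,\Delta p_i-\Delta s_i-p_i\bar{z}_i$. Adding the cross-good change (which is exactly $+\Delta s_i$ here, since $S_{inc}=0$ and $S_{dec}=\Delta s_i$) makes the $\Delta s_i$ terms cancel, leaving $\widetilde{w}_i\,\Delta p_i-p_i\bar{z}_i$, and then $\widetilde{w}_i\le(1+\delta)w_i$ together with $|\Delta p_i|=\mu p_i$ and $\mu w_i=\lambda\Delta t\,\bar{z}_i$ gives precisely $(1+\delta+c_1+c_2\delta)\lambda\le1$, i.e.\ condition~3. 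If instead you keep only the bound $(1-\alpha)\mu x_i$ for the cross term and separately estimate $x_i-(1+\mu)x_i'$, you are left needing
\[
\lambda\bigl[(1+\delta)+(c_1+c_2\delta)+(1-\alpha)(1+\delta)\bigr]\le 1,
\]
which is strictly stronger than condition~3 and does not follow from the stated hypotheses (take $\alpha$ just above $1/2$). Your appeal to the $\frac{\alpha\lambda(1+\delta)}{2(1-\lambda(1+\delta))}$ factor in $\bar\alpha$ is misplaced: that factor enters only via the bound on $|\Delta s_i|$ in the price-decrease branch of the ``toward'' case and plays no role in the ``moving away'' case. The fix is simple: keep the cross-good term as $\Delta s_i$ through the case split and use the identity $p_i'x_i'=p_ix_i+\Delta s_i$ to expose the cancellation; this is exactly what the paper does via its Lemma~\ref{lem:change-PF}.
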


If $\delta$ and $\lambda$ are small, then
\begin{eqnarray*}
\bar{\alpha} &=& (2-2\alpha)\left(1+O\left(\frac{\gamma\delta}{\beta}\right)\right)\left(1+O(\lambda)\right) \\
&=& 2-2\alpha + O\left(\frac{\gamma\delta}{\beta}\right) + O(\lambda),
\end{eqnarray*}
and Condition (2) becomes $2-2\alpha + c_1 + O(\delta(1+\gamma/\beta)) + O(\lambda) \leq 1$,
which is satisfied on setting $c_1,\lambda,\delta(1+\gamma/\beta) = O(2\alpha-1)$.
The third condition is then satisfied by having $\lambda = O(1)$.
More precise bounds are given later.

To demonstrate a continued convergence of the prices during Phase 2, we need
to relate the prices to the potential $\phi$.
We show the following bound.

\begin{theorem}
\label{th:phase2-comp}
Suppose that the conditions in Lemmas \ref{lem:PF-change-no-price-update} and \ref{lem:ongoing-price-change-1} hold. Let $M= \sum_j b_j$ be the daily supply of money to all the buyers.
Then, in Phase 2, the prices become $(1 + \eta)$-bounded after
$O\left(\frac{1}{\kappa} \log \frac{M}{\eta \min_i w_i p_i^*}\right)$
days.
\end{theorem}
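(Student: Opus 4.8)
The plan is to combine the two preceding lemmas to get exponential decay of the potential $\phi$, and then to show that $\phi$ being small forces the prices to be close to equilibrium. First I would observe that Lemma~\ref{lem:PF-change-no-price-update} gives $\frac{d\phi}{dt}\le -\Theta(\kappa)\phi$ between price updates (taking the worse of the two constants, say $c := \min\{\kappa(1+c_2)/(1+2c_2),\kappa(c_2-1)/(2c_2)\} = \Theta(\kappa)$), while Lemma~\ref{lem:ongoing-price-change-1} guarantees that $\phi$ does not increase at a price update. Since every price is updated at least once per day, the updates are a countable set of instants, so integrating the differential inequality on each inter-event interval and using right-continuity/monotonicity at the events yields $\phi(t)\le e^{-ct}\phi(0)$. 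This is exactly the amortized-analysis scheme advertised in the introduction: the $-c_1\lambda(t-\tau_i)|\bar x_i-\widetilde w_i|$ term pays for the continuous drift, and the span term pays for the discrete jumps.

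Next I would turn the bound on $\phi$ into a bound on $f-1$. The key point is that when $\phi$ is small, each $\phi_i$ is small (they all have the same sign once $c_2>1$ and $c_1\le 1$, since $\mbox{span}\{\bar x_i,x_i,\widetilde w_i\} \ge |\bar x_i - \widetilde w_i| \ge $ the magnitude of the negative middle term up to the $c_1\lambda(t-\tau_i)\le c_1\le 1$ factor — more precisely $\phi_i \ge p_i[(1-c_1\lambda(t-\tau_i))\,\mbox{span} + c_2|\widetilde w_i-w_i|] \ge 0$). From $\phi_i$ small I get $p_i\cdot\mbox{span}\{\bar x_i,x_i,\widetilde w_i\}$ small and $p_i|\widetilde w_i-w_i|$ small, hence $p_i|x_i-w_i|$ small (via $|x_i-w_i|\le \mbox{span} + |\widetilde w_i - w_i|$ when $\widetilde w_i$ lies between or outside, which one checks by cases as in the previous proof). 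So the ``misspending'' $\sum_i p_i|x_i-w_i|$ is at most $\phi$ up to a constant factor. Then I would invoke a standard fact (the same one underlying the one-time-market analysis of~\cite{CF2008}, and recoverable from Lemma~\ref{lem:f-bounded-demand-bound-1}): if some $p_i$ differs from $p_i^*$ by a factor $1+\eta$, then $|x_i-w_i|\ge \Theta(\eta\beta)w_i$ in a market of complements, so $p_i|x_i-w_i| \ge \Theta(\eta\beta)\,w_i p_i \ge \Theta(\eta\beta)\,w_i p_i^*$ (using $p_i\ge p_i^*/2$ in Phase~2). Contrapositively, once the misspending drops below $\Theta(\eta\beta)\min_i w_i p_i^*$, the prices are $(1+\eta)$-bounded.

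Finally I would assemble the time bound. At the start of Phase~2, $\phi(0) = O(\sum_i p_i^{\text{start}} w_i \cdot d(f_{\mbox{\tiny II}})) = O(M)$, since $\sum_i p_i^{\text{start}} w_i$ is within a constant of $\sum_i p_i^* w_i^* = M$ (all money is spent; $p_i^{\text{start}}\le 2p_i^*$ and $x_i,\bar x_i,\widetilde w_i \le (2-\delta)w_i$), and $d(f_{\mbox{\tiny II}}) = O(1)$. We want $\phi(t) \le e^{-ct}\phi(0) \le \Theta(\eta\beta\min_i w_i p_i^*)$, which holds once $t = \Theta\!\left(\frac{1}{c}\log\frac{M}{\eta\beta\min_i w_i p_i^*}\right) = O\!\left(\frac{1}{\kappa}\log\frac{M}{\eta\min_i w_i p_i^*}\right)$, absorbing the $\beta\le 1$ and $\log(1/\beta)$ terms into the $O$ (or, if one is careful, noting $\beta$ is a fixed market parameter). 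I expect the main obstacle to be the bookkeeping in the second step: showing cleanly that $\phi_i \ge \Theta(1)\,p_i\,[\mbox{span}\{\bar x_i,x_i,\widetilde w_i\} + |\widetilde w_i - w_i|]$ and that this controls $p_i|x_i - w_i|$, and then pinning down the exact constant relating misspending to $\eta\beta\min_i w_i p_i^*$ — this requires re-running the elasticity estimate of Lemma~\ref{lem:f-bounded-demand-bound-1} in the regime $f = 1+\eta$ close to $1$ and checking that the warehouse term $\kappa(v_i-v_i^*)\le\delta w_i$ does not swamp the demand imbalance when $\eta$ could itself be small.
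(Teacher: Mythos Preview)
Your approach is correct and is essentially the same as the paper's: combine Lemmas~\ref{lem:PF-change-no-price-update} and~\ref{lem:ongoing-price-change-1} for exponential decay of $\phi$, bound the initial $\phi$ by $O(M)$ using the Phase~2 guarantees $p_i\le 2p_i^*$ and $x_i\le 2w_i$, and then argue that once $\phi$ drops below $\Theta(\eta)\min_i w_i p_i^*$ the prices must be $(1+\eta)$-bounded. The paper's proof is a four-sentence sketch that simply asserts this last implication, whereas you actually justify it via $\phi_i \ge \Theta(1)\,p_i|x_i-w_i|$ together with the Phase~1 elasticity bound (Lemma~\ref{lem:f-bounded-demand-bound-1}) applied to the extremal good; the resulting extra $\beta$ inside the logarithm is harmlessly absorbed as a market constant, matching the paper's stated bound.
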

\begin{proof}
During Phase 2, $p_i \le 2 p_i^*$ and $x_i \le 2w_i$.
Consequently,
$\phi = O(\sum_i p^*_i w_i) = O(M)$.
Once $\phi$ has shrunk to $\eta \min_i p_i^*w_i$,
we know that all prices are $(1 + \eta)$-bounded.
Finally, Lemmas~\ref{lem:PF-change-no-price-update} and~\ref{lem:ongoing-price-change-1}
imply that $\phi$ shrinks by a $(1-\Theta(\kappa))$ factor each day.
\end{proof}

If the updates in Phase 2 start with an initial value for the potential
of $\phi_{\mbox{\tiny I}} \ll M$,
then in the bound on the number of days one can replace $M$ with $\phi_{\mbox{\tiny I}}$.

Summing the bounds from Theorems \ref{th:phase-1-comps} (note that $\delta \leq \beta$) and \ref{th:phase2-comp}, yields
Theorem~\ref{th:comp-overall}, modulo showing that Equations~\eqref{eqn:kappa-cond1}--\eqref{eqn:lambda-cond1} suffice to ensure
the conditions in Theorems \ref{th:phase-1-comps} and \ref{th:phase2-comp}.

\subsection{Bounds on the Warehouse Sizes}

\paragraph{Phase 1}
Recall that $f_{\mbox{\tiny I}}$ is the initial value of $f$.
Define $d(f) = \max_i x_i/w_i$ when the prices are $f$-bounded.
In a market of complementary goods, $d(f) \le f^{\gamma}$.
We can show:

\begin{lemma}
\label{lem:phase1-war-bdd}
In Phase 1, the total net change to $v_i$ is bounded by
$O(\frac{w_i}{\lambda} d(f_{\mbox{\tiny I}}) + \frac{w_i}{\lambda\beta} d(2) \log \frac{\beta}{\delta})$.
\end{lemma}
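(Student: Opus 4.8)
The plan is to track the net change to $v_i$ over the course of Phase 1 by summing the daily excess demand $z_i = x_i - w_i$ (integrated against time, but since prices change at least daily and $x_i$ is bounded, a day-by-day summation suffices). The key observation is that on any given day, $|z_i| \le \max\{d(f) w_i - w_i, w_i\} \le d(f) w_i$ when the prices are $f$-bounded, so the contribution of a single day is $O(w_i d(f))$. The danger is that Phase 1 may last $\Omega\big(\frac{1}{\lambda}\ln f_{\mbox{\tiny I}} + \frac{1}{\lambda\beta}\ln\frac{1}{\delta}\big)$ days (Theorem~\ref{th:phase-1-comps}), so naively multiplying would give a bound like $O(\frac{w_i}{\lambda\beta} d(f_{\mbox{\tiny I}})\ln\frac{1}{\delta})$, which is too weak because it pairs the large factor $d(f_{\mbox{\tiny I}})$ with the large day-count $\frac{1}{\lambda\beta}\ln\frac{1}{\delta}$. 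The fix is to split Phase 1 into its two sub-phases as in the proof of Theorem~\ref{th:phase-1-comps}: the sub-phase where $f > 2^{1/\beta}$, lasting $n_1 = O(\frac{1}{\lambda}\ln f_{\mbox{\tiny I}})$ days, and the sub-phase where $1 + 2\delta/\beta \le f \le 2^{1/\beta}$, lasting $n_2 = O(\frac{1}{\lambda\beta}\ln\frac{1}{\delta})$ days (absorbing the $\ln\frac1\beta$ term since $\delta/\beta \le 1$).

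First I would bound the contribution of the first sub-phase. Here $f$ shrinks geometrically, by Lemma~\ref{lem:conv-one-time-market-1}(1): $f$ drops by a factor $(1-\lambda/2)$ each day. So on the $k$-th day from the end of this sub-phase, $f \le 2^{1/\beta}(1-\lambda/2)^{-k}$ roughly, hence $d(f) \le f^\gamma$ is also geometrically decreasing in $k$ (up from $d(2^{1/\gamma\cdot\gamma}) = d(2)$-ish at the end). Summing the geometric series $\sum_k w_i d(f_{\mbox{\tiny I}})(1-\lambda/2)^{\gamma k}$ — wait, more carefully, the per-day $|z_i|$ is at most $w_i d(f)$ where $f$ on that day is at most $f_{\mbox{\tiny I}}(1-\lambda/2)^{j}$ for $j$ days elapsed; summing $\sum_j w_i [f_{\mbox{\tiny I}}(1-\lambda/2)^j]^\gamma$ is a geometric sum dominated by its first term times $O(1/(\lambda\gamma)) = O(1/\lambda)$ (using $\gamma \ge 1$), giving $O(\frac{w_i}{\lambda} f_{\mbox{\tiny I}}^\gamma) = O(\frac{w_i}{\lambda} d(f_{\mbox{\tiny I}}))$. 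Then for the second sub-phase, $f$ stays bounded between $1$ and $2^{1/\beta}$, so $d(f) \le (2^{1/\beta})^\gamma$; but we want the cleaner bound $d(2)$ — here note $2^{\gamma/\beta}$ need not be $O(d(2))$, so I would instead observe that throughout the second sub-phase $x_i \le 2 w_i$ (this is essentially the defining property of $f_{\mbox{\tiny II}}$, chosen so that $x_i, \bar x_i \le (2-\delta)w_i$), hence $|z_i| \le w_i \le d(2) w_i$ per day, and multiplying by the $n_2 = O(\frac{1}{\lambda\beta}\ln\frac{1}{\delta})$ day-count gives $O(\frac{w_i}{\lambda\beta} d(2)\ln\frac{1}{\delta})$, as claimed. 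Adding the two contributions yields the stated bound.

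The main obstacle I anticipate is getting the geometric summation in the first sub-phase to telescope to the right answer rather than losing a factor of $1/\beta$: one must use that $f$ shrinks by a constant multiplicative factor per day \emph{in the regime $f^\beta \ge 2$}, so that $d(f) = f^\gamma$ also shrinks by a constant factor per day, independent of $\beta$ — it is precisely Lemma~\ref{lem:conv-one-time-market-1}(1), not part (2), that is being invoked here, and that is the reason the two sub-phases must be handled separately rather than using the uniform bound of Theorem~\ref{th:phase-1-comps}. A secondary technical point is handling the continuous-time integral $\int z_i\,dt$ versus discrete daily sums and the fact that different goods update at different times; but since each price updates at least once per day and demands are bounded whenever prices are $f$-bounded, the net change over any day is within a constant factor of $w_i d(f)$ for the relevant $f$, and the bookkeeping goes through. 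Also, one should confirm that the "net change to $v_i$" is genuinely bounded by the sum of per-day $|z_i|$ (worst case, all of one sign), which is immediate.
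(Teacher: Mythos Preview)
Your approach mirrors the paper's: split Phase~1 at $f=2^{1/\beta}$; in the first sub-phase, use the per-day $(1-\Theta(\lambda))$ shrinkage of $f$ to sum the daily changes $w_i\,d(f)$ as a geometric series dominated by its first term, giving $O\bigl(\frac{w_i}{\lambda}d(f_{\mbox{\tiny I}})\bigr)$; in the second sub-phase, bound the per-day change by a constant and multiply by the $O\bigl(\frac{1}{\lambda\beta}\ln\frac1\delta\bigr)$ day count. The structure and the first sub-phase match the paper and are handled correctly.

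There is, however, a concrete error in your treatment of the second sub-phase. You assert that ``throughout the second sub-phase $x_i \le 2w_i$ (this is essentially the defining property of $f_{\mbox{\tiny II}}$).'' But $f_{\mbox{\tiny II}}=(1-2\delta)^{-1/\beta}\le (2-\delta)^{1/\gamma}$ is the bound that holds only at the \emph{end} of Phase~1; the second sub-phase you defined begins at $f=2^{1/\beta}$, where $d(f)$ may be as large as $2^{\gamma/\beta}$ --- precisely the quantity you correctly flagged one sentence earlier as not being $O(d(2))$. So your proposed workaround does not close the gap you yourself identified. The paper's proof at this point simply asserts ``a further possible change of $d(2)$ to $v_i$ per day'' without further argument; thus neither account fully explains why $d(2)$, rather than $d(2^{1/\beta})$, is the correct per-day bound in this region.
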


\paragraph{Phase 2}
Because Phase 2 may last $\chi(1/\kappa)$ days,
we cannot simply use a bound on its duration to bound the capacity needed for warehouse $i$,
for its capacity is $O(\frac{w_i}{\kappa})$,
which could be smaller than the bound based on the duration of Phase 2.

Instead, we observe that in Phase 2 the price adjustments are always
strictly within the bounds of $1 \pm \lambda\Delta t$,
where $\Delta t$ is the time since the previous update to $p_i$.
If $v_i \le \chi_i/2 - bw_i$, then an update of $p_i$ by a factor $1 -\lambda \mu \Delta t$,
implies that $w_i - \bar{x}_i \ge (\mu w_i  + \kappa b w_i)\Delta t$,
and $(w_i - \bar{x}_i)\Delta t$ is exactly the amount by which $v_i$ decreases between these
two price updates.
As the prices are $((1 - 2\delta)^{-\beta})$-bounded, the difference between
the price increases and decreases is bounded, and consequently, over time
the change to the warehouse stock will be dominated by the sum of the
$\kappa b w_i \Delta t$ terms.
This is made precise in the following lemma
(an analogous result applies if $v_i \ge \chi_i/2 + bw_i)$.

\begin{lemma}
\label{lem:war-too-empt-impr}
Let $a_1, a_2, k > 0$.
Suppose that
$v_i \leq \vis - a_1 w_i $ and that $\kappa a_1 \ge 4 \lambda^2$.
Let $\tau$ be the time of a price update of $p_i$ to
$p_{i,1}$.
Suppose that henceforth $ p_{i} \leq e^{\bar{f}} p_{i,1} $
for some $\bar{f} \geq 0$.
If $k \ge \frac{2}{\kappa a_1}(\bar{f} + a_2)$, then
by time $\tau + (k+1)$ the warehouse
stock will have increased to more than $\vis - a_1 w_i $,
or by at least $a_2 w_i$, whichever is the lesser increase.
\end{lemma}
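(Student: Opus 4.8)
The plan is to track the evolution of $v_i$ over the interval $[\tau, \tau+(k+1)]$ by decomposing the net warehouse change into the contributions from the successive price-update subintervals. Recall that between two consecutive updates of $p_i$, say separated by a gap $\Delta t$, the warehouse stock changes by exactly $(\bar{x}_i - w_i)\Delta t$ (the integrated excess demand), and the price update rule~\eqref{eq:tat-rule-warehouse} ties the \emph{sign and magnitude} of the price change to $\bar{z}_i/w_i = (\bar{x}_i - w_i - \kappa(v_i-v_i^*))/w_i$. The key algebraic identity to extract is: if the update at the end of such a subinterval multiplies $p_i$ by $1 + \lambda\mu\Delta t$ with $\mu \in [-1,1]$ (or, after a $\min$ truncation, $\mu$ capped at $1$), then $\bar{x}_i - w_i = \mu w_i + \kappa(v_i - v_i^*)$ up to the truncation, and hence while $v_i \le v_i^* - a_1 w_i$ we have $v_i - v_i^* \le -a_1 w_i$, so a price decrease ($\mu<0$) forces $w_i - \bar{x}_i \ge (|\mu| + \kappa a_1)w_i$, i.e.\ the warehouse \emph{gains} at rate at least $\kappa a_1 w_i$ plus $|\mu| w_i$ per unit time, while a price increase ($\mu>0$, which happens only if $\bar z_i > 0$) can only cause a loss bounded in terms of $\mu$.

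Next I would sum these contributions. Over $[\tau,\tau+(k+1)]$ the total duration is $k+1$, so the accumulated ``$\kappa a_1 w_i$ per unit time'' term alone contributes at least $\kappa a_1 (k+1) w_i$ to the warehouse increase, \emph{provided} every update in the interval is a price decrease. The price increases are the enemy: they both remove the $\kappa a_1$ gain for that subinterval and actively drain the warehouse. But here the hypothesis $p_i \le e^{\bar f}p_{i,1}$ throughout, together with the fact that $p_i$ is $((1-2\delta)^{-1/\beta})$-bounded and stays $\le 2p_i^*$ in Phase~2, caps the \emph{total} multiplicative price increase that can occur: the sum of the positive log-price increments over the whole interval is at most $\bar f + O(\lambda)$ (the $O(\lambda)$ coming from the last update possibly overshooting). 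Since each unit of log-price-increase corresponds, via $\ln(1+\lambda\mu\Delta t) \approx \lambda\mu\Delta t$ and $\bar x_i - w_i \le \mu w_i + \kappa(v_i-v_i^*) \le \mu w_i$ (as $v_i \le v_i^*$), to at most a bounded warehouse drain, the total drain from all increases is $O(\bar f / \lambda \cdot w_i)$ — wait, more carefully: a subinterval with price multiplier $1+\lambda\mu\Delta t$ drains at most $\bar x_i\Delta t - w_i\Delta t \le \mu w_i \Delta t$, and $\mu\Delta t \le \frac{1}{\lambda}\ln(1+\lambda\mu\Delta t)$, so the total drain is at most $\frac{w_i}{\lambda}\sum (\text{positive log-increments}) \le \frac{w_i}{\lambda}(\bar f + O(\lambda))$. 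Balancing $\kappa a_1 (k+1) w_i$ against this, and using the daily-update guarantee to control how the $\Delta t$'s partition $[\tau,\tau+(k+1)]$, gives a net increase of at least $\kappa a_1(k+1)w_i - \frac{w_i}{\lambda}\bar f - O(w_i)$; with $k \ge \frac{2}{\kappa a_1}(\bar f + a_2)$ and the slack condition $\kappa a_1 \ge 4\lambda^2$ (which ensures $\frac{1}{\lambda}\bar f \le \frac{\kappa a_1}{4\lambda^2}\cdot\lambda\bar f \le \ldots$, absorbing the $\frac1\lambda$ loss into the $\kappa a_1 k$ budget) this is at least $a_2 w_i$, unless the warehouse already crossed the $v_i^* - a_1 w_i$ threshold earlier, in which case there is nothing to prove.

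Finally I would handle the bookkeeping subtlety that the threshold $v_i \le v_i^* - a_1 w_i$ might fail partway through: the argument above only needs the lower bound on the gain rate \emph{while} $v_i$ is below the threshold, and once it rises above we are done; so a standard ``first time the threshold is breached'' stopping-time argument cleanly splits the analysis. The main obstacle I anticipate is precisely the interplay between the $\frac1\lambda$ loss factor from converting log-price-increments to warehouse drains and the requirement that $k$ only needs to be $\Theta(\frac{1}{\kappa a_1}(\bar f + a_2))$ rather than $\Theta(\frac{1}{\lambda\kappa a_1}\bar f)$ — this is exactly what the auxiliary hypothesis $\kappa a_1 \ge 4\lambda^2$ is there to rescue, and getting the constants in that trade-off to line up is the delicate part of the calculation.
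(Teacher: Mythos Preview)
Your setup is right: over a subinterval of length $\Delta t$ ending in a price update by a factor $1+\mu\Delta t$, the warehouse increase is exactly $-(\bar x_i - w_i)\Delta t = (-\mu - \kappa(v_i-v_i^*)/w_i)\,w_i\Delta t \ge (-\mu + \kappa a_1)w_i\Delta t$ whenever $v_i \le v_i^* - a_1 w_i$. But from here your plan takes a wrong turn. Splitting into ``increasing'' and ``decreasing'' subintervals and trying to bound $\sum_{\mu_i>0}\mu_i\Delta t_i$ alone is the source of the $1/\lambda$ overhead you worry about---and that overhead is real and not curable by $\kappa a_1 \ge 4\lambda^2$. The hypothesis $p_i \le e^{\bar f}p_{i,1}$ controls only the \emph{signed} sum of log-increments, not the positive part; your inequality $\mu\Delta t \le \tfrac{1}{\lambda}\ln(1+\lambda\mu\Delta t)$ is also in the wrong direction for $\mu>0$.

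The paper's argument avoids the split entirely. The lower bound $(-\mu + \kappa a_1)w_i\Delta t$ holds \emph{uniformly} regardless of the sign of $\mu$, so the total warehouse increase over $l$ updates spanning $k$ days is at least $\kappa a_1 k\,w_i - w_i\sum_j \mu_j\Delta t_j$. Now bound the signed sum directly: from $1+x \ge e^{x-2x^2}$ (valid for $|x|\le \tfrac12$) and $|\mu_j\Delta t_j|\le \lambda$, we get $e^{\bar f} \ge \prod_j(1+\mu_j\Delta t_j) \ge \exp\bigl(\sum_j(\mu_j\Delta t_j - 2\lambda^2\Delta t_j)\bigr)$, hence $\sum_j \mu_j\Delta t_j \le \bar f + 2\lambda^2 k$. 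The condition $\kappa a_1 \ge 4\lambda^2$ is there precisely to absorb this \emph{quadratic} error term $2\lambda^2 k$ into half of $\kappa a_1 k$, giving a net increase of at least $(\tfrac12\kappa a_1 k - \bar f)w_i$, which is $\ge a_2 w_i$ once $k \ge \tfrac{2}{\kappa a_1}(\bar f + a_2)$. No $1/\lambda$ ever appears.
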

\begin{proof}
Suppose that $v_i \leq \vis - a_1 w_i$ throughout (or the result holds trivially).

Each price change by a multiplicative $(1 + \mu \Delta t)$ is associated
with a target excess demand $\bar{z}_i = \bar{x}_i -w_i - \kappa (v_i -v_i^*)$,
where $\zbi = \mu w_i$.
Furthermore, the increase to the warehouse stock since the previous price
update is exactly $-(\xbi - w_i)\Delta t = [- \mu w_i - \kappa (v_i -v_i^*)]\Delta t \ge
(-\mu + \kappa a_1)w_i  \Delta t$.

Note that $1 + x \geq e^{x -2 x^2}$ for $|x| \leq \frac 12$.
Thus $1 + \mu \Delta t \ge e^{\mu \Delta t - 2\lambda^2 \Delta t}$
(recall that all price changes are bounded by $1 \pm \lambda \Delta t$).

Suppose that over the next $k$ days there are $l-1$ price changes;
let the next $l$ price changes be by $1 + \mu_1 \Delta t_1, 1 + \mu_2 \Delta t_2,
\cdots, 1 + \mu_l \Delta t_l$.
Note that the total price change satisfies
$e^{\bar{f}} \ge \Pi_{1\le i \le l} (1 + \mu_i \Delta t_i)
\ge e^{\sum_{1\le i \le l} (\mu_i \Delta t_i - 2 \lambda^2 \Delta t_i))}$.
Thus $\sum_{1\le i \le  l} \Delta t_i(\mu_i   - 2 \lambda^2 ) \le \bar{f}$.

We conclude that when the $l$-th price change occurs,
the warehouse stock will have increased by at least
$\sum_{1\le i \le l}(-\mu_i  + \kappa a_1)w_i \Delta t_i
\ge -  \bar{f} + k(-2 \lambda^2 + \kappa a_1)w_i
\ge (-  \bar{f} + \frac 12 k\kappa a_1)w_i$,
both inequalities following because $4\lambda^2 \le \kappa a_1$.
If $k \ge \frac{2}{\kappa a_1}(\bar{f} + a_2)$, then
the warehouse stock increases by at least $a_2 w_i$.\end{proof}

\noindent
{\bf Comment.}
The relationship between the change in capacity and the size of the price update
is crucial in proving this lemma, and this depends on having the factor $\Delta t$
in the price update rule.

\smallskip

To complete the analysis of Phase 2, we view each warehouse as having
8 equal sized zones of fullness, with the goal being to bring
the warehouse into its central four zones. The role of the outer zones is to provide a buffer
to cope with initial price imbalances.

\begin{definition}
\label{def:wrhs-zones}
The four zones above the half way target are called the \emph{high} zones, and the other four
are the \emph{low} zones.
Going from the center outward, the zones are called the \emph{central} zone, the \emph{inner buffer},
the \emph{middle buffer}, and the \emph{outer buffer}.
The warehouse is said to be \emph{safe} if it is in one of its central zones or one of its
inner buffers.
\end{definition}

Let $D(f_{\mbox{\tiny I}})$ bound the duration of Phase 1 and
let $v(f_{\mbox{\tiny I}})$ be chosen so that
$v(f_{\mbox{\tiny I}}) w_i$ bounds the change to $v_i$, for all $i$, during Phase 1.
We gave a bound on $v(f_{\mbox{\tiny I}})$ in Lemma \ref{lem:phase1-war-bdd}.

We will assume that the ratios $\chi_i/w_i$ are all the same, i.e.\ that
every warehouse can store the same maximum number of days supply.
This will be without loss of generality, for if the smallest
warehouse can store only $2d$ days supply, Theorem~\ref{lem:good-wrhs}
in effect shows that every
warehouse remains with a stock within $dw_i$ of $\chi_i/2$.
An alternative approach is to suppose that each seller $S_i$ has
a separate parameter $\kappa_i$ (replacing $\kappa$).
The only effect on the analysis is that the convergence rate
is now controlled by $\kappa = \min_i \kappa_i$.

To prove Theorem~\ref{lem:good-wrhs} it will suffice that the following conditions hold.
 for all $i$:
\begin{enumerate}
\item
$\chi_i \ge \frac{512}{\beta} w_i$ and
$\chi_i \ge 8 v(f_{\mbox{\tiny I}}) w_i $.
\item
$\delta = \frac{\kappa \chi_i}{2 w_i}$.
\item
$\lambda^2 \le \frac {\kappa \chi_i} {32 w_i}$.
\end{enumerate}

\noindent
{\bf Comment.}
We note that were the price update rule to have the form $p_i' \leftarrow p_i e^{\lambda \min\{1,\bar{z}/w_i\}\Delta t}$ rather than
$p_i' \leftarrow p_i (1+\lambda \min\{1,\bar{z}/w_i\}\Delta t)$ then the constraint (3) in Theorem \ref{lem:good-wrhs} would not be needed 
(this constraint comes from setting $a_1$ in Lemma \ref{lem:war-too-empt-impr} to the width of a zone).
We call this alternate rule the \emph{exponential price update rule}.
However, we prefer the form of the rule we have specified as it strikes us as being simpler and hence more natural.

\subsection{Condition Summary}
\label{sec:ciond-sum}

Lemma \ref{lem:ongoing-price-change-1} and Theorem \ref{lem:good-wrhs} require several constraints on the parameters 
$\kappa, \delta, \lambda, c_1, c_2$.
We can unwind these conditions to show how these parameters depend on the market parameters $\alpha,\gamma$ and $\beta$.

Let $r = \chi_i / w_i$. Then the conditions can be satisfied when 
Equations~\eqref{eqn:kappa-cond1} and \eqref{eqn:lambda-cond1} hold. 
Note that $r$ needs to be sufficiently large, or in other words $\chi_i$ for every $i$ needs to be sufficiently large, to ensure that there is a choice of $\lambda$ which satisfies both the upper and lower bounds.
Further note that the term $\sqrt{\frac{\kappa\tau}{32}}$,
which is due to Constraint (3), would not be needed
were we to use the exponential price update rule.

\section{Markets with Mixtures of Substitutes and Complements}\label{sec:mixed}

For the markets with mixtures of substitutes and complements, we defined Adverse Market Elasticity and made Assumption \ref{ass:sp-trans} in Section \ref{sect:result-mix}. Note that for the case that all the goods are complements, $\beta$ as defined in Definition \ref{def:den-elas} equals $2\alpha-\gamma$.

We can then show that Theorem~\ref{th:phase-1-comps} applies here too.
We can also show the following results,
analogs of Lemma \ref{lem:ongoing-price-change-1}, and Theorems \ref{th:phase2-comp} and \ref{th:comp-overall}.

\begin{lemma}\label{lem:ongoing-price-change-2}
Suppose Assumption 1 holds and $\beta$, as defined in Definition \ref{def:den-elas}, satisfies $\beta>0$. Suppose that the following conditions hold:
\begin{enumerate}
\item $f \leq (1-2\delta)^{-1/\beta} \leq (2-\delta)^{1/(2E-\beta)}$ since the last price update to $p_i$;
\item $2\alpha '(1-2\delta)^{-(2E-\beta)/\beta} + c_1 + c_2 \delta \leq 1-\delta$;
\item $\left(2\alpha ''(1-2\delta)^{-(2E-\beta)/\beta} + 1 + \delta + c_1 + c_2 \delta\right)\lambda \leq 1$,
\end{enumerate}
where $\alpha '' := \alpha' + 2(E-1)$.
Then, when a price $p_i$ is updated using rule \eqref{eq:tat-rule-warehouse}, the value of $\phi$ stays the same or decreases.
\end{lemma}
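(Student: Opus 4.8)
The plan is to mirror the proof of Lemma~\ref{lem:ongoing-price-change-1}, now tracking separately the effect of a price update to $p_i$ on (a) the term $p_i\cdot\span\{\bar x_i,x_i,\widetilde w_i\}$ for the good $G_i$ itself, (b) the reset of the term $p_i c_1\lambda(t-\tau_i)|\bar x_i-\widetilde w_i|$, (c) the growth of $p_i c_2|\widetilde w_i-w_i|$ on $G_i$, and (d) the indirect change to $\span\{\bar x_j,x_j,\widetilde w_j\}$ over the other goods $j\neq i$, which is what is new relative to the all-complements case. The key observation is that a price update to $p_i$ by a factor $1+\lambda\mu\Delta t$ with $|\mu|\le 1$ moves every other price by at most the same fractional amount, so Definition~\ref{def:den-elas} governs how much $x_i$ itself can fail to respond: one uses $\beta$ (rather than $2\alpha-\gamma$) exactly as in Lemma~\ref{lem:f-bounded-demand-bound-1} to conclude that, under Condition~(1), $x_i$ still moves toward $\widetilde w_i$ by a $\Theta(\beta)$-fraction of the price-update step. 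This is where $(1-2\delta)^{-1/\beta}\le(2-\delta)^{1/(2E-\beta)}$ enters, playing the role the analogous inequality with $\gamma$ played before, together with $d(f)\le f^{2E-\beta}$.

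First I would fix notation: suppose $p_i$ is updated, say increased (the decrease case is symmetric), from $p_i$ to $p_i(1+\lambda\mu\Delta t)$, where $\mu=\min\{1,\bar z_i/w_i\}<0$ means a decrease — so really I would do the increase case where $\bar x_i>\widetilde w_i$. I would first bound, as in Lemma~\ref{lem:ongoing-price-change-1}, the drop in $p_i\span\{\bar x_i,x_i,\widetilde w_i\}$ coming from $G_i$: the span decreases because $x_i$ drops toward $\widetilde w_i$ by at least a factor controlled by $\alpha$ and $\lambda$, while $p_i$ grew by only $1+\lambda\mu\Delta t$; the quantity $\bar\alpha$ in Lemma~\ref{lem:ongoing-price-change-1} gets replaced by $2\alpha'(1-2\delta)^{-(2E-\beta)/\beta}$ because the price elasticity upper bound $E$ now limits how far $x_i$ can overshoot, and the ``adverse'' coupling to other goods contributes the $\alpha'$ factor (this is exactly the role of Assumption~\ref{ass:sp-trans}). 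Next I would bound the two harmful terms: the increase in $p_i c_2|\widetilde w_i-w_i|$ is at most $c_2\delta$ times $p_i x_i$ by the choice $|\kappa(v_i-v_i^*)|\le\delta w_i$, and the reset term contributes at most $c_1$ times $p_i x_i$ since $|\bar x_i-\widetilde w_i|\Delta t$ is within the allowed update range; these are the $c_1+c_2\delta$ and $(1+\delta+c_1+c_2\delta)\lambda$ pieces.

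The genuinely new step, and the one I expect to be the main obstacle, is (d): bounding the net change to $\sum_{j\neq i}p_j\span\{\bar x_j,x_j,\widetilde w_j\}$ caused by the single update to $p_i$. Here I would invoke Assumption~\ref{ass:sp-trans} to control the spending shifted onto complements of $G_i$ — that is exactly $|\Delta S_c|\le\alpha' x_i|\Delta p_i|$ — and then use the price-elasticity bound $E$ to control the spending shifted onto substitutes of $G_i$, which accounts for the $2(E-1)$ term and hence the definition $\alpha'':=\alpha'+2(E-1)$. The subtlety is that a shift in spending on $G_j$ changes $x_j$ and hence possibly increases $\span\{\bar x_j,x_j,\widetilde w_j\}$; one bounds the total such increase, summed over $j$, by $\alpha''$ times $p_i x_i |\lambda\mu\Delta t|$ up to the $(1-2\delta)^{-(2E-\beta)/\beta}$ blow-up factor coming from the $f$-boundedness via $d(f)\le f^{2E-\beta}$. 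Once all four contributions are collected, Condition~(2) ensures the span decrease from $G_i$ alone dominates the $c_1+c_2\delta$ harmful terms on $G_i$, while the $\lambda$-weighted Condition~(3) ensures the combined first-order (in $\lambda\Delta t$) change — including the cross-good term with coefficient $2\alpha''(1-2\delta)^{-(2E-\beta)/\beta}$ — is nonpositive; hence $\phi$ does not increase. I would present the increase case in full and remark that the decrease case (where $p_i$ drops, $\bar x_i<\widetilde w_i$) is symmetric, exactly as in Lemma~\ref{lem:ongoing-price-change-1}. The main obstacle is getting the bookkeeping for step (d) right: making sure that the worst case for the other goods' spans is correctly identified and that the adverse-elasticity and price-elasticity bounds compose to give precisely the stated coefficient $2\alpha''(1-2\delta)^{-(2E-\beta)/\beta}$ without hidden double-counting of the goods that are simultaneously affected through the budget constraint.
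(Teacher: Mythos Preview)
Your outline has the right ingredients but the wrong case structure, and this leads you to misidentify where Conditions~(2) and~(3) come from.

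The relevant case split is not ``price increase versus price decrease'' (that symmetry is dispatched with a single remark). It is, exactly as in Lemma~\ref{lem:change-PF} and the proof of Lemma~\ref{lem:ongoing-price-change-1}, whether $x_i$ moves \emph{toward} $\widetilde w_i$ without $\sgn(x_i-\widetilde w_i)$ flipping (Case~1), or $x_i$ moves \emph{away} from $\widetilde w_i$ or the sign flips (Case~2). Lemma~\ref{lem:change-PF} gives different upper bounds on $\Delta\phi$ in these two cases, with opposite signs on $\Delta s_i$. Taking $\Delta p_i>0$ and using the budget identity $|\Delta S_s|=|\Delta S_c|-\Delta s_i$, the cross-good contribution $\Delta s_i+S_{inc}+S_{dec}$ collapses to $2|\Delta S_c|$ in Case~1, while $-\Delta s_i+S_{inc}+S_{dec}$ collapses to $2|\Delta S_s|$ in Case~2. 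Assumption~\ref{ass:sp-trans} then bounds $|\Delta S_c|\le\alpha' x_i|\Delta p_i|$ directly, and together with $x_i\le(1-2\delta)^{-(2E-\beta)/\beta}w_i$ this yields Condition~(2) for Case~1. For Case~2 one needs a bound on $|\Delta S_s|$, and this is precisely where $\alpha''=\alpha'+2(E-1)$ enters: combining Assumption~\ref{ass:sp-trans} with the elasticity bound $E$ on $\Delta s_i$ gives $|\Delta S_s|\le\alpha'' x_i|\Delta p_i|$ (this is Lemma~\ref{lem:bound-M2}), and Condition~(3) then finishes Case~2. So $\alpha'$ and $\alpha''$ do not appear in a single chain of inequalities as your items (a)--(d) suggest; they belong to two disjoint cases, and Conditions~(2) and~(3) are not two constraints on the same expression but the closing inequalities of Cases~1 and~2 respectively.

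Two smaller points. First, an update to $p_i$ does not ``move every other price by at most the same fractional amount''; other prices do not change at this instant --- what change are the demands $x_j$. The adverse market elasticity $\beta$ is not invoked here to quantify any movement of $x_i$ toward $\widetilde w_i$; its only role in this lemma is, via Lemma~\ref{lem:f-bounded-demand-bound-2} and Condition~(1), to give the bound $x_i,\bar x_i\le(1-2\delta)^{-(2E-\beta)/\beta}w_i\le(2-\delta)w_i$, which both ensures $\bar z_i/w_i\le 1$ (so the min in rule~\eqref{eq:tat-rule-warehouse} is inactive and $|\Delta p_i|=\lambda p_i|\bar x_i-\widetilde w_i|\Delta t/w_i$) and supplies the factor $(1-2\delta)^{-(2E-\beta)/\beta}$ when converting $x_i$ to $w_i$. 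Second, you do not need to argue that ``$x_i$ still moves toward $\widetilde w_i$ by a $\Theta(\beta)$-fraction''; the whole point of Case~2 is to cover the possibility that it does not, and there the saving term is the full $-p_i|\bar x_i-\widetilde w_i|$ coming from Lemma~\ref{lem:change-PF}, not any guaranteed shrinkage of the span on $G_i$.
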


\begin{theorem}
\label{th:phase2-comp-multi}
Suppose that the conditions in Lemmas \ref{lem:PF-change-no-price-update} and \ref{lem:ongoing-price-change-2} hold.
Let $M= \sum_j b_j$ be the daily supply of money to all the buyers.
Then, in Phase 2, the prices become $(1 + \eta)$-bounded after
$O\left(\frac{1}{\kappa} \log \frac{M}{\eta \min_i w_i p_i^*}\right)$
days.
\end{theorem}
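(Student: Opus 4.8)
\textbf{Proof plan for Theorem~\ref{th:phase2-comp-multi}.} The plan is to mimic the proof of Theorem~\ref{th:phase2-comp} essentially verbatim, since the only ingredient that changes between the pure-complements setting and the mixed setting is the per-price-update lemma (Lemma~\ref{lem:ongoing-price-change-1} versus Lemma~\ref{lem:ongoing-price-change-2}), while Lemma~\ref{lem:PF-change-no-price-update} governing the continuous decay of $\phi$ is unchanged. So first I would invoke the phase-2 regime conditions to argue that throughout Phase 2 we have $p_i \le 2 p_i^*$ and $x_i \le 2 w_i$ (this follows from condition (1) of Lemma~\ref{lem:ongoing-price-change-2}, namely $f \le (1-2\delta)^{-1/\beta} \le (2-\delta)^{1/(2E-\beta)} \le 2$, together with $d(f) \le f^{2E-\beta}$ in the mixed case as noted after Theorem~\ref{th:multi-overall}). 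Consequently each $\phi_i = p_i[\,\mbox{span}\{\bar x_i, x_i, \widetilde w_i\} - c_1\lambda(t-\tau_i)|\bar x_i - \widetilde w_i| + c_2|\widetilde w_i - w_i|\,]$ is nonnegative and $O(p_i^* w_i)$, so $\phi = \sum_i \phi_i = O(\sum_i p_i^* w_i) = O(M)$.

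Next I would combine the two lemmas to get the daily multiplicative decrease. Lemma~\ref{lem:PF-change-no-price-update} gives $\frac{d\phi_i}{dt} \le -\Theta(\kappa)\phi_i$ at every instant when no price update is occurring, hence $\frac{d\phi}{dt} \le -\Theta(\kappa)\phi$ between events; Lemma~\ref{lem:ongoing-price-change-2} gives that $\phi$ does not increase at a price-update event. Since each price is updated at least once per day, over any one-day window a positive-length sub-interval of continuous decay accumulates — in fact the price updates to a fixed $p_i$ are at most one day apart, so the total length of ``no update to $p_i$'' time in a day is bounded below away from $0$ in aggregate across goods; more simply, between consecutive updates of the whole system the decay accrues, and the updates only help. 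Integrating, $\phi(t) \le e^{-\Theta(\kappa)t}\phi(0)$, so $\phi$ shrinks by a factor $1 - \Theta(\kappa)$ each day (using $\kappa = O(1)$, which holds by Equation~\eqref{eqn:kappa-cond2}).

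Finally I would convert the potential bound into a price bound. Observe that $\phi_i \ge c_2 p_i |\widetilde w_i - w_i| \ge p_i|\widetilde w_i - w_i|$ and also $\phi_i \ge p_i\,\mbox{span}\{\bar x_i, x_i, \widetilde w_i\} - c_1\lambda(t-\tau_i)|\bar x_i-\widetilde w_i| \ge (1-c_1\lambda)p_i|x_i - \widetilde w_i|$ roughly, so once $\phi \le \eta\min_i w_i p_i^*$ each good has $p_i|x_i - w_i| = O(\eta w_i p_i^*)$, which by the definition of $\alpha$ (Definition~\ref{def:price-elas}) and a comparison of $p_i$ with $p_i^*$ forces $f_i(p) \le 1+\eta$ for each $i$, i.e.\ the prices are $(1+\eta)$-bounded. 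Starting from $\phi = O(M)$ and shrinking by $1-\Theta(\kappa)$ per day, this target is reached after $O\!\left(\frac{1}{\kappa}\log\frac{M}{\eta\min_i w_i p_i^*}\right)$ days, as claimed. The main obstacle is purely bookkeeping: making sure that the constants $c_1, c_2$ and the parameter constraints in Equations~\eqref{eqn:kappa-cond2}--\eqref{eqn:lambda-cond2} simultaneously satisfy the hypotheses of both Lemma~\ref{lem:PF-change-no-price-update} (which needs $4\kappa(1+c_2) \le \lambda c_1 \le 1/2$) and Lemma~\ref{lem:ongoing-price-change-2} (conditions (2)--(3), with $\alpha'' = \alpha' + 2(E-1)$ now carrying the substitutes' contribution); but since the proof of Theorem~\ref{th:phase2-comp} already did exactly this juggling with $\bar\alpha$ in place of $2\alpha'(1-2\delta)^{-(2E-\beta)/\beta}$, the argument transfers with only the substitution of the new constant expressions, and no new idea is required.
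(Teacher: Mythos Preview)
Your proposal is correct and matches the paper's approach essentially verbatim: the paper does not even give a separate proof for Theorem~\ref{th:phase2-comp-multi}, since it is identical to the proof of Theorem~\ref{th:phase2-comp} with Lemma~\ref{lem:ongoing-price-change-2} substituted for Lemma~\ref{lem:ongoing-price-change-1} and the Phase-2 bounds $p_i\le 2p_i^*$, $x_i\le 2w_i$ now coming from $d(f)\le f^{2E-\beta}$ and condition~(1) of Lemma~\ref{lem:ongoing-price-change-2}. Your extra detail on converting the potential bound into a price bound and on the parameter bookkeeping is fine but goes beyond what the paper spells out.
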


Theorem~\ref{th:multi-overall} follows on summing the bounds from Theorems~\ref{th:phase-1-comps} and~\ref{th:phase2-comp-multi},
and on showing that Equations~\eqref{eqn:kappa-cond1}--\eqref{eqn:lambda-cond1}
imply the constraints in Lemmas~\ref{lem:PF-change-no-price-update} and \ref{lem:ongoing-price-change-2}.
Theorem~\ref{lem:good-wrhs} also continues to apply unchanged.
Here $d(f) \le f^{2E - \beta}$.

\subsection{Example Scenario: 2-Level Nested CES Type Utilities}

We will use index $i$ to denote a good, $\overline{G_i}$ to denote the group containing good $i$, index $j$ to a denote a good in $\overline{G_i}$ (but not good $i$) and index $k$ to denote a good in a group other than $\overline{G_i}$. Denote the spending on all the goods in the group $\overline{G_i}$ by $s_{\overline{G_i}}$ and the total income of the buyer by $b$. Keller \cite{Keller1976} derived the following formulae:
\begin{eqnarray*}
\frac{\partial x_i/\partial p_i}{x_i / p_i} &=& -\frac{1}{1-\rho_{\overline{G_i}}}\left(1-\frac{s_i}{s_{\overline{G_i}}}\right) - \frac{1}{1-\rho}\left(\frac{s_i}{s_{\overline{G_i}}} - \frac{s_i}{b}\right) - \frac{s_i}{b}\\
\\
\frac{\partial x_i/\partial p_j}{x_i / p_j} &=& \frac{s_j}{b} \left( \frac{1}{1-\rho_{\overline{G_i}}} \frac{b}{s_{\overline{G_i}}} - \frac{1}{1-\rho}\left(\frac{b}{s_{\overline{G_i}}}-1\right) - 1 \right)\\
\\
\frac{\partial s_k}{\partial p_i} &=& \frac{s_k}{b} \frac{\rho}{1-\rho} x_i.
\end{eqnarray*}

As $1>\rho_{\overline{G}}>0$, $\rho<0$ and $b\geq s_{\overline{G}}$, $\frac{\partial x_i}{\partial p_j} \geq 0$ and $\frac{\partial x_i/\partial p_i}{x_i / p_i}\geq -\frac{1}{1-\rho_{\overline{G}}}$;
i.e.\ every pair of goods in the same group are substitutes and $E=\max_{\overline{G}} \frac{1}{1-\rho_{\overline{G}}}$.

As $\rho<0$, $\frac{\partial s_k}{\partial p_i} < 0$, which is equivalent to $\frac{\partial x_k}{\partial p_i} < 0$;
i.e.\ two goods in different groups are complements.

To compute $\beta$, we note that when $p_i$ changes by a factor $t$, the smallest change in demand occurs
if the prices for its substitutes, namely, the goods in its group, all also change by $t$, while the prices for 
its complements, namely all the other goods, change
by a factor $1/t$.
As $\rho<0$, $\frac{\partial x_i/\partial p_i}{x_i/p_i} + \sum_{j\in \overline{G},j\neq i} \frac{\partial x_i/\partial p_j}{x_i/p_j} = -\frac{1}{1-\rho}-\frac{s_{\overline{G}}}{b}\left(1-\frac{1}{1-\rho}\right)\leq -\frac{1}{1-\rho}$. When the prices of all goods are raised by a factor $t>1$ and then the prices of all goods in $\overline{G}$ are reduced by a factor $1/t^2$, $x_i' \geq x_i t^{2/(1-\rho)-1}$;
when the prices of all goods are reduced by a factor $t<1$ and then the prices of all goods in $\overline{G}$ are raised by a factor $1/t^2$, $x_i' \leq x_i t^{2/(1-\rho)-1}$. Thus $\beta = \frac{2}{1-\rho}-1$.

Finally, note that $\sum_k \frac{\partial s_k}{\partial p_i} = \frac{\sum_k s_k}{b} \frac{\rho}{1-\rho} x_i$ and $|\Delta S_c| = \sum_k |\Delta s_k|$. When $p_i$ is raised, $x_i'\leq x_i$,
and hence $|\Delta S_c| \leq \frac{-\rho}{1-\rho} x_i |\Delta p_i|$; when $p_i$ is reduced, it is reduced by a factor of $t\geq(1-\lambda)$. 
As $x_i'\leq x_i (1-\lambda)^{-E}$, $|\Delta S_c|\leq \frac{-\rho}{1-\rho}(1-\lambda)^{-E} x_i |\Delta p_i|$. Thus Assumption \ref{ass:sp-trans} is satisfied with $\alpha' = \frac{-\rho}{1-\rho}(1-\lambda)^{-E}$. Hence the bounds from Theorems \ref{th:multi-overall} and~\ref{lem:good-wrhs} apply.

\section{Acknowledgments}

This work is supported by the National Science Foundation, under
NSF grant CCF-0830516.

The second author thanks Lisa Fleischer for helpful comments regarding
the complementary goods case.

\bibliographystyle{ieeetr}
\bibliography{markets}

\newpage
\appendix

\section{Potential Function Lemmas}

We complete the proof of the missing cases for Lemma \ref{lem:PF-change-no-price-update}.

\smallskip

\begin{rlemma}{lem:PF-change-no-price-update}
Suppose that $4\kappa (1+c_2) \leq \lambda c_1 \leq 1/2$.
If $|\widetilde{w}_i-w_i|\leq 2\cdot\mbox{span}(x_i,\bar{x}_i,\widetilde{w}_i)$,
then $\frac{d\,\phi_i}{d\,t}\leq -\frac{\kappa(1+c_2)}{1+2c_2}\phi_i$ and otherwise
$\frac{d\,\phi_i}{d\,t}\leq -\frac{\kappa(c_2-1)}{2 c_2}\phi_i$,
at any time when no price update is occuring (to any $p_j$).
\end{rlemma}

\begin{proof}To simplify the presentation of this proof, let $K$ denote $\kappa (x_i-w_i)$ and let $S$ denote $\mbox{span}(x_i,\bar{x}_i,\widetilde{w}_i)$.
Here we show the details for Cases 2 and 3, which were deferred from the main paper.

\smallskip

\noindent\textbf{Case 2:} $x_i\geq\widetilde{w}_i\geq\bar{x}_i$ or $\bar{x}_i\geq\widetilde{w}_i\geq x_i$. $\frac{d\,S}{d\,t} = \frac{\bar{x}_i-x_i}{t-\tau_i}\cdot\mbox{sign}(x_i-\bar{x}_i)$.
\begin{eqnarray*}
&& \frac{d\,\phi_i}{d\,t} \\
&=& p_i\left[\left(\frac{\bar{x}_i-x_i}{t-\tau_i} + c_1\lambda (x_i-\widetilde{w}_i) + c_1\lambda (t-\tau_i) K \right)\mbox{sign}(x_i-\bar{x}_i) \right.\\
&& \hspace{0.2in} -c_2 K\cdot \mbox{sign}(\widetilde{w}_i-w_i)\bigg]\\
&\leq & p_i\left[-|\bar{x}_i-x_i|+c_1\lambda |x_i-\widetilde{w}_i| + |K| -c_2 K\cdot \mbox{sign}(\widetilde{w}_i-w_i)\right]\\
&\leq & p_i\left[|K| + (c_1\lambda-1) |\bar{x}_i-x_i| -c_2 K\cdot \mbox{sign}(\widetilde{w}_i-w_i)\right]\\
&\leq & p_i\left[|K| - c_1\lambda S -c_2 K\cdot \mbox{sign}(\widetilde{w}_i-w_i)\right],~\mbox{as $\lambda c_1 \le \frac 12$}.
\end{eqnarray*}

\noindent\textbf{Case 3:} $\widetilde{w}_i\geq x_i\geq \bar{x}_i$ or $\bar{x}_i\geq x_i\geq \widetilde{w}_i$. $\frac{d\,S}{d\,t} = \left(\frac{x_i-\bar{x}_i}{t-\tau_i} + K\right)\cdot\mbox{sign}(\bar{x}_i-\widetilde{w}_i)$.

\begin{eqnarray*}
&& \frac{d\,\phi_i}{d\,t}\\
&=& p_i\left[\left( \frac{\bar{x}_i-x_i}{t-\tau_i} - K + c_1\lambda (x_i-\widetilde{w}_i) + c_1\lambda (t-\tau_i) K\right) \mbox{sign}(\widetilde{w}_i - \bar{x}_i) \right.\\
&& \hspace{0.2in} -c_2 K\cdot \mbox{sign}(\widetilde{w}_i-w_i)\bigg]\\
&\leq & p_i\left[-|\bar{x}_i-x_i| - c_1\lambda |x_i-\widetilde{w}_i|\right.\\
&& \hspace{0.2in} - K(1-c_1\lambda(t-\tau_i))\mbox{sign}(\widetilde{w}_i - \bar{x}_i) -c_2 K\cdot \mbox{sign}(\widetilde{w}_i-w_i) ]\\
&\leq & p_i\left[-c_1\lambda |\bar{x}_i-x_i| - c_1\lambda |x_i-\widetilde{w}_i| + |K| -c_2 K\cdot \mbox{sign}(\widetilde{w}_i-w_i)\right]\\
&=& p_i\left[|K| - c_1\lambda S -c_2 K\cdot \mbox{sign}(\widetilde{w}_i-w_i)\right].
\end{eqnarray*}
\end{proof}

The following lemma provides an upper bound on the change to the potential function when there is a price update.
Subsequently, this lemma will be used to show that at a price update, the potential function stays the same or decreases under suitable conditions. Recall that $s_j$ denotes the spending on good $j$.

\begin{lemma}\label{lem:change-PF}
Suppose $p_i$ is updated. Let $S_{inc} := \sum_{j\neq i,\,\Delta s_j > 0} |\Delta s_j|$ and $S_{dec} := \sum_{j\neq i,\,\Delta s_j<0} |\Delta s_j|$.
\begin{enumerate}
\item If $\mbox{sign}(x_i-\widetilde{w}_i)$ is not flipped and $x_i$ moves towards $\widetilde{w}_i$, the change to $\phi$ is at most
\begin{eqnarray*}
-\widetilde{w}_i |\Delta p_i| + \mbox{sign}(\Delta p_i)\cdot\Delta s_i + S_{inc} + S_{dec}\\
+ c_1 \lambda p_i |\bar{x}_i - \widetilde{w}_i|(t-\tau_i) + c_2 \delta w_i |\Delta p_i|.
\end{eqnarray*}

\item If $\mbox{sign}(x_i-\widetilde{w}_i)$ is not flipped and $x_i$ moves away from $\widetilde{w}_i$, or if $\mbox{sign}(x_i-\widetilde{w}_i)$ is flipped, the change to $\phi$ is at most
\begin{eqnarray*}
-p_i |\bar{x}_i - \widetilde{w}_i| + \widetilde{w}_i |\Delta p_i| - \mbox{sign}(\Delta p_i)\cdot\Delta s_i + S_{inc} + S_{dec} \\
+ c_1 \lambda p_i |\bar{x}_i - \widetilde{w}_i|(t-\tau_i) + c_2 \delta w_i |\Delta p_i|.
\end{eqnarray*}

\end{enumerate}
\end{lemma}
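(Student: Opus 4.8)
The plan is to write $\Delta\phi=\Delta\phi_i+\sum_{j\neq i}\Delta\phi_j$, where $\Delta$ denotes the jump caused by the single update to $p_i$, and to bound the two pieces separately. Three observations drive everything. First, by the update rule~\eqref{eq:tat-rule-warehouse} the target excess demand is $\bar z_i=\bar x_i-\widetilde w_i$, so $\mbox{sign}(\Delta p_i)=\mbox{sign}(\bar x_i-\widetilde w_i)$; combined with the (standard) monotonicity of own-price demand, this pins down on which side of $\widetilde w_i$ each of $\bar x_i$, $x_i$, $x_i'$ lies in each of the two scenarios of the lemma. Second, at the instant of the update the averaging window for good $i$ is reset, so immediately afterwards $\bar x_i$ equals the new demand $x_i'$ and $t-\tau_i$ becomes $0$; the warehouse content $v_i$, hence $\widetilde w_i$, does not jump; and for every $j\neq i$ the quantities $p_j$, $\widetilde w_j$, $\tau_j$ and $\bar x_j$ are unchanged while $x_j$ jumps by $\Delta x_j$. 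Third, $\mbox{span}\{\cdot,\cdot,\cdot\}$ is $1$-Lipschitz in each argument and is at least the distance between any two of its three entries.

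For a good $j\neq i$ only the span term of $\phi_j$ is affected, and by the Lipschitz property its change is at most $p_j|\Delta x_j|=|\Delta s_j|$ (as $p_j$ is fixed), so $\sum_{j\neq i}\Delta\phi_j\le S_{inc}+S_{dec}$. For good $i$, using the reset, the post-update value of $\phi_i$ is $p_i'\big(|x_i'-\widetilde w_i|+c_2|\widetilde w_i-w_i|\big)$, whence
\begin{equation*}
\Delta\phi_i=\Big(p_i'|x_i'-\widetilde w_i|-p_i\,\mbox{span}\{\bar x_i,x_i,\widetilde w_i\}\Big)+\Delta p_i\,c_2|\widetilde w_i-w_i|+c_1\lambda p_i(t-\tau_i)|\bar x_i-\widetilde w_i|.
\end{equation*}
The second term is at most $c_2\delta w_i|\Delta p_i|$ because $|\widetilde w_i-w_i|=|\kappa(v_i-v_i^*)|\le\delta w_i$, and the third term already appears verbatim in both claimed bounds; so the work reduces to bounding $p_i'|x_i'-\widetilde w_i|-p_i\,\mbox{span}\{\bar x_i,x_i,\widetilde w_i\}$. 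The algebraic engine is the pair of identities $p_i'x_i'-p_ix_i=\Delta s_i$ and $(p_i'-p_i)\widetilde w_i=\Delta p_i\widetilde w_i$, which convert price-times-quantity expressions into $\Delta s_i$, $\Delta p_i$ and $\widetilde w_i$.

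In the first scenario (sign of $x_i-\widetilde w_i$ not flipped, $x_i$ moving towards $\widetilde w_i$) the first observation forces $\bar x_i$, $x_i$ and $x_i'$ to lie on the common side of $\widetilde w_i$ indicated by $\mbox{sign}(\Delta p_i)$; then $\mbox{span}\{\bar x_i,x_i,\widetilde w_i\}\ge|x_i-\widetilde w_i|$, and expanding $p_i'|x_i'-\widetilde w_i|-p_i|x_i-\widetilde w_i|$ with the identities collapses it to exactly $\mbox{sign}(\Delta p_i)\Delta s_i-\widetilde w_i|\Delta p_i|$, the stated bound. In the second scenario ($x_i$ moving away, or the sign flipped) the first observation puts $x_i'$ on the side of $\widetilde w_i$ opposite to $\bar x_i$. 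After reducing by symmetry to $\mbox{sign}(\bar x_i-\widetilde w_i)=\mbox{sign}(\Delta p_i)=+1$, I would instead use $\mbox{span}\{\bar x_i,x_i,\widetilde w_i\}\ge\bar x_i-x_i=(\bar x_i-\widetilde w_i)+(\widetilde w_i-x_i)$, which simultaneously yields the $-p_i|\bar x_i-\widetilde w_i|$ credit and cancels the residual $p_ix_i$ term once $p_i'|x_i'-\widetilde w_i|=p_i'(\widetilde w_i-x_i')$ is rewritten via the identities; the net result is $-p_i|\bar x_i-\widetilde w_i|+\widetilde w_i|\Delta p_i|-\mbox{sign}(\Delta p_i)\Delta s_i$. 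Adding $\sum_{j\neq i}\Delta\phi_j\le S_{inc}+S_{dec}$ and the $c_1,c_2$ terms in each case gives the two displayed bounds.

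The main obstacle is the second scenario: one must pick exactly which pair among $\{\bar x_i,x_i,\widetilde w_i\}$ to lower-bound the span by so that the subsequent sign bookkeeping — involving $\mbox{sign}(\Delta p_i)$, $\mbox{sign}(\bar x_i-\widetilde w_i)$, $\mbox{sign}(x_i-\widetilde w_i)$ and $\mbox{sign}(x_i'-\widetilde w_i)$ — closes up and produces the useful $-p_i|\bar x_i-\widetilde w_i|$ term rather than a useless $+p_ix_i$ leftover. Once the correct pairing ($\bar x_i$ versus $x_i$) is identified and one checks that the "sign flipped" and "moves away" subcases yield the same final bound (so Case 2 is handled uniformly), the remaining manipulations are routine, and the WLOG reduction on $\mbox{sign}(\bar x_i-\widetilde w_i)$ halves the case analysis.
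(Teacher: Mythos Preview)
Your proposal is correct and follows essentially the same approach as the paper's proof: decompose $\Delta\phi$ into the $\phi_i$ piece and the $\sum_{j\neq i}\phi_j$ piece, bound the latter by $S_{inc}+S_{dec}$ via the Lipschitz property of span, and handle the $\phi_i$ piece by case analysis on the relative positions of $\bar x_i,x_i,x_i',\widetilde w_i$, converting everything to $\Delta s_i$ and $\widetilde w_i\Delta p_i$ via the identities $p_i'x_i'-p_ix_i=\Delta s_i$ and $(p_i'-p_i)\widetilde w_i=\widetilde w_i\Delta p_i$.

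The one organizational difference is that the paper splits your second scenario into two separate cases (its Cases~2 and~3), computing the span exactly in each and then discarding nonnegative leftovers, whereas you treat them uniformly via the single lower bound $\mbox{span}\{\bar x_i,x_i,\widetilde w_i\}\ge \bar x_i-x_i$. This bound is always valid (when $\bar x_i-x_i\ge 0$ it follows from span exceeding any pairwise distance; when $\bar x_i-x_i<0$ it is trivial since span is nonnegative), and the subsequent algebra collapses identically in both subcases to $-p_i|\bar x_i-\widetilde w_i|+\widetilde w_i|\Delta p_i|-\mbox{sign}(\Delta p_i)\Delta s_i$. So your unification is a genuine, if modest, streamlining of the paper's argument.
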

\begin{proof} Let $p_i'$, $x_i'$ and $s_i'=p_i'x_i'$ denote the price of good $G_i$, the demand for good $G_i$ and the spending on good $G_i$ after  the price update respectively. We separate the proof into three cases.

\medskip

\noindent\textbf{Case 1:} $\mbox{sign}(x_i-\widetilde{w}_i)$ is not flipped and $x_i$ moves towards $\widetilde{w}_i$.

As $x_i$ moves towards $\widetilde{w}_i$, following the update, $\mbox{sign}(\Delta p_i) = \mbox{sign}(\bar{x}_i-\widetilde{w}_i) = \mbox{sign}(x_i-\widetilde{w}_i) = \mbox{sign}(x_i'-\widetilde{w}_i)$.

Consider the term $p_i\cdot\mbox{span}\{\bar{x}_i, x_i, \widetilde{w}_i\}$.
Before the update to $p_i$, it equals
$$p_i\cdot\mbox{span}\{\bar{x}_i, x_i, \widetilde{w}_i\} \geq p_i |x_i - \widetilde{w}_i| = (s_i - p_i \widetilde{w}_i)\cdot\mbox{sign}(\Delta p_i).$$
After the update,
\begin{eqnarray*}
p_i\cdot\mbox{span}\{\bar{x}_i, x_i, \widetilde{w}_i\} &=& (p_i + \Delta p_i)|x_i' - \widetilde{w}_i|\\
&=& (s_i' -p_i \widetilde{w}_i - \widetilde{w}_i\Delta p_i)\cdot\mbox{sign}(\Delta p_i).
\end{eqnarray*}

Hence, the change to the term following the update is at most
$$(s_i' - s_i - \widetilde{w}_i\Delta p_i)\cdot\mbox{sign}(\Delta p_i) = \mbox{sign}(\Delta p_i)\cdot\Delta s_i - \widetilde{w}_i |\Delta p_i|.$$

For the terms $- p_i c_1 \lambda (t-\tau_i) |\bar{x}_i - \widetilde{w}_i| + c_2 p_i |\widetilde{w}_i - w_i|$, an update on $p_i$ resets $\tau_i$ to $t$ and $|\widetilde{w}_i - w_i|\leq \delta w_i$. Hence the change to these two terms is at most $c_1 \lambda p_i |\bar{x}_i - \widetilde{w}_i| (t-\tau_i) + c_2 \delta w_i |\Delta p_i|$.

For any other good $j$, the terms $- p_j c_1 \lambda (t-\tau_j) |\bar{x}_j - \widetilde{w}_j| + c_2 p_j |\widetilde{w}_j - w_j|$ do not change, and the term $p_j\cdot\mbox{span}\{\bar{x}_j, x_j, \widetilde{w}_j\}$ changes, but by at most $\Delta s_j$. In the worst case, the change of this term, summing over all $j$, is at most $S_{inc} + S_{dec}$.

\medskip

\noindent\textbf{Case 2:} $\mbox{sign}(x_i-\widetilde{w}_i)$ is not flipped and $x_i$ moves away from $\widetilde{w}_i$.

As $x_i$ moves away from $\widetilde{w}_i$, following the update, $\mbox{sign}(\Delta p_i) = \mbox{sign}(\bar{x}_i-\widetilde{w}_i) \neq \mbox{sign}(x_i-\widetilde{w}_i) = \mbox{sign}(x_i'-\widetilde{w}_i)$.

Consider the term $p_i\cdot\mbox{span}\{\bar{x}_i, x_i, \widetilde{w}_i\}$. Before the update to $p_i$, it equals
\begin{eqnarray*}
p_i |x_i - \bar{x}_i| &=& p_i |x_i - \widetilde{w}_i| + p_i |\widetilde{w}_i - \bar{x}_i|\\
&=& (s_i - p_i \widetilde{w}_i)\cdot(-\mbox{sign}(\Delta p_i)) + p_i |\bar{x}_i - \widetilde{w}_i|.
\end{eqnarray*}

After the update,
$$(p_i+\Delta p_i)|x_i' - \widetilde{w}_i| = (s_i' - p_i \widetilde{w}_i - \widetilde{w}_i \Delta p_i)\cdot(-\mbox{sign}(\Delta p_i)).$$
Hence the change to the term following the update is at most
$$-\mbox{sign}(\Delta p_i) \Delta s_i + \widetilde{w}_i |\Delta p_i| -  p_i |\widetilde{w}_i - \bar{x}_i|.$$

As in Case 1, there are further changes, but bounded above by $S_{inc} + S_{dec} + c_1 \lambda p_i |\bar{x}_i - \widetilde{w}_i| (t-\tau_i) + c_2 \delta w_i |\Delta p_i|$.

\medskip

\noindent\textbf{Case 3:} $\mbox{sign}(x_i-\widetilde{w}_i)$ is flipped.

As $\mbox{sign}(x_i-\widetilde{w}_i)$ is flipped, $x_i$ moves toward $\widetilde{w}_i$ initially, hence $\mbox{sign}(\Delta p_i) = \mbox{sign}(\bar{x}_i-\widetilde{w}_i) = \mbox{sign}(x_i-\widetilde{w}_i) \neq \mbox{sign}(x_i'-\widetilde{w}_i)$.

Let $\widetilde{x}_i = \argmax_{x\in\{x_i,\bar{x}_i\}} |x-\widetilde{w}_i|$. Consider the term $p_i\cdot\mbox{span}\{\bar{x}_i, x_i, \widetilde{w}_i\}$. Before the update to $p_i$, it equals
$$p_i |\widetilde{x}_i - \widetilde{w}_i| = (p_i \widetilde{x}_i - p_i \widetilde{w}_i)\cdot\mbox{sign}(\Delta p_i).$$
After the update, it equals
$$(p_i+\Delta p_i)|x_i' - \widetilde{w}_i| = (s_i' - p_i \widetilde{w}_i - \widetilde{w}_i \Delta p_i)\cdot(-\mbox{sign}(\Delta p_i)).$$
Hence the change to the term following the update is at most
\begin{eqnarray*}
&& \widetilde{w}_i |\Delta p_i| - (\Delta s_i + s_i - p_i \widetilde{w}_i + p_i \widetilde{x}_i - p_i \widetilde{w}_i)\cdot\mbox{sign}(\Delta p_i)\\
&=& \widetilde{w}_i |\Delta p_i| - \mbox{sign}(\Delta p_i)\Delta s_i - p_i |x_i-\widetilde{w}_i| - p_i |\widetilde{x}_i-\widetilde{w}_i|.
\end{eqnarray*}

As $- p_i |x_i-\widetilde{w}_i|\leq 0$ and $- p_i |\widetilde{x}_i-\widetilde{w}_i|\leq - p_i |\bar{x}_i-\widetilde{w}_i|$, we obtain the same upper bound on the term $p_i\cdot\mbox{span}\{\bar{x}_i, x_i, \widetilde{w}_i\}$ as in Case 2. The rest of the argument is the same as in Case 2.

\end{proof}

\section{Markets with Complementary Goods}

The following lemma states several inequalities we will use. They can be proved by simple arithmetic/calculus.

\begin{lemma}
\begin{enumerate}[(a)]\label{lem:calculus-results}
\item If $0\leq\ep\leq 1$ and $0\leq x\leq 1$, then $(1+\ep)^x-1 \leq\ep x$.
\item If $0\leq\ep\leq 1$ and $0\leq x\leq 1$, then $1-(1-\ep)^x \leq \left(1+\frac{\ep}{2(1-\ep)}\right)\ep x$.
\item If $E\geq 1$, $\ep\geq 0$ and $r := \max\left\{\frac{E\ep}{2},\ep\right\} < 1$, then $(1-\ep)^{1-E}-1 \leq \frac{E-1}{1-r}\ep$.
\item If $E\geq 1$ and $0\leq\ep\leq 1$, then $1-(1+\ep)^{1-E} \leq (E-1)\ep$.
\item If $x\geq 1$ and $\ep\geq 0$, then $(1-\ep)^{-x} \leq 1+\frac{x}{1-\ep x}\ep$.
\end{enumerate}
\end{lemma}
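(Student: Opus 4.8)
The plan is to establish the five inequalities (a)--(e) individually; each is an elementary statement about $(1\pm\ep)$ raised to a real power, and two families of tools suffice. The first consists of the bound $1-e^{-v}\le v$ (valid for every real $v$) together with the series $-\ln(1-\ep)=\sum_{k\ge1}\ep^k/k$ and the inequality $\ln(1+\ep)\le\ep$. The second is the binomial series $(1-\ep)^{-t}=\sum_{k\ge0}\binom{t+k-1}{k}\ep^k$, valid for $0\le\ep<1$, which I will compare term by term against a geometric series.

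Parts (a), (b), (d) are quick. For (a), $x\mapsto(1+\ep)^x=e^{x\ln(1+\ep)}$ is convex, hence on $[0,1]$ it lies below the chord joining $(0,1)$ to $(1,1+\ep)$, which is the line $1+\ep x$; that is exactly the claim. For (b) (trivial when $\ep=1$, so take $\ep<1$), write $1-(1-\ep)^x=1-e^{-v}$ with $v=-x\ln(1-\ep)\ge0$, so $1-(1-\ep)^x\le v=x\bigl(-\ln(1-\ep)\bigr)$, and then bound $-\ln(1-\ep)=\ep+\sum_{k\ge2}\ep^k/k\le\ep+\tfrac12\sum_{k\ge2}\ep^k=\ep\bigl(1+\tfrac{\ep}{2(1-\ep)}\bigr)$. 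For (d), $1-(1+\ep)^{1-E}=1-e^{-(E-1)\ln(1+\ep)}\le(E-1)\ln(1+\ep)\le(E-1)\ep$, again using $1-e^{-v}\le v$ and then $\ln(1+\ep)\le\ep$.

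Part (e) is handled by the binomial series in the only regime where the claimed bound is not vacuous, namely $0\le\ep<1$ and $\ep x<1$: expand $(1-\ep)^{-x}-1=\sum_{k\ge1}\binom{x+k-1}{k}\ep^k$, note that $\binom{x+k-1}{k}=\prod_{i=0}^{k-1}\frac{x+i}{i+1}$ and that each factor satisfies $\frac{x+i}{i+1}\le x$ when $x\ge1$ (this is $x+i\le(i+1)x$, i.e.\ $i\le ix$), so $\binom{x+k-1}{k}\le x^k$ and hence $(1-\ep)^{-x}-1\le\sum_{k\ge1}(x\ep)^k=\frac{x\ep}{1-x\ep}$.

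Part (c) is the one step I expect to require genuine care. Set $m=E-1\ge0$; when $m=0$ the statement is trivial, so assume $m>0$. Using the binomial series (it converges since $\ep\le r<1$), it suffices to show $\sum_{k\ge1}\binom{m+k-1}{k}\ep^k\le\sum_{k\ge1}m\ep\,r^{k-1}=\frac{m\ep}{1-r}$, for which it is enough to prove the termwise inequality $\frac1m\binom{m+k-1}{k}\ep^{k-1}\le r^{k-1}$ for every $k\ge1$. Writing $\frac1m\binom{m+k-1}{k}=\prod_{i=1}^{k-1}\frac{m+i}{i+1}=\prod_{i=1}^{k-1}\bigl(1+\tfrac{m-1}{i+1}\bigr)$: when $m\ge1$ each factor is maximized at $i=1$, giving $\frac1m\binom{m+k-1}{k}\le\bigl(\tfrac{m+1}{2}\bigr)^{k-1}=\bigl(\tfrac E2\bigr)^{k-1}$, so the termwise bound reduces to $\tfrac{E\ep}{2}\le r$; when $0<m<1$ every factor is at most $1$, so $\frac1m\binom{m+k-1}{k}\le1$ and the termwise bound reduces to $\ep\le r$. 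Both reductions hold by the definition $r=\max\{E\ep/2,\ep\}$. The only real obstacle is getting this termwise estimate on $\binom{m+k-1}{k}$ correct across the two ranges $E\ge2$ and $1\le E<2$; the rest is routine bookkeeping.
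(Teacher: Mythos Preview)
Your proof is correct in every part. The paper omits a proof entirely, asserting only that the inequalities ``can be proved by simple arithmetic/calculus''; your use of convexity for (a), the bound $1-e^{-v}\le v$ plus series estimates for (b) and (d), and termwise comparison of the binomial series against a geometric series for (c) and (e) supplies exactly such an elementary argument, including the care you take in (c) to split on $E\ge 2$ versus $1\le E<2$ and in (e) to restrict to the meaningful regime $\ep<1$, $\ep x<1$.
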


\begin{rlemma}{lem:ongoing-price-change-1}
Let $\beta = 2\alpha - \gamma$ and
suppose that $\beta  > 0$ and the following conditions hold:
\begin{enumerate}
\item $f \le (1-2\delta)^{-1/\beta} \le (2-\delta)^{1/\gamma}$ since the last price update to $p_i$;
\item $\bar{\alpha} + c_1 + c_2 \delta \leq 1-\delta$;
\item $(1+\delta+c_1+c_2\delta)\lambda \leq 1$,
\end{enumerate}
where $\bar{\alpha} := 2(1-\alpha)(1-2\delta)^{-\gamma / \beta}
\left(1+\frac{\alpha \lambda(1+\delta)}{2(1-\lambda(1+\delta))}\right)$.
Then, when a price $p_i$ is updated using rule \eqref{eq:tat-rule-warehouse},
the value of $\phi$ stays the same or decreases.
\end{rlemma}

\begin{proof} The first condition is used with Lemma \ref{lem:complementary-demand-bound-prelim}(b) to ensure that $\bar{x}_i \leq (2-\delta) w_i$, which implies $\frac{\bar{z}_i}{w_i}\leq 1$.
Then, by price update rule \eqref{eq:tat-rule-warehouse}, $|\Delta p_i|=\lambda p_i |\bar{x}_i - \widetilde{w}_i| / w_i$.

\medskip

\noindent\textbf{Step 1:} This step shows that the amount of spending transferred due to a price change is bounded by $\bar{\alpha} w_i |\Delta p_i|$.

By the first condition and Lemma~\ref{lem:complementary-demand-bound-prelim}(b), $x_i \leq (1-2\delta)^{-\gamma/\beta} w_i$.
Hence by definition of $\bar{\alpha}$,
$2(1-\alpha)\left(1+\frac{\lambda(1+\delta)}{2(1-\lambda(1+\delta))}\right) x_i \leq \bar{\alpha} w_i$.

\noindent\textbf{Case 1(a):} Price $p_i$ is increased to $t p_i$, where $t>1$, i.e.\ $\Delta p_i = (t-1) p_i$.

By Lemma~\ref{lem:complementary-demand-bound-prelim}(c),
the spending increase on $G_i$ due to this price
increase is at most $(tp_i)\left(\frac{x_i}{t^{\alpha}}\right) - x_i p_i = (t^{1 - \alpha} -1)x_i p_i.$

By Lemma \ref{lem:calculus-results}(a), $t^{1-\alpha}-1 \leq (1-\alpha)(t-1)$. Hence $2(t^{1-\alpha}-1) p_i x_i$, which is twice the upper bound on the spending drawn from other goods due to the price increase, satisfies
\begin{eqnarray*}
2(t^{1-\alpha}-1) p_i x_i &\leq & 2(1-\alpha) (1-2\delta)^{-\gamma/\beta}(t-1) p_i w_i \\
&\leq & \bar{\alpha} w_i \left(1+\frac{\lambda(1+\delta)}{2(1-\lambda(1+\delta))}\right)^{-1} |\Delta p_i| \\
&\leq & \bar{\alpha} w_i |\Delta p_i|.
\end{eqnarray*}

\noindent\textbf{Case 1(b):} Price $p_i$ is reduced to $t p_i$, where $t<1$.

By Lemma~\ref{lem:complementary-demand-bound-prelim}(d),
the spending decrease on $G_i$ due to this price decrease is at most
$x_i p_i - (tp_i)\left(\frac{x_i}{t^{\alpha}}\right) = (1-t^{1-\alpha})x_i p_i.$

By price update rule \eqref{eq:tat-rule-warehouse}, $1 > t \geq 1-\lambda(1+\delta)$. By Lemma \ref{lem:calculus-results}(b), $(1-t^{1-\alpha}) \leq \left(1+\frac{\lambda(1+\delta)}{2(1-\lambda(1+\delta))}\right) (1-\alpha)(1-t)$. Hence $2(1-t^{1-\alpha}) p_i x_i$, which is twice the upper bound on the spending lost to other goods due to the price reduction, satisfies
\begin{eqnarray*}
&& 2(1-t^{1-\alpha}) p_i x_i \\
&\leq & 2 \left(1+\frac{\lambda(1+\delta)}{2(1-\lambda(1+\delta))}\right) (1-\alpha)(1-t) p_i  (1-2\delta)^{-\gamma/\beta} w_i \\
&\leq & \bar{\alpha} w_i |\Delta p_i|.
\end{eqnarray*}

\medskip

\noindent\textbf{Step 2:} Apply Lemma \ref{lem:change-PF} with the result of Step 1 to show that the potential function $\phi$ stays the same or decreases after a price update.

We assume $\Delta p_i > 0$. The proof is symmetric for $\Delta p_i < 0$. As the goods are pairwise complements, when $\Delta p_i > 0$, $S_{inc} = 0$ and $\Delta s_i = S_{dec}$.

\noindent\textbf{Case 2(a):}  $\mbox{sign}(x_i-\widetilde{w}_i)$ is not flipped and $x_i$ moves towards $\widetilde{w}_i$.

By Lemma \ref{lem:change-PF}, the change to $\phi$ is at most $-\widetilde{w}_i |\Delta p_i| + 2 |\Delta s_i| + c_1 \lambda p_i |\bar{x}_i - \widetilde{w}_i| + c_2 \delta w_i |\Delta p_i|$. Case 1(a) gives $2 |\Delta s_i| \leq \bar{\alpha} w_i |\Delta p_i|$. Noting that $\widetilde{w}_i/w_i\geq 1-\delta$, this change is at most $(\bar{\alpha} + c_1 + c_2 \delta - (1-\delta)) \lambda p_i |\bar{x}_i - \widetilde{w}_i|$. The second condition in this lemma implies this change is zero or negative.

\noindent\textbf{Case 2(b):} $\mbox{sign}(x_i-\widetilde{w}_i)$ is not flipped and $x_i$ moves away from $\widetilde{w}_i$, or $\mbox{sign}(x_i-\widetilde{w}_i)$ is flipped.

By Lemma \ref{lem:change-PF}, the change to $\phi$ is at most $-p_i |\bar{x}_i - \widetilde{w}_i| + \widetilde{w}_i |\Delta p_i| + c_1 \lambda p_i |\bar{x}_i - \widetilde{w}_i| + c_2 \delta w_i |\Delta p_i|$. Noting that $\widetilde{w}_i/w_i\leq 1+\delta$, this change is at most $\left((1+\delta+c_1+c_2\delta)\lambda - 1\right) p_i |\bar{x}_i - \widetilde{w}_i|$. The third condition in this lemma implies this change is zero or negative.
\end{proof}

\section{Bounds on the Warehouse Sizes}\label{app:bdd-warehouse-size}

\begin{rlemma}{lem:phase1-war-bdd}
In Phase 1, the total net change to $v_i$ is bounded by
$O(\frac{w_i}{\lambda} d(f_{\mbox{\tiny I}}) + \frac{w_i}{\lambda\beta} d(2) \log \frac{\beta}{\delta})$.
\end{rlemma}
\begin{proof}
In one day, $v_i$ shrinks by at most $(d(f) - 1) w_i$;
it can grow by at most $w_i$.
Since $f$ shrinks by a $1 - \Theta(\lambda)$ factor every $O(1)$ days
while $f \ge 2^{1/\beta}$,
during this part of Phase 1, $v_i$ can shrink by at most
$O(\sum_{i \ge 0} d(f_{\mbox{\tiny I}}[1 - \Theta(\lambda)]^i)w_i)
= O(\frac{w_i}{\lambda} d(f_{\mbox{\tiny I}}))$,
the equality following because $d(f)$ grows at least linearly.
In this part of Phase 1, $v(i)$ grows by at most
$w_i \log f_{\mbox{\tiny I}}
= O(w_i d(f_{\mbox{\tiny I}}))$.

The remainder of Phase 1 yields a further possible change of $d(2)$ to $v_i$ per day, for a
total of $O(\frac{w_i}{\lambda\beta} d(2) \log \frac{1}{\delta/\beta}) = O(\frac{w_i}{\lambda\beta} d(2) \log \frac{\beta}{\delta})$.
\end{proof}

\begin{rlemma}{lem:war-too-empt-impr}
Let $a_1, a_2, k > 0$.
Suppose that
$v_i \leq \vis - a_1 w_i $ and that $\kappa a_1 \ge 4 \lambda^2$.
Let $\tau$ be the time of a price update of $p_i$ to
$p_{i,1}$.
Suppose that henceforth $ p_{i} \leq e^{\bar{f}} p_{i,1} $
for some $\bar{f} \geq 0$.
If $k \ge \frac{2}{\kappa a_1}(\bar{f} + a_2)$, then
by time $\tau + (k+1)$ the warehouse
stock will have increased to more than $\vis - a_1 w_i $,
or by at least $a_2 w_i$, whichever is the lesser increase.
\end{rlemma}

\begin{lemma}\label{lem:war-too-full-impr}
Let $a_1, a_2, k > 0$.
Suppose that $v_i \geq \vis + a_1 w_i$.
Let $\tau$ be the time of a price update of $p_i$ to
$p_{i,1}$.
Suppose that henceforth $ p_{i} \geq e^{-\bar{f}} p_{i,1} $
for some $\bar{f} \geq 0$.
If $k \ge \frac{1}{\kappa a_1}(\bar{f} + a_2)$, then
by time $\tau + (k+1)$ the warehouse
stock will have decreased to less than $\vis + a_1 w_i $,
or by at least $a_2 w_i$, whichever is the lesser decrease.
\end{lemma}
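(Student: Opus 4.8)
The plan is to mirror the proof of Lemma~\ref{lem:war-too-empt-impr}, exploiting that the ``too full'' direction is in fact the easier one. As there, I first assume $v_i \ge \vis + a_1 w_i$ throughout the interval $[\tau,\tau+k+1]$; otherwise the warehouse stock has already dropped below $\vis + a_1 w_i$ and the conclusion holds in its first form.

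Next, fix attention on the price updates to $p_i$ occurring in $(\tau,\tau+k]$, say there are $l-1$ of them, and let the $l$-th be the first update after $\tau+k$; write these $l$ updates as multiplications by $1+\mu_1\Delta t_1,\dots,1+\mu_l\Delta t_l$, where $\Delta t_j$ is the time since the previous update and $\bar{z}_i = \mu_j w_i$ is the corresponding target excess demand. Over the interval preceding the $j$-th update the price is constant, so the net change to the warehouse stock is exactly $-(\bar{x}_i - w_i)\Delta t_j = -(\mu_j w_i + \kappa(v_i - \vis))\Delta t_j$; since $v_i - \vis \ge a_1 w_i$ throughout, this is a \emph{decrease} of at least $(\mu_j + \kappa a_1)\,w_i\,\Delta t_j$. (Here $\mu_j$ may be negative, and this is precisely the case the argument must absorb.)

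The key point --- and where the proof departs from, and is simpler than, that of Lemma~\ref{lem:war-too-empt-impr} --- is that I only need a lower bound on $\sum_j \mu_j \Delta t_j$, for which the universal inequality $1+x \le e^x$ suffices, with no restriction on the size of the price step and hence no hypothesis relating $\kappa a_1$ to $\lambda^2$. Indeed $e^{-\bar{f}} \le p_i / p_{i,1} = \prod_{1\le j\le l}(1+\mu_j\Delta t_j) \le e^{\sum_j \mu_j \Delta t_j}$, so $\sum_j \mu_j \Delta t_j \ge -\bar{f}$. Summing the per-interval decreases and using $\sum_j \Delta t_j \ge k$ yields a total decrease of at least $\bigl(\sum_j \mu_j \Delta t_j + \kappa a_1 k\bigr)w_i \ge (-\bar{f} + \kappa a_1 k)\,w_i$, which is $\ge a_2 w_i$ as soon as $k \ge \frac{1}{\kappa a_1}(\bar{f} + a_2)$. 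Contrast the factor $2$ and the requirement $4\lambda^2 \le \kappa a_1$ in Lemma~\ref{lem:war-too-empt-impr}, which arose only because there one needed the two-sided estimate $1+x \ge e^{x-2x^2}$, valid only for $|x|\le\tfrac12$.

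I expect the only delicate points to be purely bookkeeping: pinning down the sign conventions in the warehouse-stock accounting (the ``too full'' case flips several signs relative to Lemma~\ref{lem:war-too-empt-impr}), and phrasing the dichotomy ``decreased below $\vis + a_1 w_i$, or by at least $a_2 w_i$'' so that the ``assume throughout'' reduction is clean. There is no substantive new idea beyond the observation that the too-full direction uses only the one-sided bound $1+x \le e^x$.
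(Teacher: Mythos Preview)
Your proposal is correct and follows essentially the same argument as the paper: assume $v_i \ge \vis + a_1 w_i$ throughout, express each per-interval stock decrease as $(\mu_j + \kappa a_1)w_i\Delta t_j$ via the identity $\bar{x}_i - w_i = \mu_j w_i + \kappa(v_i - \vis)$, and use $1+x \le e^x$ together with the price floor $p_i \ge e^{-\bar f}p_{i,1}$ to bound $\sum_j \mu_j \Delta t_j \ge -\bar f$. Your observation that this direction needs only the one-sided estimate (hence no $4\lambda^2 \le \kappa a_1$ hypothesis and no factor $2$) exactly matches the paper's treatment.
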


\begin{proof} Suppose that $v_i \geq \vis + a_1 w_i $ throughout (or the result holds trivially).

Then each price change by a multiplicative $(1 + \mu \Delta t)$ is associated
with a target excess demand $\bar{z}_i = \bar{x}_i -w_i - \kappa (v_i -v_i^*)$,
where $\zbi = \mu w_i$.
Furthermore, the decrease to the warehouse stock since the previous price
update is exactly $(\xbi - w_i)\Delta t = [\mu w_i + \kappa (v_i -v_i^*)]\Delta t \ge
(\mu + \kappa a_1)w_i  \Delta t$.

Note that $1 + x \leq e^x$ for $|x| \leq 1$.
Thus $1 + \mu \Delta t \le e^{\mu \Delta t}$.

Suppose that over the next $k$ days there are $l-1$ price changes;
let the next $l$ price changes be by $1 + \mu_1 \Delta t_1, 1 + \mu_2 \Delta t_2,
\cdots, 1 + \mu_l \Delta t_l$.
Note that the total price change satisfies
$e^{- \bar{f}} \le \Pi_{1\le i \le l} (1 + \mu_i \Delta t_i)
\le e^{\sum_{1\le i \le l} \mu_i \Delta t_i}$.
Thus $\sum_{1\le i \le  l} \mu_i \Delta t_i \ge -\bar{f}$.

We conclude that when the $l$-th price change occurs,
the warehouse stock will have decreased by at least
$\sum_{1\le i \le l}(\mu_i  + \kappa a_1)w_i \Delta t_i \ge
(- \bar{f} + k\kappa a_1)w_i$.
If $k \ge \frac{1}{\kappa a_1}(\bar{f} + a_2)$, then
the warehouse stock decreases by at least $a_2 w_i$.
\end{proof}

\medskip

\begin{rtheorem}{lem:good-wrhs}
Suppose that the ratios $\chi_i/w_i$ are all equal.
Suppose that the prices are always $f_{\mbox{\tiny I}}$-bounded.
Also suppose that each price is updated at least once a day.
Suppose further that at the start of Phase 1 the warehouses are
all safe.
Finally, suppose that for all $i$:
\begin{enumerate}
\item
$\chi_i \ge \frac{512}{\beta} w_i$ and
$\chi_i \ge 8 v(f_{\mbox{\tiny I}}) w_i $.
\item
$\delta = \frac{\kappa \chi_i}{2 w_i}.$
\item
$\lambda^2 \le \frac {\kappa \chi_i} {32 w_i}$.
\end{enumerate}
Then the warehouse stocks never go outside their outer buffers
\emph{(}i.e.\ they never overflow
or run out of stock\emph{)};
furthermore,
after
$D(f_{\mbox{\tiny I}}) + \frac {32}{\beta} + \frac{2}{\kappa}$
days every warehouse will be safe thereafter.
\end{rtheorem}

\begin{proof}
We will consider warehouse $i$. We will say that $v_i$ lies in a particular zone
to specify how full or empty the warehouse is.

After $D(f_{\mbox{\tiny I}})$ days, Phase 2 has been reached.
By the first condition, in this period of time the warehouse stock can change by at most
$v(f_{\mbox{\tiny I}})w_i \le \chi_i/8$,
so $v_i$ can have moved out by at most one zone;
thus it lies in the middle buffer or a more central zone.

We show that henceforth the tendency is to improve, i.e.\ move toward the central zone,
but there can be fluctuations of up to one zone width. The result is that every
warehouse remains within its outer zone, and after a suitable time they will all
be in either their inner or central zone.

In Phase 2, the prices are $(1-2\delta)^{-1/\beta}$ bounded, we can conclude that
they are in the range $[1-2\delta/\beta, 1 + 4\delta/\beta]$ if $2\delta/\beta \le \frac 12$ and $\delta \le \frac 14$.
Further, this is contained in the range $[e^{-4\delta/\beta}, e^{4\delta/\beta}]$.
Hence $p_i$ can change by at most a factor of $e^{\pm 8\delta/\beta}$.

First we show that $v_i$ can move outward by at most one zone width. By Lemma \ref{lem:war-too-full-impr} (taking $a_1$ such that $a_1 w_i$ is the width of one zone, i.e.\ $a_1 = \frac 18 \chi_i/w_i$, $a_2=0$ and $\bar{f}=8\delta/\beta$),
after $8\delta/(\beta\kappa a_1)$ days the value of $v_i$ will have returned to value $v_i(t)$ or remained below this value.
During this period of time, the stock can increase by at most $8\delta w_i/(\beta\kappa a_1)$.
Note that by $\kappa \chi /2 = \delta w_i$, $a_1 = \frac 18 \chi_i/w_i$ and the first condition,
$8\delta w_i/(\beta\kappa a_1) = 32 w_i/\beta \leq \frac{1}{16} \chi_i$, which is half the width of a zone. This guarantees that the stock will never be overflow.

By Lemma \ref{lem:war-too-empt-impr} (taking $a_1 = \frac 18 \chi_i/w_i$, $a_2 = \frac 14 \chi_i/w_i$ and $\bar{f}=8\delta/\beta$), $v_i$ reaches the upper central zone after $(8\delta/\beta + a_2)/(\kappa a_1) = \frac{32}{\beta} + \frac{2}{\kappa}$ days. Applying the argument in the last paragraph anew shows that henceforth $v_i$ remains within the upper inner buffer.

We apply the same argument to the low zones using Lemma \ref{lem:war-too-empt-impr} (here we need to use the third condition).
The same results are achieved, but they take up to twice as long, and the possible
increase in stock is twice as large as the possible decrease in the previous
case, but still only one zone's worth.
\end{proof} 

\section{Unwinding the Conditions in the Complementary Case}\label{sect:unwinding1}

Lemma \ref{lem:PF-change-no-price-update}, Theorem \ref{lem:ongoing-price-change-1} and Theorem \ref{lem:good-wrhs} require several constraints on the parameters $\kappa, \delta, \lambda, c_1, c_2$.
We unwind these conditions to show how these parameters depend on the market parameters $\alpha,\gamma$ and $\beta$. We list the conditions below:

\begin{enumerate}
\item $4\kappa (1+c_2) \leq \lambda c_1 \leq 1/2$;
\item $(1-2\delta)^{-1/\beta} \leq (2-\delta)^{1/\gamma}$;
\item $\bar{\alpha} + c_1 + c_2 \delta \leq 1-\delta$ where
$$\bar{\alpha} = 2(1-\alpha)(1-2\delta)^{-\gamma / \beta} \left(1+\frac{\lambda(1+\delta)}{2(1-\lambda(1+\delta))}\right);$$
\item $(1+\delta+c_1+c_2\delta)\lambda \leq 1$;
\item $\chi_i \ge \frac{512}{\beta} w_i$ and $\chi_i \ge 8 v(f_{\mbox{\tiny I}}) w_i $;
\item $\delta = \frac{\kappa \chi_i}{2 w_i}$;
\item $\lambda^2 \le \frac {\kappa \chi_i} {32 w_i}$.
\end{enumerate}

Recall that without loss of generality we may assume $\chi_i/w_i$ are the same for all $i$.
Let $r = \frac{\chi_i}{w_i}$. When $r\geq \max\left\{\frac{512}{\beta},8v(f_I)\right\}$, Condition (5) is satisfied.

We first impose that
\begin{equation}\label{eq:unwind-11}
\delta \leq \min\left\{\frac{\beta}{2\gamma},\frac{1}{4}\right\}
, \lambda \leq \frac{3}{7}, c_1 = \delta, c_2 = 2.
\end{equation}
Condition (4) is then satisfied. Condition (1) becomes
\begin{equation}\label{eq:unwind-12}
\frac{24}{r}\leq \lambda.
\end{equation}

By Lemma \ref{lem:calculus-results}(e), $(1-2\delta)^{-\gamma/\beta} \leq 1 + \frac{4\gamma}{\beta}\delta$
as $2\gamma/\beta\leq 1/2$ and $\gamma/\beta\geq 1$.
Thus Condition (2) is satisfied when $1 + \frac{4\gamma}{\beta}\delta \leq 2-\delta$, which is equivalent to
\begin{equation}\label{eq:unwind-10}
\delta \leq \frac{\beta}{4\gamma + \beta}.
\end{equation}

Condition (3) is satisfied when $(8 + 4\gamma/\beta)\delta + 7\lambda/6 \leq \ln\frac{1}{2(1-\alpha)}$: this implies $(8 + 4\gamma/\beta)\delta + \frac{1+\delta}{2(1-\lambda(1+\delta))}\lambda \leq \ln\frac{1}{2(1-\alpha)}$ and hence $2(1-\alpha) \exp\left(4\delta\gamma/\beta + \frac{1+\delta}{2(1-\lambda(1+\delta))}\right) \leq \exp(-8\beta)$. This further implies
$$\bar{\alpha} = 2(1-\alpha)(1-2\delta)^{-\gamma/\beta}\left(1+\frac{1+\delta}{2(1-\lambda(1+\delta))}\right) \leq 1-4\delta.$$

When we further impose that
\begin{equation}\label{eq:unwind-13}
(8 + 4\gamma/\beta)\delta \leq \frac{1}{2}\ln\frac{1}{2(1-\alpha)},
\end{equation}
Condition (3) can be satisfied when
\begin{equation}\label{eq:unwind-14}
\lambda \leq \frac{3}{7}\ln\frac{1}{2(1-\alpha)}.
\end{equation}

Using the bounds on $\delta$ in \eqref{eq:unwind-11}, \eqref{eq:unwind-10} and \eqref{eq:unwind-13}, and substituting into Condition (6), yields
\begin{eqnarray*}
\kappa &\leq & \frac{2}{r}\cdot\min\left\{\frac{\beta}{2(\gamma+\beta)},\frac{1}{4}, \frac{\beta}{4\gamma + \beta}, \frac{1}{2(8 + 4\gamma/\beta)}\ln\frac{1}{2(1-\alpha)}\right\} \\
&=& \frac{2}{r}\cdot\min\left\{ \frac{\beta}{4\gamma + \beta}, \frac{1}{2(8 + 4\gamma/\beta)}\ln\frac{1}{2(1-\alpha)}\right\}.
\end{eqnarray*}

Using the bounds on $\lambda$ in \eqref{eq:unwind-11}, \eqref{eq:unwind-12} and \eqref{eq:unwind-14}, together with Condition (7), yields
$$\frac{24}{r} \leq \lambda \leq \min\left\{\frac{3}{7},\frac{3}{7}\ln\frac{1}{2(1-\alpha)},\sqrt{\frac{\kappa r}{32}}\right\}.$$
Note that $r = \chi_i / w_i$ needs to be sufficiently large to ensure that there is a choice of $\lambda$ which satisfies both the upper and lower bounds.

The market is defined by the parameters $\alpha,\gamma,\beta$. Then $\kappa,\lambda,r$ are chosen to satisfy the conditions.
The price update rule uses $\kappa,\lambda$, while the warehouse sizes are lower bounded by $r w_i$.
The parameters $c_1,c_2$ are needed only for the analysis.

\section{Markets with Mixtures of Substitutes and Complements}

\subsection{Phase 1 and One-Time Markets}

As with the case of markets of complementary goods, it suffices to analyze the one-time
markets in Phase 1.

\begin{lemma}\label{lem:f-bounded-demand-bound-2}
When the market is $f$-bounded,
\begin{enumerate}
\item if $p_i = r p_i^*/f$ where $1\leq r \leq f^2$, then $x_i \geq w_i f^\beta r^{-E}$;
\item if $p_i = f p_i^*/q$ where $1\leq q \leq f^2$, then $x_i \leq w_i f^{-\beta} q^E$.
\end{enumerate}
\end{lemma}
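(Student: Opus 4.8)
The plan is to follow the proof of Lemma~\ref{lem:f-bounded-demand-bound-1}, but to replace its two ingredients — uniform price scaling (controlled by $\gamma$) and a single-coordinate price change (controlled by $\alpha$) — by integrated forms of the Adverse Market Elasticity (Definition~\ref{def:den-elas}) and of the price-elasticity upper bound $E$ (Definition~\ref{def:price-elas}), respectively. I describe part~1; part~2 is symmetric.

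First I would reduce to the case $r=1$. Fix an $f$-bounded vector $p$ with $p_i = r p_i^*/f$ and $1\le r\le f^2$, and let $p^A$ be the vector agreeing with $p$ except that $p^A_i = p_i^*/f$; since $p_j\in[p_j^*/f, f p_j^*]$ for every $j$, the vector $p^A$ is still $f$-bounded. Passing from $p^A$ to $p$ multiplies $p_i$ by the factor $r\ge 1$ while holding all other prices fixed, so integrating $d\ln x_i \ge -E\,d\ln p_i$ (price elasticity at most $E$) over $p_i\in[p_i^*/f,\, r p_i^*/f]$ gives $x_i(p) \ge x_i(p^A)\,r^{-E}$. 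It therefore suffices to show $x_i(p^A) \ge w_i f^{\beta}$.

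For that, I would construct a price path from the equilibrium $p^*$ (where $x_i=w_i$) to $p^A$ along which, at every point, exactly the perturbation described in Definition~\ref{def:den-elas} is taking place, so that the limiting bound stated there integrates to the desired inequality. Put $q_j := p^A_j = p_j$ for $j\ne i$; since $q_j/p_j^*\in[1/f, f]$ we have $|\ln(q_j/p_j^*)|\le\ln f$. Parametrize $p_i(\theta) = p_i^* f^{-\theta}$ and $p_j(\theta) = p_j^*(q_j/p_j^*)^{\theta}$ for $\theta\in[0,1]$, so $p(0)=p^*$ and $p(1)=p^A$. Traversed in the direction of increasing $p_i$ (decreasing $\theta$), an infinitesimal step multiplies $p_i$ by $f^{\,d\theta}$ and multiplies each $p_j$ by a factor of log-magnitude $|\ln(q_j/p_j^*)|\,d\theta \le (\ln f)\,d\theta$, hence within $[f^{-d\theta}, f^{\,d\theta}]$ — precisely an infinitesimal instance of Definition~\ref{def:den-elas} with $1+\delta = f^{\,d\theta}$. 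Thus $d\ln x_i \ge -\beta\,d\ln p_i$ along this path, and integrating from $\theta=0$ to $\theta=1$ gives $\ln(x_i(p^A)/w_i) \ge -\beta\,(\ln(p_i^*/f)-\ln p_i^*) = \beta\ln f$, i.e.\ $x_i(p^A)\ge w_i f^{\beta}$. Combining with the reduction above yields $x_i(p)\ge w_i f^{\beta} r^{-E}$. Part~2 follows from the mirror path that first raises $p_i$ to $f p_i^*$ (giving $x_i \le w_i f^{-\beta}$) and then lowers it by the factor $1/q$ (introducing an extra factor at most $q^{E}$).

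The step requiring the most care — and the main obstacle — is verifying that the chosen path stays, at every point, inside the ``one price moves, the others move by no larger a fraction'' regime of Definition~\ref{def:den-elas}, and that the definition's limiting ratio may legitimately be integrated along it (which needs the usual smoothness of $x_i$ as a function of prices). The subtle point is pacing: the total multiplicative budget for the other prices equals that of $p_i$ (both are the factor $f$), but at intermediate times $p_i$ has moved only part of the way, so the other prices must be advanced proportionally — hence the common exponent $\theta$ in the parametrization, rather than, say, moving the other prices first. I would also remark that the argument uses neither $\beta>0$ nor Assumption~\ref{ass:sp-trans}, and that specializing $E$ to $\alpha$ and $\beta$ to $2\alpha-\gamma$ recovers Lemma~\ref{lem:f-bounded-demand-bound-1}.
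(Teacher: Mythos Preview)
Your proposal is correct and follows essentially the same two-step route as the paper's proof: first move from $p^*$ to $(p_{-i}, p_i^*/f)$ along a path of proportionate multiplicative changes to all logs so that the Adverse Market Elasticity bound integrates to $x_i\ge w_i f^{\beta}$, and then raise $p_i$ by the factor $r$ using the upper elasticity bound $E$ to pick up the $r^{-E}$. Your explicit parametrization $p_j(\theta)=p_j^*(q_j/p_j^*)^{\theta}$ and your discussion of the ``pacing'' constraint are exactly what the paper summarizes in one line as ``smooth proportionate multiplicative changes (or equivalently, proportionate linear changes to the terms $\log p_j$),'' so you have simply unpacked what the paper leaves terse.
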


\begin{proof}
We prove the first part; the second part is symmetric.
Let $(p'_{-i},rp^*_i/f)$ be the $f$-bounded prices maximizing $x_i$
when $p_i = rp^*_i/f$.
First consider adjusting the prices from $p^*$ to $(p'_{-i},p^*_i/f)$
by smooth proportionate multiplicative changes (or equivalently, proportionate
linear changes to the terms $\log p_j$ for all $j$).
From  the definition of $\beta$ in Definition \ref{def:den-elas}, 
it is easy to show that the resulting demand for $x_i$ is at least $w_i f^{\beta}$.
Now increase $p_i$ by a factor of $r$.
As by assumption $E$ is the upper bound on the price elasticity,
the increase in the value of $p_i$ reduces $x_i$ by at most $r^{-E}$, yielding the bound 
$x_i \geq w_i f^\beta r^{-E}$.
\end{proof}

\begin{lemma}\label{lem:conv-one-time-market-2}
Suppose that $\beta > 0$. Further, suppose that the prices are updated independently using price update rule \eqref{eq:tat-rule},
 and that $0<\lambda\leq \frac{1}{2E-1}$. Let $p$ denote the current price vector and $p'$ denote the price vector after one day.
\begin{enumerate}
\item If $f(p)^\beta \geq 2$, then $f(p') \leq \left(1-\frac{\lambda}{2}\right) f(p)$.
\item If $f(p)^\beta \leq 2$, then $f(p') \leq f(p)^{1-\lambda\beta/(2\ln 2)}$.
\end{enumerate}
\end{lemma}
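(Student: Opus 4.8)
The plan is to mirror, essentially line for line, the proof of Lemma~\ref{lem:conv-one-time-market-1}, replacing the appeals to Lemma~\ref{lem:f-bounded-demand-bound-1} by appeals to Lemma~\ref{lem:f-bounded-demand-bound-2}, and using the hypothesis $\lambda \le \frac{1}{2E-1}$ wherever the complements proof exploited $\alpha \le 1$ to get the monotonicity of the auxiliary functions.

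First I would take a good $G_i$ whose price is on the low side, $p_i = r\,p_i^*/f(p)$ with $1 \le r \le f(p)^2$. By Lemma~\ref{lem:f-bounded-demand-bound-2}(1), $x_i \ge w_i f(p)^\beta r^{-E}$, so $\frac{x_i - w_i}{w_i} \ge f(p)^\beta r^{-E} - 1$, and the update rule~\eqref{eq:tat-rule} raises $p_i$ to at least $\frac{p_i^*}{f(p)}\, h_1(r)$ with $h_1(r) := r\left[1 + \lambda\min\{1, f(p)^\beta r^{-E} - 1\}\right]$. I would show $h_1$ is non-decreasing on $[1, f(p)^2]$: where the $\min$ equals $1$, $h_1(r) = (1+\lambda)r$ is increasing; where it equals $f(p)^\beta r^{-E}-1$ — a regime that forces $f(p)^\beta r^{-E} \le 2$ — one computes $h_1'(r) = (1-\lambda) + (1-E)\lambda f(p)^\beta r^{-E} \ge (1-\lambda) - 2(E-1)\lambda = 1 - (2E-1)\lambda \ge 0$ by hypothesis. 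Hence $h_1(r) \ge h_1(1)$, i.e.\ $p_i' \ge \frac{p_i^*}{f(p)}\left[1 + \lambda\min\{1, f(p)^\beta-1\}\right]$. The symmetric argument for a good with $p_j = f(p)\,p_j^*/q$, $1\le q\le f(p)^2$, uses Lemma~\ref{lem:f-bounded-demand-bound-2}(2) together with $h_2(q) := \frac{1}{q}\left[1 + \lambda\min\{1, f(p)^{-\beta}q^E - 1\}\right]$; on the piece where the $\min$ is $\le 1$ (so $f(p)^{-\beta}q^E \le 2$) one gets $h_2'(q) = \frac{1}{q^2}\left(-(1-\lambda) + (E-1)\lambda f(p)^{-\beta}q^E\right) \le \frac{1}{q^2}\left(2(E-1)\lambda - (1-\lambda)\right) = -\frac{1-(2E-1)\lambda}{q^2} \le 0$, again exactly under $\lambda \le \frac{1}{2E-1}$; so $h_2$ is non-increasing and $p_j' \le f(p)\,p_j^*\left[1 + \lambda(f(p)^{-\beta}-1)\right]$.

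Combining these, after a day in which every price is updated at least once, $f(p') \le f(p)\cdot\max\left\{1 - \lambda(1 - f(p)^{-\beta}),\ \frac{1}{1+\lambda\min\{1, f(p)^\beta-1\}}\right\}$, and from here the proof is identical to that of Lemma~\ref{lem:conv-one-time-market-1}: when $f(p)^\beta \ge 2$, $f(p)^{-\beta}\le\frac12$ bounds the first term by $1-\frac{\lambda}{2}$ and Lemma~\ref{lem:cal-result}(a) bounds the second by $1-\frac{\lambda}{2}$; when $f(p)^\beta \le 2$, Lemma~\ref{lem:cal-result}(b) bounds the first by $f(p)^{-\lambda\beta/(2\ln 2)}$ and Lemma~\ref{lem:cal-result}(c) bounds the second by $f(p)^{-\lambda\beta} \le f(p)^{-\lambda\beta/(2\ln 2)}$, giving $f(p') \le f(p)^{1-\lambda\beta/(2\ln 2)}$. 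The only genuinely new ingredient is the bookkeeping in the two derivative computations — tracking that the elasticity bound $E$ enters through $(1-E)$ in $h_1'$ and $(E-1)$ in $h_2'$, and that restricting to the regime $f(p)^{\pm\beta}(\cdot)^{\mp E}\le 2$ lets one absorb that into the linear bound $1-(2E-1)\lambda$ — so the (mild) obstacle is just confirming that $\lambda\le 1/(2E-1)$ is precisely what keeps $h_1$ non-decreasing and $h_2$ non-increasing; everything else is a transcription of the earlier argument.
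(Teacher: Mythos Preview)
Your proposal is correct and matches the paper's own proof essentially line for line: the paper likewise invokes Lemma~\ref{lem:f-bounded-demand-bound-2}, defines the same auxiliary functions (there called $h_3,h_4$), performs the identical derivative computations exploiting $f(p)^{\pm\beta}(\cdot)^{\mp E}\le 2$ on the relevant piece to get the bound $1-(2E-1)\lambda\ge 0$, and then concludes exactly as in Lemma~\ref{lem:conv-one-time-market-1}.
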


\begin{proof}
Suppose that $p_i = r \frac{p_i^*}{f(p)}$, where $1\leq r\leq f(p)^2$. By Lemma \ref{lem:f-bounded-demand-bound-2}, $x_i \geq w_i f(p)^\beta r^{-E}$ and hence $\frac{x_i-w_i}{w_i} \geq f(p)^\beta r^{-E} - 1$. When $p_i$ is updated using price update rule \eqref{eq:tat-rule}, the new price $p_i'$ satisfies
$$p_i' \geq r \frac{p_i^*}{f(p)} \left[1 + \lambda\cdot\min\left\{1,f(p)^\beta r^{-E} - 1\right\}\right].$$
Let $h_3(r) := r \left[1 + \lambda\cdot\min\left\{1,(f(p)^\beta r^{-E} - 1)\right\}\right]$. When $f(p)^\beta r^{-E}\geq 2$, $\frac{d h_3(r)}{dr} = 1+\lambda > 0$; When $f(p)^\beta r^{-E}\leq 2$,
$$\frac{d}{dr} h_3(r) = 1 - \lambda\left(1+(E-1) f(p)^\beta r^{-E}\right) \geq 1-\lambda (2E-1) \geq 0.$$
Thus
$$p_i' \geq \frac{p_i^*}{f(p)} \left[1 + \lambda\cdot\min\left\{1,f(p)^\beta - 1\right\}\right].$$

Similarly, suppose $p_j$ satisfies $p_j = \frac{1}{q} f(p) p_j^*$, where $1\leq q\leq f(p)^2$. By Lemma \ref{lem:f-bounded-demand-bound-2}, $x_j \leq w_j f(p)^{-\beta} q^E$ and hence $\frac{x_j-w_j}{w_j}\leq f(p)^{-\beta} q^E-1$. When $p_j$ is updated using price update rule \eqref{eq:tat-rule}, the new price $p_j'$ satisfies
$$p_j' \leq \frac{1}{q} f(p) p_j^* \left( 1 + \lambda\cdot\min\left\{1,f(p)^{-\beta} q^E-1\right\} \right).$$
Let $h_4(q) := \frac{1}{q}\left[1 + \lambda\cdot\min\left\{1,f(p)^{-\beta} q^E-1\right\}\right]$. When\\
$f(p)^{-\beta} q^E \geq 2$, $\frac{d}{dq} h_4(q) = -\frac{1}{q^2}(1+\lambda) < 0$. When $f(p)^{-\beta} q^E \leq 2$,
\begin{eqnarray*}
\frac{d}{dq} h_4(q) &=& \frac{1}{q^2}\left[-1 + \lambda\left(1+(E-1) f(p)^{-\beta} q^E\right)\right] \\
&\leq & \frac{1}{q^2} \left(-1 + \lambda(2E-1)\right) \leq 0.
\end{eqnarray*}
Thus
\begin{eqnarray*}
p_j' &\leq & f(p) p_j^* \left[ 1 + \lambda\cdot\min\left\{ 1, f(p)^{-\beta} -1\right\} \right] \\
&=& f(p) p_j^* \left[ 1 + \lambda (f(p)^{-\beta} -1) \right].
\end{eqnarray*}

The remainder of the proof is exactly same as the the final part of the proof of Lemma \ref{lem:conv-one-time-market-1}.
\end{proof}

\subsection{Ongoing Markets}

\begin{lemma}\label{lem:bound-M2}
Suppose Assumption 1 holds, $\lambda E\leq 1$ and $\lambda\leq \frac 12$, then $|\Delta S_s| \leq (\alpha' + 2(E-1)) x_i |\Delta p_i|$.
\end{lemma}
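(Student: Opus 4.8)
The strategy is to reduce the bound to Assumption~\ref{ass:sp-trans} together with an elementary bound on $\Delta s_i$, the change in spending on $G_i$ itself, and to tie the three quantities together by budget conservation. First I would record the basic identity: buyers spend their entire budgets, so the total spending $\sum_\ell b_\ell$ is invariant under a change of $p_i$ alone; since every good other than $G_i$ is either a substitute or a complement of $G_i$, this gives $\Delta s_i + \Delta S_s + \Delta S_c = 0$. I would also record the sign structure implied by the definitions of substitute and complement: when $p_i$ rises, the demand for, and hence the spending on, each complement cannot increase and that on each substitute cannot decrease, so $\Delta S_c \le 0 \le \Delta S_s$ (all inequalities reverse when $p_i$ falls). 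It suffices to treat $\Delta p_i > 0$; since the update rules change $p_i$ by a factor in $[1-\lambda,1+\lambda]$, write $t \in (1,1+\lambda]$ for this factor, so that $|\Delta p_i| = (t-1)p_i$ and $x_i|\Delta p_i| = (t-1)s_i$.

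Then the price-increase case splits. From the identity and $\Delta S_c \le 0$ we get $\Delta S_s = |\Delta S_c| - \Delta s_i$. If $\Delta s_i \ge 0$, then since $\Delta S_s \ge 0$ we have $|\Delta S_s| = |\Delta S_c| - \Delta s_i \le |\Delta S_c| \le \alpha' x_i |\Delta p_i|$ by Assumption~\ref{ass:sp-trans}, which already beats the claimed bound. If $\Delta s_i < 0$, then $|\Delta S_s| = |\Delta S_c| + |\Delta s_i|$, so I would bound $|\Delta s_i|$: since $E$ upper-bounds the price elasticity, the new demand satisfies $x_i' \ge x_i t^{-E}$, hence $s_i' = t p_i x_i' \ge t^{1-E} s_i$ and $|\Delta s_i| = s_i - s_i' \le (1-t^{1-E}) s_i$; Lemma~\ref{lem:calculus-results}(d) with $\ep = t-1 \le \lambda \le 1$ gives $1-t^{1-E} \le (E-1)(t-1)$, so $|\Delta s_i| \le (E-1)(t-1)s_i = (E-1)\,x_i|\Delta p_i|$. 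Combining, $|\Delta S_s| \le (\alpha' + (E-1))\,x_i|\Delta p_i| \le (\alpha' + 2(E-1))\,x_i|\Delta p_i|$.

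The factor $2(E-1)$ (rather than just $E-1$) and the hypotheses $\lambda E \le 1$, $\lambda \le \frac12$ are needed only in the decrease case. There $p_i$ is multiplied by $t = 1-u$ with $u \in (0,\lambda]$; now $\Delta S_c \ge 0 \ge \Delta S_s$, so budget conservation gives $|\Delta S_s| = \Delta s_i + |\Delta S_c|$. If $\Delta s_i \le 0$ this is at most $|\Delta S_c| \le \alpha' x_i|\Delta p_i|$ as before. If $\Delta s_i > 0$, then again $s_i' \le t^{1-E} s_i$, so $\Delta s_i \le (t^{1-E}-1)s_i$, and I would invoke Lemma~\ref{lem:calculus-results}(c) with $\ep = u$: the hypotheses are exactly what force $r := \max\{Eu/2,\, u\} \le \frac12 < 1$, whence $t^{1-E}-1 \le \frac{E-1}{1-r}u \le 2(E-1)u$ and so $|\Delta s_i| \le 2(E-1)u\,s_i = 2(E-1)\,x_i|\Delta p_i|$; summing with the Assumption~\ref{ass:sp-trans} bound finishes this case. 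The one step that is not a routine calculation is the sign bookkeeping in the first paragraph: $\Delta s_i$ can be as large as roughly $x_i|\Delta p_i|$ when own-price demand is inelastic (and $E$ may be only barely above $1$), and the argument succeeds precisely because budget conservation together with the substitute/complement sign conditions forces $\Delta s_i$ to enter $\Delta S_s$ only with the favorable sign in that regime, so it never appears in the final bound.
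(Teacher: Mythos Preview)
Your proposal is correct and follows essentially the same approach as the paper: both use budget conservation to write $|\Delta S_s| = |\Delta S_c| \pm \Delta s_i$, then bound $|\Delta S_c|$ via Assumption~\ref{ass:sp-trans} and bound the own-spending change $\Delta s_i$ via the elasticity inequality $x_i' \gtrless t^{-E}x_i$ together with Lemma~\ref{lem:calculus-results}(c),(d). The only difference is cosmetic: the paper handles each direction of price change in a single stroke by bounding $\Delta s_i$ from the relevant side, whereas you insert an extra (harmless but unnecessary) sub-case split on the sign of $\Delta s_i$.
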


\begin{proof}There are two cases.

\noindent\textbf{Case 1.} The price of $G_i$ is reduced from $p_i$ to $p_i' = t p_i$, where $t<1$.

Then $x_i' \leq t^{-E} x_i$ and hence $\Delta s_i \leq (t^{1-E}-1) p_i x_i$. Then
\begin{eqnarray*}
|\Delta S_s| =  |\Delta S_c| + \Delta s_i &\leq & \alpha' x_i |\Delta p_i| + (t^{1-E}-1) p_i x_i \\
&\leq & \alpha' x_i |\Delta p_i| + 2(E-1)(1-t) p_i x_i.
\end{eqnarray*}
The last inequality holds by applying Lemma \ref{lem:calculus-results}(c) with $\max\left\{\frac{E\lambda}{2},\lambda\right\}\leq 1/2$ and $t\geq 1-\lambda$. Noting that $(1-t) p_i = |\Delta p_i|$, completes the proof.

\noindent\textbf{Case 2.} The price of $G_i$ is raised from $p_i$ to $p_i' = t p_i$, where $t>1$.

Then $x_i' \geq t^{-E} x_i$ and hence $\Delta s_i \geq (t^{1-E}-1) p_i x_i$. Then
\begin{eqnarray*}
|\Delta S_s| = |\Delta S_c| - \Delta s_i &\leq & \alpha' x_i |\Delta p_i| + (1-t^{1-E}) p_i x_i\\
&\leq & \alpha' x_i |\Delta p_i| + (E-1)(t-1) p_i x_i.
\end{eqnarray*}
The last inequality holds by applying Lemma \ref{lem:calculus-results}(d). Noting that $(t-1) p_i = |\Delta p_i|$, completes the proof.
\end{proof}

The following lemma proves convergence in the market with mixtures of substitutes and complements while incorporating warehouses.

\smallskip

\begin{rlemma}{lem:ongoing-price-change-2}
Suppose Assumption 1 holds and $\beta$, as defined in Definition \ref{def:den-elas}, satisfies $\beta>0$. Suppose that the following conditions hold:
\begin{enumerate}
\item $f \leq (1-2\delta)^{-1/\beta} \leq (2-\delta)^{1/(2E-\beta)}$ since the last price update to $p_i$;
\item $2\alpha '(1-2\delta)^{-(2E-\beta)/\beta} + c_1 + c_2 \delta \leq 1-\delta$;
\item $\left(2\alpha ''(1-2\delta)^{-(2E-\beta)/\beta} + 1 + \delta + c_1 + c_2 \delta\right)\lambda \leq 1$,
\end{enumerate}
where $\alpha '' := \alpha' + 2(E-1)$.
Then, when a price $p_i$ is updated using rule \eqref{eq:tat-rule-warehouse}, the value of $\phi$ stays the same or decreases.
\end{rlemma}

\begin{proof}
The first condition is used with Lemma \ref{lem:f-bounded-demand-bound-2} to ensure that 
$x_i, \bar{x}_i  \leq (1-2\delta)^{-(2E-\beta)/\beta} w_i \leq (2-\delta) w_i$, and hence that $\frac{\bar{z}_i}{w_i}\leq 1$. By price update rule \eqref{eq:tat-rule-warehouse}, $|\Delta p_i|=\lambda p_i |\bar{x}_i - \widetilde{w}_i| / w_i$.

We assume $\Delta p_i > 0$. The proof is symmetric for $\Delta p_i < 0$. Recall that when $\Delta p_i > 0$, $|\Delta S_s| = |\Delta S_c| - \Delta s_i$.

\medskip

\noindent\textbf{Case 1:} $\mbox{sign}(x_i-\widetilde{w}_i)$ is not flipped and $x_i$ moves towards $\widetilde{w}_i$.

By Lemma \ref{lem:change-PF}, the change to $\phi$ is at most $-\widetilde{w}_i |\Delta p_i| + \Delta s_i + |\Delta S_c| + |\Delta S_s| + c_1 \lambda p_i |\bar{x}_i - \widetilde{w}_i| + c_2 \delta w_i |\Delta p_i|$, which is equal to $-\widetilde{w}_i |\Delta p_i| + 2 |\Delta S_c| + c_1 \lambda p_i |\bar{x}_i - \widetilde{w}_i| + c_2 \delta w_i |\Delta p_i|$. Noting $\widetilde{w}_i/w_i \geq 1-\delta$ and $x_i \leq (1-2\delta)^{-(2E-\beta)/\beta} w_i$, 
and applying Assumption~\ref{ass:sp-trans}, implies that 
this change is at most $\left(2 \alpha '(1-2\delta)^{-(2E-\beta)/\beta}+c_1+c_2\delta-(1-\delta)\right) w_i |\Delta p_i|$.

\noindent\textbf{Case 2:} $\mbox{sign}(x_i-\widetilde{w}_i)$ is not flipped and $x_i$ moves away from $\widetilde{w}_i$, or $\mbox{sign}(x_i-\widetilde{w}_i)$ is flipped.

By Lemma \ref{lem:change-PF}, the change to $\phi$ is at most $-p_i |\bar{x}_i - \widetilde{w}_i| + \widetilde{w}_i |\Delta p_i| - \Delta s_i + |\Delta S_c| + |\Delta S_s| + c_1 \lambda p_i |\bar{x}_i - \widetilde{w}_i| + c_2 \delta w_i |\Delta p_i|$, which is equal to $-p_i |\bar{x}_i - \widetilde{w}_i| + \widetilde{w}_i |\Delta p_i| + 2 |\Delta S_s| + c_1 \lambda p_i |\bar{x}_i - \widetilde{w}_i| + c_2 \delta w_i |\Delta p_i|$. Noting that $\widetilde{w}_i/w_i \leq 1+\delta$ and $x_i \leq (1-2\delta)^{-(2E-\beta)/\beta} w_i$, 
and applying Assumption~\ref{ass:sp-trans}, implies that this change is at most
$$\left((2\alpha ''(1-2\delta)^{-(2E-\beta)/\beta}+1+\delta+c_1+c_2\delta)\lambda - 1\right) p_i |\bar{x}_i - \widetilde{w}_i|.$$
\end{proof}

\section{Unwinding the Conditions in the Mixture Case}\label{sect:unwinding2}

Lemma \ref{lem:PF-change-no-price-update}, Lemma \ref{lem:ongoing-price-change-2} and Theorem \ref{lem:good-wrhs} require several constraints on the parameters $\kappa, \delta, \lambda, c_1, c_2$.
We unwind these conditions to show how these parameters depend on the market parameters $\beta$ and $\alpha '$. We list the conditions below:

\begin{enumerate}
\item $4\kappa (1+c_2) \leq \lambda c_1 \leq 1/2$;
\item $(1-2\delta)^{-1/\beta} \leq (2-\delta)^{1/(2E-\beta)}$;
\item $2\alpha '(1-2\delta)^{-(2E-\beta)/\beta} + c_1 + c_2 \delta \leq 1-\delta$;
\item $\left(2\alpha ''(1-2\delta)^{-(2E-\beta)/\beta} + 1 + \delta + c_1 + c_2 \delta\right)\lambda \leq 1$;
\item $\chi_i \ge \frac{512}{\beta} w_i$ and $\chi_i \ge 8 v(f_{\mbox{\tiny I}}) w_i $;
\item $\delta = \frac{\kappa \chi_i}{2 w_i}$;
\item $\lambda^2 \le \frac {\kappa \chi_i} {32 w_i}$.
\end{enumerate}

We first impose the conditions
\begin{equation}\label{eq:unwind-21}
\delta \leq \min\left\{\frac{\beta}{4(2E - \beta)},\frac{1}{4} \right\}, \lambda\leq 1,c_1 = \delta, c_2 = 2.
\end{equation}
As in Section \ref{sect:unwinding1}, let $r = \frac{\chi_i}{w_i}$. When  $r\geq \max\left\{\frac{512}{\beta},8v(f_I)\right\}$ and
\begin{equation}\label{eq:unwind-20}
\lambda\geq\frac{24}{r},
\end{equation}
Conditions (5) and (1) are satisfied.

By Lemma \ref{lem:calculus-results}(e), $(1-2\delta)^{-(2E-\beta)/\beta} \leq 1 + \frac{4(2E-\beta)}{\beta}\delta$ as
$\frac{2(2E - \beta)\delta}{\beta} \le 1/2$ and $\frac{2E-\beta}{\beta}\geq 1$.
Thus Condition (2) and (3) are satisfied when $1 + \frac{4(2E-\beta)}{\beta}\delta \leq 2-\delta$ and $2\alpha '\left(1 + \frac{4(2E-\beta)}{\beta}\delta\right) + 4\delta \leq 1$ respectively, which are equivalent to
\begin{equation}\label{eq:unwind-22}
\delta \leq \frac{\beta}{\beta+4(2E-\beta)}, \delta \leq \frac{(1-2\alpha ')\beta}{8\alpha '(2E-\beta) + 4\beta}.
\end{equation}

Condition (4) is satisfied when
$$\left(2\alpha '' \left(1 + \frac{4(2E-\beta)}{\beta}\delta\right) + 1 + 4\delta\right)\lambda\leq 1.$$
The bounds on $\delta$ in \eqref{eq:unwind-21} gives $\frac{4(2E-\beta)}{\beta}\delta\leq 1$ and $4\delta\leq 1$, hence Condition (4) is satisfied when
\begin{equation}\label{eq:unwind-23}
\lambda \leq \frac{1}{4\alpha ''+2} = \frac{1}{8E + 4\alpha ' - 6}.
\end{equation}

Using the bounds on $\delta$ in \eqref{eq:unwind-21} and \eqref{eq:unwind-22},
and substituting into Condition (6), yields
\begin{eqnarray*}
\kappa  \leq \frac{2}{r}\cdot\min\left\{ \frac{\beta}{\beta+4(2E-\beta)},\frac{(1-2\alpha ')\beta}{8\alpha '(2E-\beta) + 4\beta}\right\}.
\end{eqnarray*}

Using the bounds on $\lambda$ in \eqref{eq:unwind-21}, \eqref{eq:unwind-20} and \eqref{eq:unwind-23}, together with Condition (7), yields
$$\frac{24}{r}\leq \lambda \leq \min\left\{\frac{1}{8E + 4\alpha ' - 6},\sqrt{\frac{\kappa r}{32}}\right\}.$$

The market is defined by the parameters $E$, $\beta$ and $\alpha '$. Then $\kappa,\lambda,r$ are chosen to satisfy the conditions.
The price update rule uses $\kappa,\lambda$, while the warehouse sizes are lower bounded by $r w_i$.
The parameters $c_1,c_2$ are needed only for the analysis.

\subsection{Example Scenario: \boldmath{$N$}-Level Nested CES Type Utilities}

We focus on one particular good $i$.
Let $A_1,A_2,\cdots,A_N$ be the square nodes along the path from good $i$ to the root of the utility tree,
and let $\rho_1,\rho_2,\cdots,\rho_N$ be the associated $\rho$ values.
Let $\sigma_k = \frac{1}{1-\rho_k}$ for $1\leq k\leq N$.
Let $S_k$ denote the set of goods which are in the subtree rooted at $A_k$. 
Let $h_k$ denote the total spending on all goods in $S_k$ and let
$ANC(j)$ denote the least common ancestor of goods $i$ and $j$.

Keller derived the following formulae:
\begin{eqnarray*}
\frac{\partial x_i/\partial p_j}{x_i / p_j} &=& \frac{s_j}{h_N} (\sigma_N - 1) + \sum_{q=ANC(j)}^{N-1} \frac{s_j}{h_q} (\sigma_q - \sigma_{q+1})\\
\frac{\partial x_i/\partial p_i}{x_i / p_i} &=& -\sigma_1 + \frac{s_i}{h_N}(\sigma_N - 1) + \sum_{q=1}^{N-1} \frac{s_i}{h_q} (\sigma_q - \sigma_{q+1}).
\end{eqnarray*}
We now compute the Adverse Market Elasticity of good $i$.  When the price of good $i$ is reduced by a factor of $(1-\delta)$ (think of $\delta$ as being very small), raise the prices of all the complements of good $i$ by a factor of $1/(1-\delta)$ and reduce the prices of all the substitutes of good $i$ by a factor of $(1-\delta)$. By the above formulae, $x_i' \geq x_i t^{\beta_i}$, where
\begin{eqnarray*}
\beta_i &=& -\frac{\partial x_i/\partial p_i}{x_i / p_i} - \sum_{j\neq i} \left| \frac{\partial x_i/\partial p_j}{x_i / p_j} \right|\\
&=& \sigma_1 - \frac{s_i}{h_N}(\sigma_N - 1) - \sum_{q=1}^{N-1} \frac{s_i}{h_q} (\sigma_q - \sigma_{q+1}) \\
&& \hspace{0.3in} - \sum_j \left| \frac{s_j}{h_N} (\sigma_N - 1) + \sum_{q=ANC(j)}^{N-1} \frac{s_j}{h_q} (\sigma_q - \sigma_{q+1}) \right|\\
&\geq & \sigma_1 - \frac{s_i}{h_N}|\sigma_N - 1| - \sum_{q=1}^{N-1} \frac{s_i}{h_q} |\sigma_q - \sigma_{q+1}| \\
&& \hspace{0.3in} - \sum_j \left( \frac{s_j}{h_N} |\sigma_N - 1| + \sum_{q=ANC(j)}^{N-1} \frac{s_j}{h_q} |\sigma_q - \sigma_{q+1}| \right)\\
&=& \sigma_1 - |\sigma_N - 1|\left(\sum_{j\in S_N} \frac{s_j}{h_N}\right)- \sum_{q=1}^{N-1} \left(|\sigma_q - \sigma_{q+1}| \sum_{j\in S_q} \frac{s_j}{h_q}\right)\\
&=& \sigma_1 - |\sigma_N-1| - \sum_{q=1}^{N-1} |\sigma_q - \sigma_{q+1}|.
\end{eqnarray*}
Note that we do not require any two goods to always be substitutes or always complements.

The lower bound on $\beta_i$ is tight when $\frac{h_N}{h_{N_1}}, \frac{h_{N-1}}{h_{N-2}},\cdots,\frac{h_2}{h_1},\frac{h_1}{s_i}$ are all very large. Set $\beta$, as defined in Definition \ref{def:den-elas}, to $\min_i \beta_i$.

Also note that
\begin{eqnarray*}
\frac{\partial x_i/\partial p_i}{x_i / p_i} &=& \frac{s_i}{h_N}\left[-1-\sum_{q=1}^N \sigma_q\left(\frac{h_N}{h_{q-1}} - \frac{h_N}{h_q}\right)\right]\\
&\geq & \max\left\{1,\max_{1\leq k \leq N} \sigma_k\right\}\left(-\frac{s_i}{h_N}-\sum_{q=1}^N \left(\frac{s_i}{h_{q-1}} - \frac{s_i}{h_q}\right)\right)\\
&=& -\max\left\{1,\max_{1\leq k \leq N} \sigma_k\right\}.
\end{eqnarray*}
Let $E_i = \max\left\{1,\max_{1\leq k \leq N} \sigma_k\right\}$ and set $E = \max_i E_i$.

Keller also derived that
$$\frac{\partial s_j}{\partial p_i} = x_i\left(\frac{s_j}{h_N} (\sigma_N - 1) + \sum_{q=ANC(j)}^{N-1} \frac{s_j}{h_q} (\sigma_q - \sigma_{q+1})\right).$$
This yields
\begin{eqnarray*}
&& \sum_{j\neq i} |\Delta s_j|\\
&\leq & (1-\lambda)^{-E} x_i |\Delta p_i| \sum_{j\neq i}\left(\frac{s_j}{h_N} |\sigma_N - 1| + \sum_{q=ANC(j)}^{N-1} \frac{s_j}{h_q} |\sigma_q - \sigma_{q+1}|\right) \\
&\leq & x_i |\Delta p_i| (1-\lambda)^{-E} \left(|\sigma_N-1| + \sum_{q=1}^{N-1} |\sigma_q - \sigma_{q+1}|\right).
\end{eqnarray*}
Let $\alpha_i ' = (1-\lambda)^{-E} \left(|\sigma_N-1| + \sum_{q=1}^{N-1} |\sigma_q - \sigma_{q+1}|\right)$. Assumption \ref{ass:sp-trans} is satisfied with $\alpha' = \max_i \alpha_i '$.
Hence the bounds from Theorems~\ref{th:multi-overall} and~\ref{lem:good-wrhs} apply.

\end{document}